\providecommand{\U}[1]{\protect\rule{.1in}{.1in}}
\providecommand{\U}[1]{\protect\rule{.1in}{.1in}}
\newtheorem{theorem}{Theorem}
\newtheorem{definition}{Definition}
\newtheorem{Lemma}[theorem]{Lemma}
\newenvironment{proof}[1][Proof]{\noindent\textbf{#1.} }{\ \rule{0.5em}{0.5em}}
\newcommand{\pr}[1]{#1\mbox{$^\prime$} }
\newcommand{\ket}[1]{\left|#1\right\rangle}
\newcommand{\bra}[1]{\left\langle#1\right|}
\begin{document}
\preprint{ }
\title[ ]{Proof of efficient, parallelized, universal adiabatic quantum computation}
\author{Ari Mizel}
\email{ari@arimizel.com}
\affiliation{Laboratory for Physical Sciences, 8050 Greenmead Drive, College Park,
Maryland, USA\ 20740}

\begin{abstract}
We give a careful proof that a parallelized version of adiabatic quantum computation can efficiently simulate universal gate model quantum computation.  The proof specifies an explicit parameter-dependent Hamiltonian $H(\lambda)$ that is based on ground state quantum computation \cite{Mizel01}.  We treat both a 1-dimensional configuration in which qubits on a line undergo nearest neighbor 2-qubit gates and an all-to-all configuration in which every qubit undergoes 2-qubit gates with every other qubit in the system.
\end{abstract}
\volumeyear{2015}
\volumenumber{ }
\issuenumber{ }
\eid{ }
\date{\today}




\maketitle

\section{Introduction}

Investigating the physics of computation, researchers studied computing in a thermodynamically reversible fashion \cite{Bennett79,Toffoli81,Fredkin82,Bennett82} and studied computing with a quantum mechanical system \cite{Benioff80,Manin80,Feynman81}.  In 1985, Feynman proposed a model \cite{Feynman85} addressing both together -- reversible classical computation with a quantum mechanical apparatus.  This paper formulated a ``ballistic'' version of quantum computing in which a time-dependent quantum system evolves from input state to output state in accordance with the dynamics imposed by a time-independent Hamiltonian.  At the end of his paper, Feynman mentioned the possibility of generalizing the model to permit parallelization (``concurrent operation'').  Margolus \cite{Margolus86,Margolus90} carried out such a generalization, deriving a 1+1-dimensional form of parallelized quantum computation, retaining the ballistic model of a time-dependent quantum system evolving via a time-independent Hamiltonian.  These ballistic versions of quantum computation are well-suited for the investigation of thermodynamic reversibility and other fundamental physics questions but seem impractical if one wishes to realize an actual working device.  Among other issues, the Hamiltonians proposed by Feynman and Margolis require at least 3-body interactions.

Seminars at Berkeley in 1998 introduced the idea of using a ground state to perform universal quantum computation \cite{Mizel01}.  This proposal was motivated by the practical problem of realizing a quantum computer.  In this approach, the time-independent ground state of a time-independent Hamiltonian is used to perform a quantum computation that is specified as a program of unitary gates.  This proposal rediscovered some technical features of the ballistic approach but introduced universal quantum computation in the ground state, rather than a time-dependent state.  The hope was that a system residing in the ground state would enjoy inherent robustness against decoherence.  In contrast with the earlier ballistic work, the ground state quantum computation (GSQC) model required only 2-body interactions in the time-independent Hamiltonian and also allowed parallelized quantum computation without Margolis's restriction to a 1+1-dimensional configuration.  There were early sketches of designs based on this model \cite{Mizel01b} and more theoretical work \cite{Mizel02,Mizel04,Mizel14}.  At around the same time, very different motivations led Kitaev to independently consider universal quantum computing with ground states \cite{Kitaev02}.  Kitaev was interested in computationally complexity theory in the quantum context, and he carefully studied the eigenstates of Feynman's model.  This work was not primarily focused on questions of physical implementation; indeed, Kitaev did not hesitate to incorporate 5-body interactions in his Hamiltonian.

The initial idea behind \cite{Mizel01} was to cool a system into the ground state of the GSQC Hamiltonian.  In 1999, a second GSQC paper \cite{Mizel02} was submitted that further scrutinized the original GSQC plan to reach the ground state by cooling.    The quantum adiabatic algorithm gained the attention of quantum computing researchers around the same time as a means of solving optimization problems \cite{Kadowaki98,Farhi00}.   As a result, by the time \cite{Mizel02} was published in 2002, a note had been inserted that GSQC could employ adiabatic evolution (``one approach might be to turn on gradually the tunneling matrix elements in the Hamiltonian'') to reach the ground state.

In 2004, another group of Berkeley-associated researchers studied this idea of performing universal quantum computation by adiabatic tuning of a ground state.  Taking Kitaev's work  \cite{Kitaev02} on the Feynman model as a point of departure, rather than the GSQC formalism \cite{Mizel01}, they carefully bounded the gap under adiabatic tuning of the Hamiltonian to show that universal adiabatic quantum computing could efficiently simulate the gate model in a noise-free setting \cite{Aharonov07}.  Their work was embraced by the quantum computer science community, and it generated substantial follow-up research.  In 2007, in response to this interest, \cite{Mizel07} made the adiabatic argument of \cite{Mizel02} more explicit for the GSQC Hamiltonian as opposed to the Feynman construction \cite{Feynman85}.  Members of the quantum computer science community have critiqued this work based upon its level of rigor and completeness \cite{Aharonov07} and upon its correctness \cite{Childs14,Breuckmann14,Gosset15}.  Recently, a new proof \cite{Breuckmann14} has appeared for a restricted 1+1-dimensional model \cite{Margolus90}.  This work is rigorous, in accordance with the expectations of the quantum computer science community, but it is restricted to a 1+1-dimensional model, includes no 1-qubit gates, and requires the number of qubits in the computation to exceed the depth of the computation -- a vexatious constraint.  In this paper, we revisit the proof of universal adiabatic quantum computation using GSQC \cite{Mizel07} and supply proofs that are rigorous, include 1-qubit gates, permit lengthy computations in which the depth exceeds the number of qubits, and address a configuration of qubits with all-to-all connectivity as well as nearest-neighbor connectivity in 1+1 dimensions.  The proof techniques that we introduce may be of broader interest.

This paper is organized as follows.  In section II, we review the GSQC Hamiltonian.  We invoke the adiabatic theorem to upper bound its adiabatic evolution time in terms of a lower bound on its gap.  Section III derives an expression for this lower bound on the GSQC Hamiltonian gap in terms of 2 quantities that are bounded for a circuit consistent with a 1-dimensional configuration of qubits.  The argument is extended to a circuit consistent with an all-to-all configuration in which every qubit interacts with every other qubit.  Section IV proves that probability of measuring the output of the computation is at least 1/3.  We conclude in section V.

\section{GSQC Hamiltonian and Computation Time}

An adiabatic quantum computation is specified in terms of a parameter-dependent Hamiltonian $H(\lambda)$.  One imagines preparing the system in the ground state of $H(0)$ and adiabatically increasing the parameter $\lambda$ from $0$ to $1$ to carry the system into the ground state of $H(1)$.  The initial Hamiltonian $H(0)$ is designed to have an easily-prepared ground state while the final Hamiltonian $H(1)$ is designed so that its ground state contains the solution to some computational problem.

For universal adiabatic quantum computation, the computational problem in question is any given gate model quantum algorithm.  It is natural to define the gate model algorithm in terms of a program of unitary gates acting on $M$ qubits during $N$ time steps.  The number of time steps $N$ is also known as the circuit depth, while the number of qubits $M$ is sometimes termed the circuit width.   For each time step $i \in [1,\dots,N]$ and each qubit $A \in [1,\dots,M]$, the algorithm specifies a 1-qubit gate $U_{A,i}  \in \mathbf{U}(2)$ acting on $A$ or a 2-qubit gate $U_{A,i;B,i}  \in \mathbf{U}(4)$ acting on $A$ and some partner qubit  $B$.  For universal adiabatic quantum computation, our task is to design a Hamiltonian $H(\lambda)$ such that the ground state of $H(1)$ contains the results of this gate model algorithm.

To address this task, begin by considering a trivial algorithm with $M=1$, consisting only of 1-qubit gates $U_{1,i}$.  The solution to the gate model algorithm is $U_{1,N} \dots U_{1,1} \left|0\right\rangle$.  There is a readily identified Hamiltonian of which this is the ground state, $-U_{1,N} \dots U_{1,1} \left|0\right\rangle \left<0\right| U_{1,1}^\dagger \dots U_{1,N}^\dagger$, but this Hamiltonian explicitly involves the matrix product $U_{1,N} \dots U_{1,1}$.  Implementing such a Hamiltonian would require a classical precomputation of $U_{1,N} \dots U_{1,1}$, obviating any need to execute the quantum algorithm.  We want $H(1)$ to depend upon the individual unitary gates (i.e. the algorithm itself) but not explicitly upon their product (i.e. the solution to the algorithm). 

We need to apply unitary gates sequentially within the ground state of $H(1)$.  We do so by promoting the classical time step variable $i$ to a quantum mechanical coordinate in Hilbert space.  In his construction for ballistic quantum computation \cite{Feynman85}, Feynman introduced a clock particle with Hilbert space $\left\{\left| i \right\rangle \mid i=0,\dots,N \right\}$.  We employ a different approach \cite{Mizel01} that avoids $3$-body interactions in our Hamiltonian.  We expand the Hilbert space of our computational particle itself from $\left\{\left|0\right\rangle,\left|1\right\rangle\right\}$ to $\left\{\left|0_i\right\rangle,\left|1_i\right\rangle \mid i = 0,\dots, N\right\}$.  It proves convenient to introduce fermionic creation operators $c^\dagger_{A,i,0}$ and $c^\dagger_{A,i,1}$ such that $\left|0_i\right\rangle$  and $\left|1_i\right\rangle$ of qubit $A$ are $c^\dagger_{A,i,0} \left| \mathrm{vac} \right\rangle$ and $c^\dagger_{A,i,1} \left| \mathrm{vac} \right\rangle$, respectively.  We also define the row vector $C^\dagger_{A,i} = \left[ c^\dagger_{A,i,0} \,\,\,\,\, c^\dagger_{A,i,1}\right]$.  In terms of these new operators, the solution  $U_{1,N} \dots U_{1,1} \left|0\right\rangle$ of our trivial $M=1$ qubit gate model algorithm becomes $C^\dagger_{1,N} U_{1,N} \dots U_{1,1} \left[\begin{array}{c} 1 \\ 0 \end{array} \right] \left| \mathrm{vac} \right\rangle$.  Rather than making this the ground state of $H(1)$, we consider the (unnormalized) state $C^\dagger_{1,N} U_{1,N} \dots U_{1,1} \left[\begin{array}{c} 1 \\ 0 \end{array} \right]\left| \mathrm{vac} \right\rangle+C^\dagger_{1,N-1} U_{1,N-1} \dots U_{1,1} \left[\begin{array}{c} 1 \\ 0 \end{array} \right]\left| \mathrm{vac} \right\rangle+ \dots + C^\dagger_{1,0}\left[\begin{array}{c} 1 \\ 0 \end{array} \right]  \left| \mathrm{vac} \right\rangle$.  This ``history" state is comprised of a sum of states corresponding to all time steps $i=0,\dots,N$ of the computation.

The history state is the ground state of the Hamiltonian 
$H(1)=h_1^{1}(\mathrm{INIT})+\sum_{i=1}^{N}h^{i}_{1}(U_{1,i})$ where 
\begin{equation}
h^{i}_A(U) \equiv \mathcal{E}\left[\left.\left.C_{A,i}^{\dagger }-C_{A,i-1}^{\dagger}U^{\dagger}\right] \right[ C_{A,i}-UC_{A,i-1}\right].
\label{eq:onequbitgate}
\end{equation}
Here,
\begin{equation}
h^i_{A}(\mathrm{INIT}) = \mathcal{E} c^\dagger_{A,i,1} c_{A,i,1}
\label{eq:init}
\end{equation}
``initializes" qubit $A$ by imposing an energy penalty on the state $C^\dagger_{1,N} U_{1,N} \dots U_{1,1} \left[\begin{array}{c} 0 \\ 1 \end{array} \right]\left| \mathrm{vac} \right\rangle+C^\dagger_{1,N-1} U_{1,N-1} \dots U_{1,1} \left[\begin{array}{c} 0 \\ 1 \end{array} \right]\left| \mathrm{vac} \right\rangle+ \dots + C^\dagger_{1,0}\left[\begin{array}{c} 0 \\ 1 \end{array} \right]  \left| \mathrm{vac} \right\rangle$, an alternate history state corresponding to a different input.  For a general $U$, the term (\ref{eq:onequbitgate}) causes qubit $A$ to transition from $\left|0_{i-1}\right\rangle$ to a superposition of $\left|0_i\right\rangle$ and $\left|1_i\right\rangle$ while $\left|1_{i-1}\right\rangle$ transitions to an orthogonal superposition of $\left|0_i\right\rangle$ and $\left|1_i\right\rangle$.  In the case in which $U$ is just the $2 \times 2$ identity matrix $\mathcal{I}$, the term (\ref{eq:onequbitgate}) causes qubit $A$ to transition from $\left|0_{i-1}\right\rangle$ to  $\left|0_i\right\rangle$ and from $\left|1_{i-1}\right\rangle$ to  $\left|1_i\right\rangle$.  Fig. \ref{fig:apparatus1} depicts a schematic physical realization of the $H(1)$.

\begin{figure}[htp]
\includegraphics[width=2.5in]{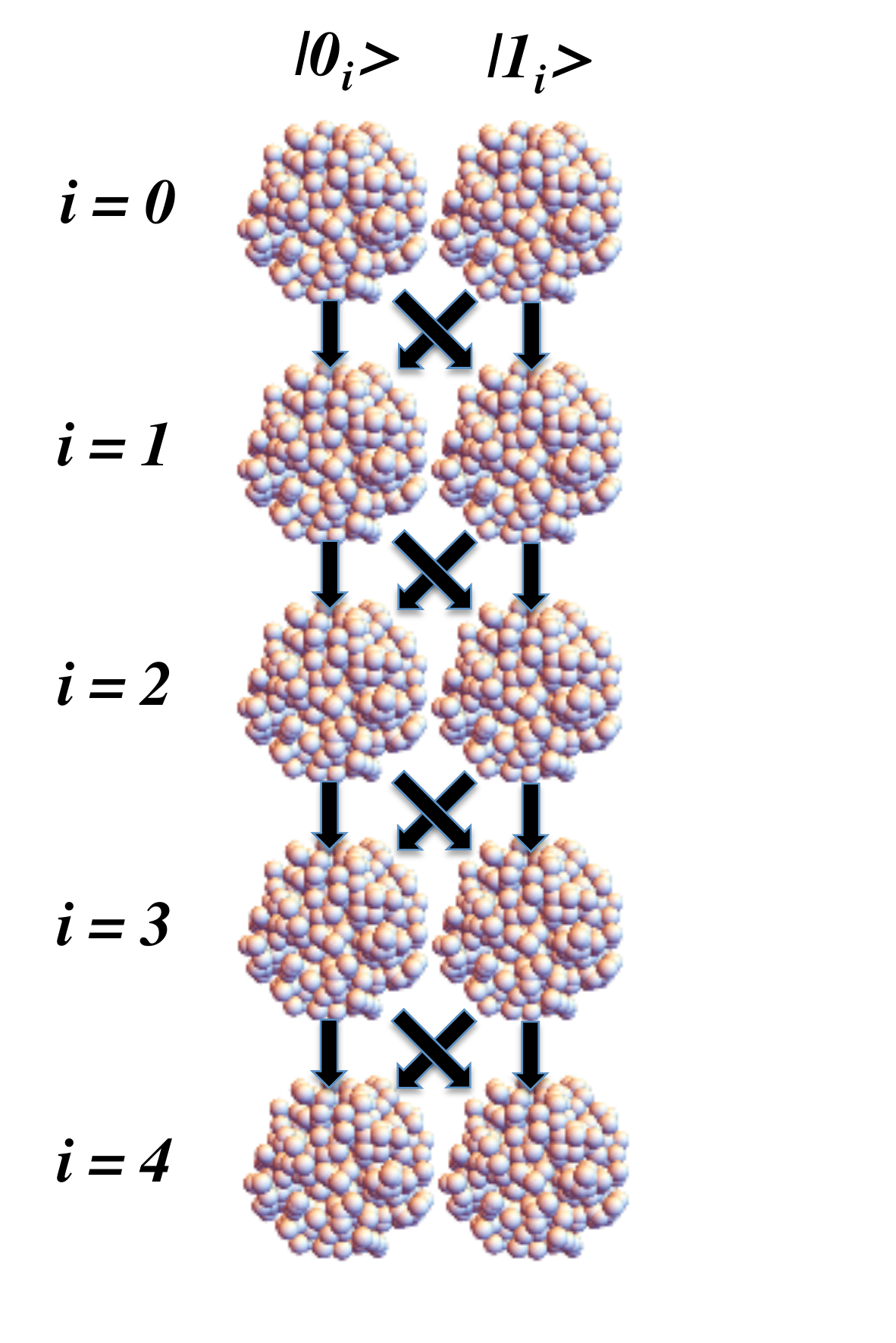}%
\caption{Schematic physical apparatus for realizing $H(1)$ in case of a single qubit.  The  $2 \times 5$ array of nanocrystals is assumed to contain a single itinerant electron that is free to migrate around the array.  Each nanocrystal is imagined to house a single available orbital.  Black arrows indicate tunnelling matrix elements from orbital to orbital from the beginning of the program to the end.  (Since the Hamiltonian is Hermitian, the electron can tunnel along or opposite the arrows.) }
\label{fig:apparatus1}
\end{figure}

To reliably extract the computational output from the history state, it is possible to rescale \cite{Mizel01} the coefficient of its term $C^\dagger_{1,N} U_{1,N} \dots U_{1,1} \left[\begin{array}{c} 1 \\ 0 \end{array} \right]\left| \mathrm{vac} \right\rangle$ or to append identity gates to the end of the calculation \cite{Feynman85,Aharonov07}.  The first approach is probably more suitable if one is interested in physically constructing a computer, since it involves less hardware.  The second approach is usually simpler for proofs.  

One can generalize this framework to multiqubit gate model computations.  We assume that each set of qubit states $C_{A,i}$ is occupied by a single particle: the system is an eigenstate of $\sum_{i=0}^N C_{A,i}^{\dagger}C_{A,i}$, with eigenvalue $+1$, for all $A=1$ to $M$.  For a trivial multiqubit algorithm consisting of only 1-qubit gates $U_{A,i}$, the solution to the gate model algorithm is $ \otimes_{A=1}^M \left( \Pi _{i=1}^N  U_{A,i} \left|0\right\rangle \right)$.  The corresponding history state is
\[
\otimes_{A=1}^M \left( C^\dagger_{A,N} U_{1,N} \dots U_{1,1} \left[\begin{array}{c} 1 \\ 0 \end{array} \right] +C^\dagger_{A,N-1} U_{1,N-1} \dots U_{1,1} \left[\begin{array}{c} 1 \\ 0 \end{array} \right]+ \dots + C^\dagger_{A,0}\left[\begin{array}{c} 1 \\ 0 \end{array} \right]\right)  \left| \mathrm{vac} \right\rangle;
\]
it is the ground state of 
\begin{equation}
H(1)=\sum_{A=1}^{M} h^1_{A}(\mathrm{INIT}) + \sum_{A=1}^{M} \sum_{i=1}^{N}h^{i}_{A}(U_{A,i}).
\label{eq:independentqubitHamiltonian}
\end{equation}
Fig. \ref{fig:apparatus}(a) generalizes Fig. \ref{fig:apparatus1} to the many-qubit case of Hamiltonian (\ref{eq:independentqubitHamiltonian}).

\begin{figure}[htp]
\subfloat[Schematic physical apparatus for realizing the Hamiltonian (\ref{eq:independentqubitHamiltonian}) in the case of $M=2$ qubits.]{%
  \includegraphics[width=2.5in]{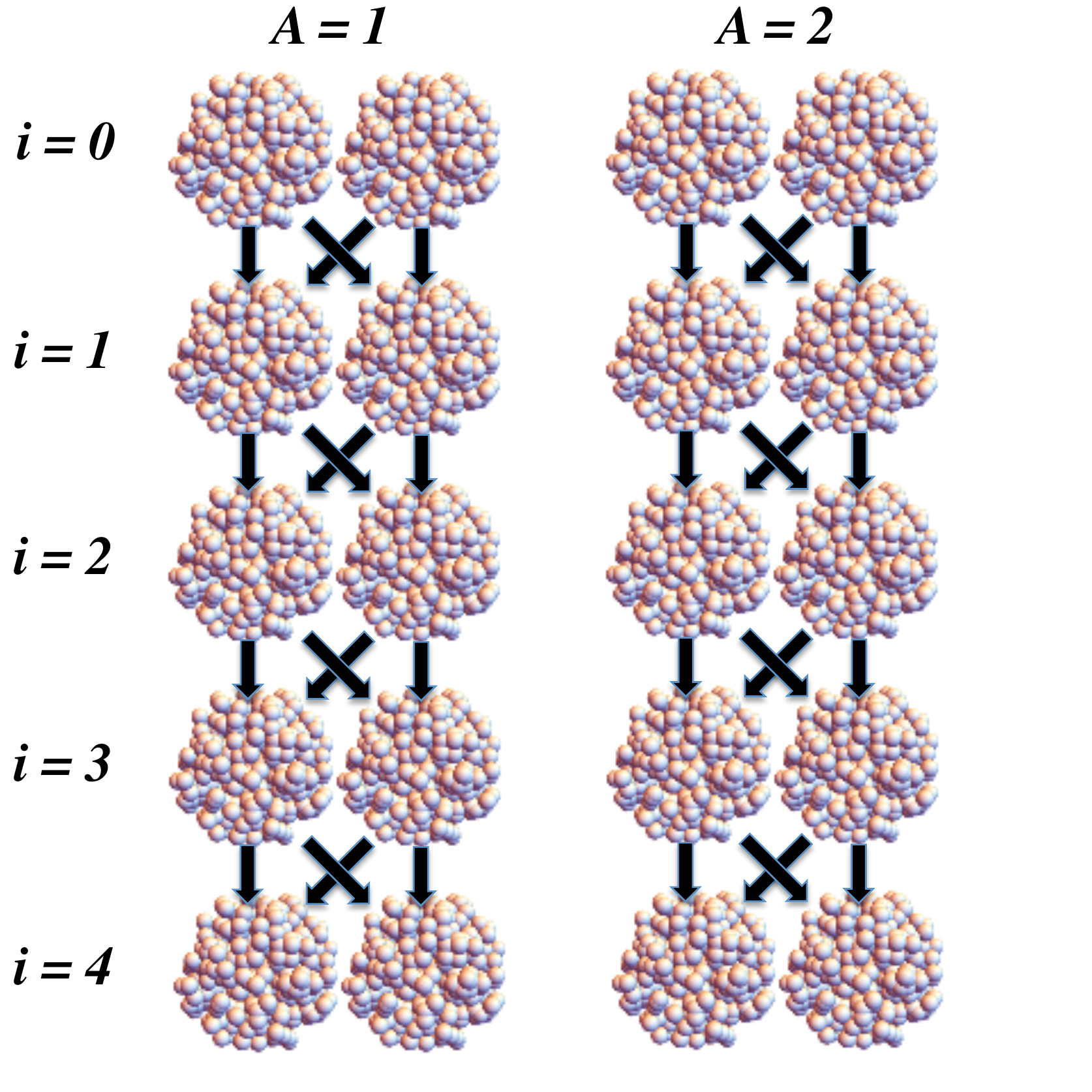}%
}

\subfloat[Isolated single time step of (a) in which each qubit undergoes a 1-qubit gate.]{%
  \includegraphics[width=2.5in]{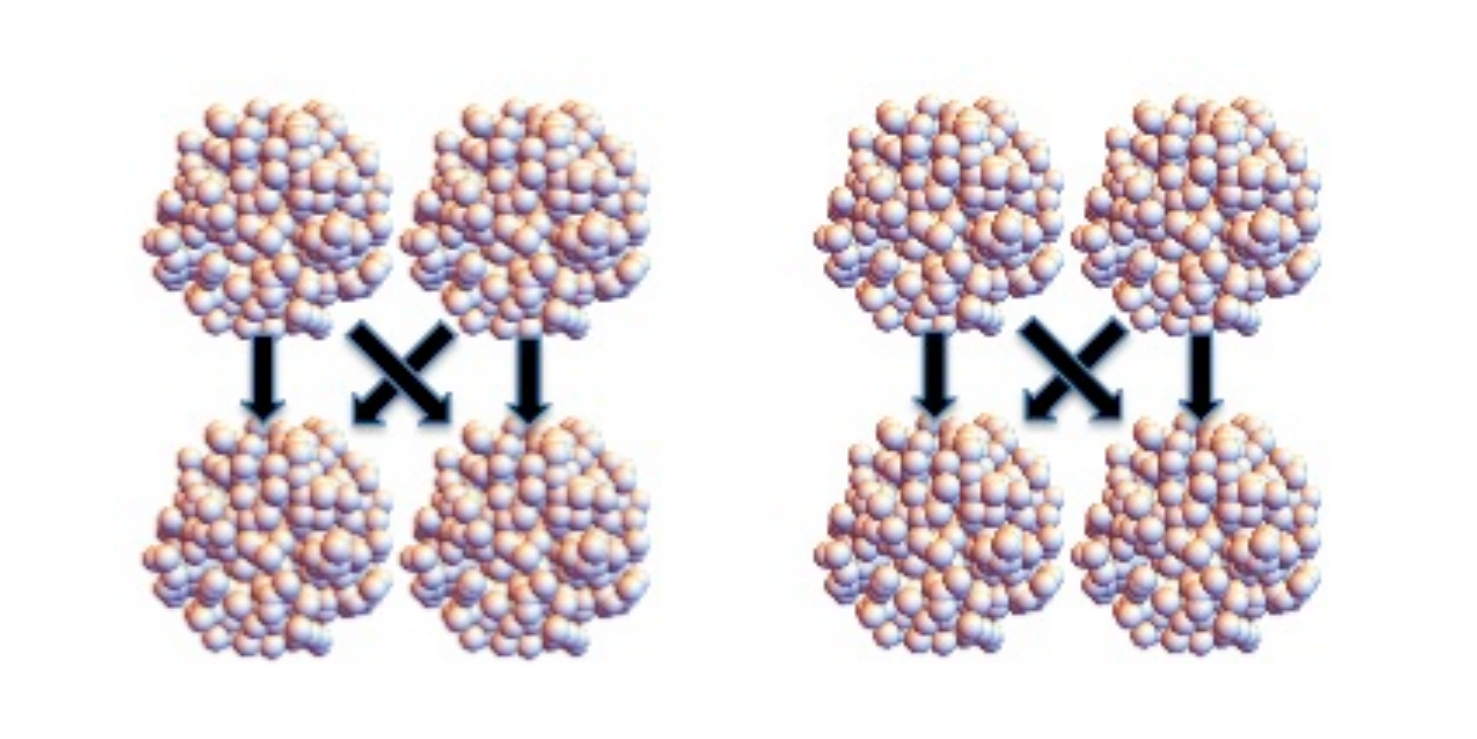}%
}
\subfloat[Isolated single time step, suitable for replacing (b), in which 2 qubits undergo a 2-qubit gate.]{%
  \includegraphics[width=2.5in]{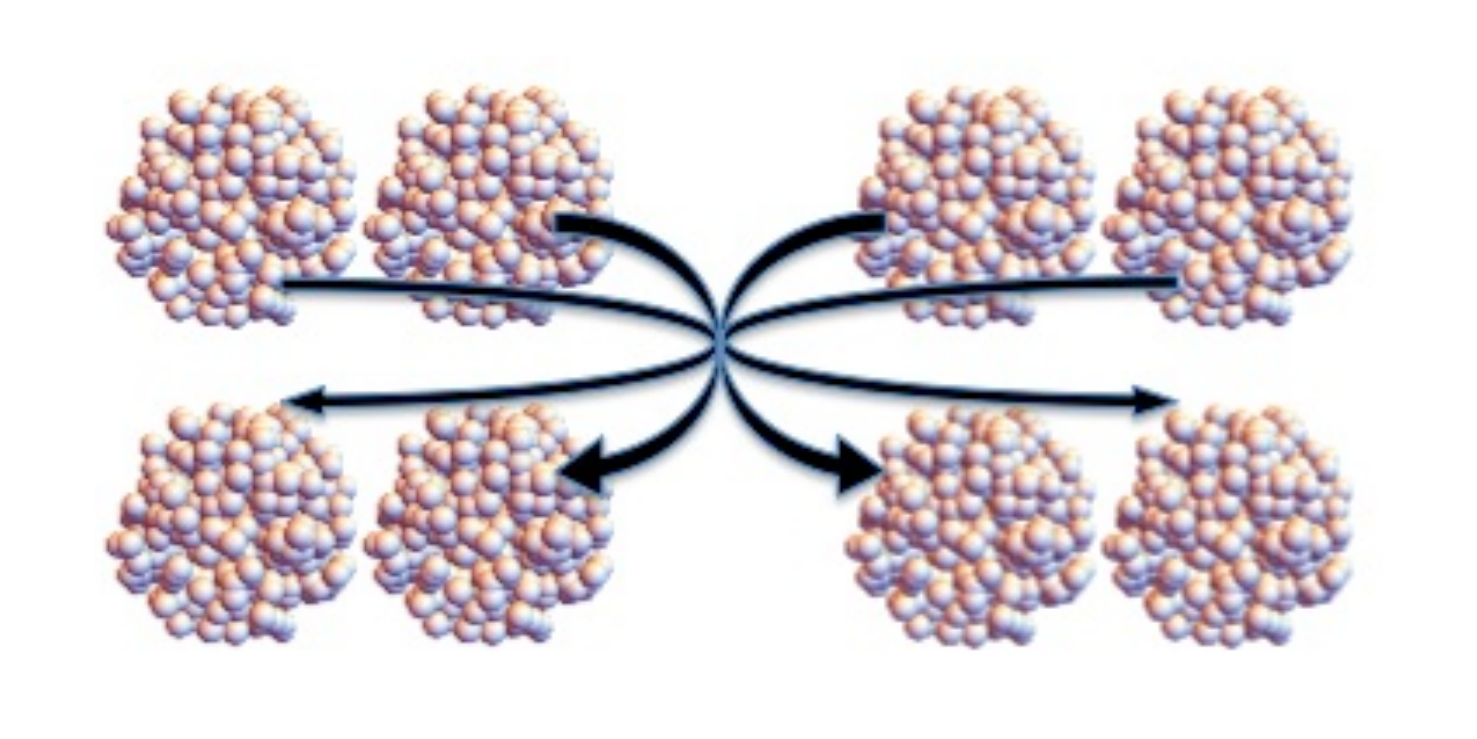}%
}
\caption{Schematic physical apparatus for realizing Hamiltonian (\ref{eq:independentqubitHamiltonian}) in the case $M=2$, $N=4$.}
\label{fig:apparatus}
\end{figure}

When a 2-qubit gate is included in the multiqubit gate model computation, one replaces $U_{A,j} \otimes U_{B,j}$ in $\otimes _{A=1}^M \left(\Pi _{i=1}^N  U_{A,i} \left|0\right\rangle\right)$ with $U_{A,j;B,j} \in \mathbf{U}(4)$.  In the history state, we similarly replace $U_{A,j} \otimes U_{B,j}$ with $U_{A,j;B,j}$.  There is a complication here since the history state initially contains some problematic terms in which $U_{A,j}$ appears but not $U_{B,j}$ and some problematic terms in which $U_{B,j}$ appears but not $U_{A,j}$.  We resolve the complication by deleting ``time-invalid'' terms from the history state that contain $U_{A,i_A} \otimes U_{B,i_B}$ with $i_A$ and $i_B$ on opposite sides of $j$, such that $(i_A - j+1/2)(i_B-j+1/2) < 0$.  The resulting history state is the ground state of $H(1)$ provided that we replace $h^{j}_A(U_{A,j})+h^{j}_B(U_{B,j})$ with $h_{A,B}^{j,j}(U_{A,j;B,j}) + h_{A,B}^{j,j}({\mathrm{P}})$.  Here,
\begin{eqnarray}
h_{A,B}^{i,j}(U)&  \equiv & \mathcal{E}\left[ C_{A,i}^{\dagger } \otimes C_{B,j}^{\dagger } - C_{A,i-1}^{\dagger } \otimes C_{B,j-1}^{\dagger }  U^{\dagger }\right] \nonumber\\
& &  \left[  C_{A,i} \otimes C_{B,j} - U C_{A,i-1} \otimes C_{B,j-1}  \right], \label{eq:twoqubitgate}
\end{eqnarray}
for any 2-qubit unitary operator $U$, while
\begin{eqnarray}
h_{A,B}^{i,j}({\mathrm{P}})& =& \mathcal{E}\!\!\!\!\!\sum_{k<i,l\geq j}\!\!\!\!C_{A,k}^{\dagger}C_{A,k} \otimes C_{B,l}^{\dagger }C_{B,l} \nonumber \\
& & +C_{A,l}^{\dagger }C_{A,l} \otimes C_{B,k}^{\dagger}C_{B,k} 
\label{eq:penalty}
\end{eqnarray}
imposes an energy penalty on the time-invalid terms that we deleted from the history state.  If $ h_{A,B}^{j,j}({\mathrm{P}})$ were absent from our Hamiltonian, the time-invalid terms would become additional zero-energy ground states: a ground state supported by a time-invalid region in which qubit $A$ resides at time steps $i_A < j$ and qubit $B$ resides at time steps $i_B \ge j$ and a ground state supported by a time-invalid region in which qubit $A$ resides at time steps $i_A \ge  j$ and qubit $B$ resides at time steps $i_B < j$ \endnote{As noted by \cite{Breuckmann14}, the penalty Hamiltonian was omitted from \cite{Mizel02}.  This was corrected in \cite{Mizel07}.  That revision, with the others we make here, should address any shortcomings in \cite{Mizel02}.}.  Fig. \ref{fig:apparatus}(c) gives a schematic representation of the 2-qubit gate Hamiltonian $h_{A,B}^{j,j}(U_{A,j;B,j}) + h_{A,B}^{j,j}({\mathrm{P}})$ that replaces the two 1-qubit Hamiltonians $h^{j}_A(U_{A,j})+h^{j}_B(U_{B,j})$ sketched in Fig. \ref{fig:apparatus}(b).

To use the GSQC Hamiltonian for computation, we would like to ensure that system occupies the ground state of $H(1)$.  The original GSQC proposal \cite{Mizel01} suggested that cooling could drive the system down to the ground state.  Later, the idea of computation by adiabatic evolution became widely known, and \cite{Mizel02} pointed out that it could be applied to GSQC.  To adiabatically carry the system into the ground state, we  introduce an adiabatic parameter $\lambda$ into the Hamiltonian.  We define $H(\lambda)$ so that $H(0)$ has a relatively simple ground state of $H(0)$ and carry it into the ground state of $H(1)$ by adiabatic evolution from $\lambda = 0$ to $1$.  The following definition introduces $\lambda$ dependence into the GSQC Hamiltonian in a suitable fashion.   It also includes a minor generalization that makes it possible to introduce and retire qubits at intermediate time-steps.

\begin{definition}[GSQC Hamiltonian]
Consider a gate model quantum algorithm defined by circuit width $M$, circuit depth $N$, 1-qubit gates $\mathcal{G}_1 \equiv \{U_{A,i}\}$, and 2-qubit gates $\mathcal{G}_2 \equiv \{U_{A,j;B,j}\}$.  Suppose that the first gate acting on qubit $A$ occurs at its original time step $o_A\ge 1$, the last gate acting on qubit $A$ occurs at its final time step $f_A \le N$, and an element of either $\mathcal{G}_1$ or $\mathcal{G}_2$ acts on qubit $A$ at every time step between $o_A$ and $f_A$.  Suppose further that the first gate acting on every qubit is a 1-qubit identity gate $\mathcal{I}$.    The corresponding ground state quantum computation (GSQC) Hamiltonian with $M$ independent $\lambda$ parameters is
\begin{eqnarray}
\lefteqn{H^{\text{independent}}(\lambda_1,\dots,\lambda_M)   =\sum_{A=2}^{M} \mathcal{E} (1-\lambda_{A-1}^3)  \sum_{i=o_A }^{f_A}C^\dagger_{A,i} C_{A,i}+ \sum_{A=1}^{M} h^{o_A-1}_{A}(\mathrm{INIT})+  h^{o_A}_{A}(\lambda_A \mathcal{I}) } \nonumber\\
&& + \sum_{i=1}^N  \left(\sum_{U_{A,i} \in \mathcal{G}_1} \left(1- \delta_{i,o_A} \right) h^{i}_{A}(U_{A,i}) + \sum_{U_{A,i;B,i} \in \mathcal{G}_2}  h_{A,B}^{i,i}(U_{A,i;B,i})  + h_{A,B}^{i,i}({\mathrm{P}})\right).
\label{eq:Hlambdaindependent}
\end{eqnarray}
This Hamiltonian is assumed to act within a basis of eigenstates of $\sum_{i=o_A-1}^{f_A} C_{A,i}^{\dagger}C_{A,i}$, with $+1$ eigenvalue, for $A \in [1,\dots,M]$.  

In terms of (\ref{eq:Hlambdaindependent}), a GSQC Hamiltonian with a single $\lambda$ parameter is defined using functions $\lambda_A(\lambda)$
\begin{equation}
H(\lambda) = H^{\text{independent}}(\lambda_1(\lambda),\lambda_2(\lambda),\dots,\lambda_M(\lambda)).
\label{eq:Hlambda}
\end{equation}
The function $\lambda_A(\lambda)$ has the property that $H(0) = H^{\text{independent}}(0,\dots,0)$ and  $\lambda_{A+1}$ increases from 0 to 1 after $\lambda_{A}$ has already finished increasing from 0 to 1.  For concreteness, set $\lambda_A(\lambda)= F(M\lambda-A)$ where $F(x) = (10x^3-15 x^4 + 6 x^5) \Theta(x) \Theta(1-x) + \Theta(x-1)$, in terms of the Heaviside function $\Theta(x)$.  This function begins at $F(x) = 0$ for $x \le 0$ and increases to reach $F(x) = 1$ for $x \ge 1$ with well-behaved first and second derivatives.
\label{def:GSQC}
\end{definition}

For $H(\lambda)$ to provide an appropriate adiabatic path to $H(1)$, it is desirable for $H(\lambda)$ to have a non-degenerate ground state at all values of $\lambda$.  This is demonstrated in the following theorem.
\begin{theorem}[Ground state of GSQC Hamiltonian]
The GSQC Hamiltonian of Def. \ref{def:GSQC} has a non-degenerate, zero energy ground state $|\psi_0 (\lambda) \rangle$.
\label{thm:groundstate}  
\end{theorem}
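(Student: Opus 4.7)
The plan is to exploit the frustration-free structure of $H(\lambda)$: verify that every summand in (\ref{eq:Hlambdaindependent}) is positive semidefinite, exhibit an explicit zero-energy ``$\lambda$-truncated history state'' that is annihilated by every summand, and then prove uniqueness by reducing to a simpler identity-hopping Hamiltonian via a gauge transformation. Positivity is immediate: the coefficient $(1-\lambda_{A-1}^3)$ lies in $[0,1]$ and multiplies a number operator; each $h^{i}_A(U)$ and $h^{i,j}_{A,B}(U)$ is manifestly of the form $\mathcal{E}\, X^\dagger X$; and the initialization term (\ref{eq:init}) and time-invalid penalty (\ref{eq:penalty}) are sums of projectors built from number operators. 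Consequently a zero-energy state must lie in the intersection of the kernels of every summand.

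For existence, $|\psi_0(\lambda)\rangle$ is built qubit-by-qubit as a natural generalization of the history state. The initialization term pins qubit $A$ at time $o_A-1$ to the state $|0\rangle$; the first-gate constraint $(C_{A,o_A} - \lambda_A C_{A,o_A-1})|\psi\rangle = 0$ sets the amplitude at $o_A$ equal to $\lambda_A$ times the amplitude at $o_A-1$; and successive gate terms propagate this amplitude through the circuit via the designated unitaries, entangled across each 2-qubit gate by $U_{A,j;B,j}$. By the ordering of the $\lambda_A(\lambda)$'s in Def.~\ref{def:GSQC}, whenever the number penalty $(1-\lambda_{A-1}^3)$ is active one simultaneously has $\lambda_A = 0$, so the first-gate constraint (followed by propagation) forces the amplitude at every $i \in [o_A,f_A]$ to vanish, automatically annihilating the penalty. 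The time-invalid penalty (\ref{eq:penalty}) vanishes because every surviving history term keeps the two partner qubits of any 2-qubit gate on the same side of that gate.

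For uniqueness, apply the gauge transformation $W$ that, for each qubit $A$ and each time $i \ge o_A$, conjugates $C^\dagger_{A,i}$ by the cumulative product of circuit unitaries applied to $A$ up through time $i$ (with $W$ genuinely entangling at each 2-qubit gate). Under $W$, every $h^i_A(U_{A,i})$ becomes the identity bond-Laplacian $h^i_A(\mathcal{I})$ and every $h^{i,j}_{A,B}(U)$ becomes $h^{i,j}_{A,B}(\mathcal{I}_4)$, while the number operators, initialization terms, first-gate terms, and time-invalid penalties are invariant. The transformed Hamiltonian is a frustration-free sum of positive semidefinite pieces whose common kernel is manifestly 1-dimensional: the identity-hopping terms enforce equal amplitudes along each computational strip, the first-gate term fixes the ratio $\lambda_A$ between sites $o_A-1$ and $o_A$, the initialization pins the state at $o_A-1$ to $|0\rangle$, and the time-invalid penalty confines the multi-qubit wavefunction to the unique connected ``staircase'' region in the $(i_A,i_B)$ configuration lattice. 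Conjugating back by $W^\dagger$ recovers $|\psi_0(\lambda)\rangle$ as the unique zero-energy state.

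The hard part will be the 2-qubit gate analysis in the uniqueness argument: the gauge transformation cannot be made purely single-qubit across a shared gate, and one must verify that the time-valid region of $(i_A,i_B)$ configurations is connected under the permitted identity-hopping moves, so that ``constant amplitude'' there really corresponds to one scalar degree of freedom rather than several. A secondary subtlety is the intermediate-$\lambda$ regime, where one must confirm that the partially turned-on first-gate constraint is consistent with the downstream propagation and does not admit any accidental additional zero-energy states in the kernel of the remaining summands.
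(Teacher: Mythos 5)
Your positivity and existence arguments coincide with the paper's: the paper likewise observes that every summand is positive semidefinite, builds the zero-energy state from the pinned initial configuration, and disposes of the extra number-penalty term $\mathcal{E}(1-\lambda_{A-1}^3)\sum_i C^\dagger_{A,i}C_{A,i}$ exactly as you do, by noting the schedule forces $\lambda_A=0$ whenever $\lambda_{A-1}<1$, so qubit $A$ is still localized at $o_A-1$ and the penalty annihilates the state. Where you genuinely diverge is uniqueness. The paper never passes to identity gates in this theorem: it works in the gated picture and shows that the zero-energy conditions $(C_{A,i}-U_{A,i}C_{A,i-1})\ket{\psi}=0$, their two-qubit and $\lambda_A\mathcal{I}$ analogues, and the vanishing on time-invalid configurations forced by $h^{i,i}_{A,B}(\mathrm{P})$ recursively determine every amplitude $\psi_{i_1,\dots,i_M}(b_1,\dots,b_M)$ from the amplitude at $(o_1-1,\dots,o_M-1)$, which the INIT terms fix, up to one overall constant, to have all bits zero. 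Your route instead conjugates by the cumulative-gate unitary -- this is precisely the content of Lem.~\ref{lemma:identitygates}, which the paper proves later for the gap analysis, so you would have to prove or cite it here -- and then argues that the identity-hopping Hamiltonian has a one-dimensional kernel.

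The gap is the step you flag yourself: ``constant amplitude on the time-valid region, hence one scalar degree of freedom'' requires proving that the time-valid configurations form a single connected component under the allowed moves (single-qubit hops where a 1-qubit gate acts, tandem hops across each 2-qubit gate), and that in the bit-value-1 sectors, or when $\lambda_A=0$ decouples $o_A-1$ from $o_A$, the vanishing forced at one slice actually propagates along the whole strip. Neither is established in your write-up, and for the fully general circuit of Def.~\ref{def:GSQC} (arbitrary placement of 2-qubit gates, qubits introduced and retired at different $o_A$, $f_A$) connectivity is not ``manifest.'' It is true and provable: in any time-valid configuration, take a qubit whose time step is maximal among qubits not at their initial step; if the gate at that step is a 1-qubit gate it can retreat one step without breaking any validity constraint, while if it is a 2-qubit gate its partner must, by validity plus maximality, sit at the same step, so the pair retreats together across the gate; iterating retracts every valid configuration to $(o_1-1,\dots,o_M-1)$. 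With that lemma supplied, your argument closes and yields a clean ``constant function on a connected graph'' picture that the paper only exploits later; the paper's forward recursion sidesteps connectivity entirely by propagating amplitudes in increasing time step from the pinned initial configuration. Your secondary worry about intermediate $\lambda$ is already resolved by the schedule observation in your own existence paragraph, which is the same observation the paper uses.
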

\begin{proof}
First note that the GSQC Hamiltonian is a sum of positive semi-definitive terms.  It therefore cannot have any negative eigenvalues.  We will show that it does have a zero eigenvalue, which must be the ground state energy, and that the associated eigenstate is unique.

The GSQC Hamiltonian is assumed to act within a specific basis, as described in Def. \ref{def:GSQC}.  We can write an arbitrary state in this basis in the form 
\begin{equation}
\left[ \sum_{i_1,\dots,i_M} \sum_{b_1,\dots,b_M} \psi_{i_1,\dots,i_M}(b_1,\dots,b_M) \otimes_{A=1}^M c^\dagger_{A,i_A,b_A}   \right]\left| \mathrm{vac} \right\rangle
\label{eq:startform}
\end{equation}
where $\psi_{i_1,\dots,i_M}(b_1,\dots,b_M)$ is a complex number that vanishes if any $i_A$ is outside the range $o_A-1 \le i_A \le f_A$.

We construct the zero energy ground state as follows.  Suppose that a given term $h^{i}_{A}(U_{A,i})$ appears in (\ref{eq:Hlambda}).  If such a term is to annihilate (\ref{eq:startform}), the equation $\psi_{i_1,\dots,i_A=i,\dots,i_M} (b_1,\dots,b_A,\dots,b_M) = \sum_{\beta_A = 0}^1 [U_{A,i}]_{b_A;\beta_A} \psi_{i_1,\dots,i_A=i-1,\dots,i_M}(b_1,\dots,\beta_A,\dots,b_M)$ must hold, where $[U_{A,i}]_{b_A;\beta_A}$ is a matrix element of $U_{A,i}$.   (When $h^{i}_{A}(\lambda_A \mathcal{I} )$ appears in (\ref{eq:Hlambda}), the right hand side of the equation acquires an extra factor of $\lambda_A$: $\psi_{i_1,\dots,i_A=i,\dots,i_M} (b_1,\dots,b_A,\dots,b_M) = \lambda_A \sum_{\beta_A = 0}^1 \delta_{b_A;\beta_A} \psi_{i_1,\dots,i_A=i-1,\dots,i_M}(b_1,\dots,\beta_A,\dots,b_M)$.)  Suppose that a given term $h_{A,B}^{i,i}(U_{A,i;B,i}) + h_{A,B}^{i,i}({\mathrm{P}})$ appears in (\ref{eq:Hlambda}).  If such a term is to annihilate (\ref{eq:startform}), the equation $\psi_{i_1,\dots,i_A=i,\dots,i_B=i,\dots,i_M}(b_1,\dots,b_A,\dots,b_B,\dots,b_M) = \sum_{\beta_A,\beta_B=0}^1 [U_{A,i;B,i}]_{b_A,b_B;\beta_A,\beta_B}\psi_{i_1,\dots,i_A=i-1,\dots,i_B=i-1\dots,i_M}(b_1,\dots,\beta_A,\dots,\beta_B,\dots,b_M)$ must hold.  Moreover, we must have $\psi_{i_1,\dots,i_A\ge i,\dots,i_B<i,\dots,i_M}(b_1,\dots,b_A,\dots,b_B,\dots,b_M)=\psi_{i_1,\dots,i_A<i,\dots,i_B\ge i,\dots,i_M}(b_1,\dots,b_A,\dots,b_B,\dots,b_M)=0$ if $h_{A,B}^{i,i}({\mathrm{P}})$ is to annihilate (\ref{eq:startform}).  From these equations, we can uniquely deduce $\psi_{i_1,\dots,i_A,\dots,i_M}(b_1,\dots,b_M)$ from $\psi_{i_1,\dots,i_A-1,\dots,i_M}(b_1,\dots,b_M)$.  Thus, starting with $\psi_{i_1=o_1-1,\dots,i_M=o_M-1}(b_1,\dots,b_M)$, we can uniquely deduce every $\psi_{i_1,\dots,i_M}(b_1,\dots,b_M)$, thereby constructing a zero energy state.

The state is annihilated by $\sum_{A=1}^{M} h^{o_A-1}_{A}(\mathrm{INIT})$, so $\psi_{i_1=o_1-1,\dots,i_M=o_M-1}(b_1,\dots,b_M)$ must vanish unless $b_1 = \dots = b_M = 0$.   The starting state $\psi_{i_1=o_1-1,\dots,i_M=o_M-1}(b_1,\dots,b_M)$ is therefore unique up to an overall constant.  Since every $\psi_{i_1,\dots,i_A,\dots,i_M}(b_1,\dots,,b_M)$ uniquely follows from the starting state, we conclude that $\sum_{A=1}^{M} h^{o_A-1}_{A}(\mathrm{INIT})+  h^{o_A}_{A}(\lambda_A \mathcal{I}) + \sum_{i=1}^N  \left(\sum_{U_{A,i} \in \mathcal{G}_1} \left(1- \delta_{i,o_A} \right) h^{i}_{A}(U_{A,i}) + \sum_{U_{A,i;B,i} \in \mathcal{G}_2}  h_{A,B}^{i,i}(U_{A,i;B,i})  + h_{A,B}^{i,i}({\mathrm{P}})\right)$ has a unique, zero energy ground state.

There is an extra term $\sum_{A=2}^{M} \sum_{i=o_A }^{f_A}\mathcal{E} (1-\lambda_{A-1}^3)  C^\dagger_{A,i} C_{A,i}$ in the definition of $H(\lambda)$, but this term also annihilates the unique, zero energy ground state.  This is because, in the ground state, qubit $A$ is localized at time step $o_A-1$ as long as $\lambda_A=0$; thus, $\sum_{i=o_A }^{f_A}\mathcal{E} (1-\lambda_{A-1}^3)  C^\dagger_{A,i} C_{A,i}$ annihilates the ground state as long as $\lambda_A=0$.  Moreover, we have defined $H(\lambda)$ so that $\lambda_A$ deviates from $0$ only after $\lambda_{A-1}$ becomes fixed at $1$ rendering $\sum_{i=o_A }^{f_A}\mathcal{E} (1-\lambda_{A-1}^3)  C^\dagger_{A,i} C_{A,i} = 0$.
\end{proof}

Note that the ground state of $H(0)$ is simply the product state $\otimes_{A=1}^M c^\dagger_{A,0,0} \ket{\mathrm{vac}}$, which is presumably straightforward to prepare.  The strategy of adiabatic quantum computation is to prepare the ground state of $H(1)$ by slowly carrying $\lambda$ from 0 to 1.  To assess the efficiency of this strategy, we bound the required amount of time.  The adiabatic theorem provides an estimate of this time given the energy gap between the ground state of $H(\lambda)$ and its first-excited state.  Myriad forms of the adiabatic theorem have appeared in the literature; a review appears in \cite{Albash18}.  We invoke one recent form from \cite{Jansen07}. 

\begin{theorem}[Adiabatic theorem \cite{Jansen07}]
Suppose that the spectrum of $H(\lambda)$  restricted to $P(\lambda)$  consists of $m(\lambda)$  eigenvalues (each possibly degenerate, crossing permitted) separated by a gap $g(\lambda)$  from the rest of the spectrum of $H(\lambda)$, and $H$, $\dot{H}$ , and $\ddot{H}$ are bounded operators. (This assumption is always fulfilled in finite-dimensional spaces.) Then, for  $\left|\psi \right\rangle \in P(0)$  and any $\lambda \in \left[0,1\right]$,
\[
\left \langle \psi \right| {U^\dagger}_T(\lambda) (1 - P(\lambda))  U_T(\lambda)  \left|\psi \right\rangle \le A(\lambda)^2, \|P_T(\lambda)-P(\lambda)\| \le A(\lambda),
\]
where
$A(\lambda) \le \frac{1}{T} \left[ \frac{m\| \dot{H}  \|}{g^2} |_{\lambda=0} + \frac{m\| \dot{H}  \|}{g^2} |_{\lambda}  +  \int_0^\lambda \left(\frac{m \|\ddot{H}\|}{g^2} + 7 m \sqrt{m} \frac{\| \dot{H} \|^2}{g^3}\right) \right]$.
\label{thm:adiabatic}
\end{theorem}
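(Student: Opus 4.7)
The plan is to reduce the statement to the standard parallel--transport / commutator argument of Kato and Avron--Elgart, in the rigorous form carried out in \cite{Jansen07}. Because the theorem is cited verbatim rather than being proved in this paper, a self-contained derivation is not needed, but the strategy proceeds in four natural stages.

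First, rescale the physical time so that $s=t/T$ and the Schr\"odinger equation reads $(i/T)\,d_s U_T(s)=H(s)U_T(s)$. Introduce the Kato adiabatic propagator $A(s)$ defined by $(i/T)\,d_s A=(H+(i/T)[\dot P,P])A$ with $A(0)=I$. By construction $A(s)P(0)=P(s)A(s)$, so the spectral projection is transported exactly by $A$, while the physical $U_T$ is what we really want to control. A Duhamel identity yields $\|U_T(s)-A(s)\|\le T^{-1}\bigl\|\int_0^s A^\dagger(u)[\dot P(u),P(u)]\,U_T(u)\,du\bigr\|$, and $P_T(s)-P(s)$ is bounded by the same quantity.

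Second, represent $\dot P(s)=(2\pi i)^{-1}\oint_\Gamma R(z,s)\,\dot H(s)\,R(z,s)\,dz$, where $R(z,s)=(z-H(s))^{-1}$ and $\Gamma$ encloses the $m$ eigenvalues of interest at distance at least $g/2$ from the rest of the spectrum. Standard resolvent estimates give $\|\dot P\|\lesssim m\|\dot H\|/g$. Next, solve the Sylvester--type equation $[H,X]=(I-P)\dot P P+P\dot P(I-P)$ by a similar contour integral with an extra factor $1/(z-H)$, obtaining an operator $X(s)$ with $\|X\|\lesssim \sqrt{m}\,\|\dot H\|/g^2$. The point of introducing $X$ is that the integrand $[\dot P,P]$ can be rewritten as $(iT)\,[H,X]$ modulo easily--bounded remainders, which permits integration by parts against the full evolution $U_T$.

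Third, perform two integrations by parts in $s$. Each one trades a factor of $T$ from the oscillatory phase for a derivative acting on $X$. The first produces boundary terms of the form $\|X\|$ evaluated at $s=0$ and $s=\lambda$, i.e.\ the two $m\|\dot H\|/g^2$ contributions in $A(\lambda)$. The second produces a bulk integral whose integrand, after differentiating $X(s)$, contains $\|\ddot H\|/g^2$ (from $\partial_s\dot H$ inside the contour) and $\|\dot H\|^2/g^3$ (from $\partial_s R$), with the $m$ and $m\sqrt m$ prefactors tracking the rank of $P$ inside each resolvent product. The expectation--value inequality then follows immediately from $\langle\psi|U_T^\dagger(1-P)U_T|\psi\rangle=\|(1-P)U_T|\psi\rangle\|^2$ and the identity $(1-P)A|\psi\rangle=0$ for $|\psi\rangle\in P(0)$, so the first bound is just the square of the second. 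The main obstacle is the careful combinatorial bookkeeping of the $m$--dependence through the resolvent contour integrals and the optimization of the constant $7$ in $7m\sqrt m\,\|\dot H\|^2/g^3$; I would defer to \cite{Jansen07} for this accounting rather than reproduce it here.
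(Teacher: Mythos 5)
This theorem is not proved in the paper at all: it is quoted verbatim from \cite{Jansen07}, so there is no in-paper argument to compare yours against; the relevant comparison is with the cited source, and your outline is in substance exactly the Jansen--Ruskai--Seiler proof (Kato's intertwining adiabatic evolution, the contour-integral representation of $\dot P$, the auxiliary operator $X$ solving the off-diagonal commutator equation, integration by parts, and resolvent estimates tracking the $m$-dependence). One bookkeeping slip is worth correcting, however: the Duhamel comparison of $U_T$ with Kato's evolution carries no factor $1/T$ by itself, since the two generators (in rescaled time) differ by $[\dot P,P]$, which is $O(1)$; the entire $1/T$ in the final bound comes from a \emph{single} integration by parts, in which $[\dot P,P]$ is rewritten through the commutator equation $[H,X]=(1-P)\dot P P+P\dot P(1-P)$ and the Schr\"odinger equation converts $[H,\,\cdot\,]$ into $(1/T)$ times a total $s$-derivative, producing the two boundary terms $m\|\dot H\|/g^2$ at $\lambda=0$ and $\lambda$ and the bulk integral with $\|\ddot H\|/g^2$ and $\|\dot H\|^2/g^3$. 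Your description of a $1/T$ already present in the Duhamel identity plus \emph{two} integrations by parts, each gaining a factor of $T$, is internally inconsistent and would yield an $O(1/T^2)$ or smaller bound, which the stated estimate does not have. With that power-of-$T$ accounting repaired, the sketch is the standard argument, and the remaining work (the $m$ and $m\sqrt m$ prefactors and the constant $7$) is precisely what \cite{Jansen07} supplies.
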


To apply this theorem to our $H(\lambda)$,  we note that $m(\lambda) = 1$.  Using the triangle inequality for the norm, we deduce $\| \dot{H}  \| \le |2\lambda_A \dot{\lambda_A}| +| \dot{\lambda_A}|+|3\lambda_A^2 \dot{\lambda_A}| \le 6(15/8) M\mathcal{E} \le 12 M\mathcal{E}$, and $\| \ddot{H}  \| \le |2 \dot{\lambda_A}^2 + 2 \lambda_A \ddot{\lambda_A}| + |\ddot{\lambda_A}| + |6 \lambda_A \dot{\lambda_A}^2+3 \lambda_A^2 \ddot{\lambda_A}| \le 8 |\dot{\lambda_A}|^2 + 6| \ddot{\lambda_A}| \le (8(15/8)^2 + 6(10/\sqrt{3})) M^2\mathcal{E} \le 63M^2\mathcal{E}$ using the expression for $F(x)$ in Def. \ref{def:GSQC}.   Therefore, the amount of excitation of the ground state during adiabatic evolution is bounded by the quantity $\left((24M+63M^2)\mathcal{E}/g_{min}^2 + 7(144 M^2)\mathcal{E}^2/g_{min}^3\right)/T < 1095M^2\mathcal{E}^2/Tg_{min}^3$. Thus, the amount of excitation will be small if $T \gg 1095M^2\mathcal{E}^2/g_{min}^3$.  To use this expression, we need to derive a lower bound on the energy gap $g_{min}$.  Note that this expression is quite pessimistic, and a time $T \gg 1/g_{min}$ may be sufficient \cite{Jansen07}.  However, to ensure rigor, we employ the pessimistic expression in this paper.

\begin{figure}[htp]
\subfloat[Schematic physical apparatus for realizing Hamiltonian (\ref{eq:independentqubitHamiltonian}) for $M=2$ qubits.  Just the $\left|0_{i-1}\right\rangle$ to $\left|0_{i}\right\rangle$ transitions are shown; similar figures including $\left|1_{i-1}\right\rangle$ to $\left|1_{i}\right\rangle$ transmissions are omitted.]{%
  \includegraphics[width=2.5in]{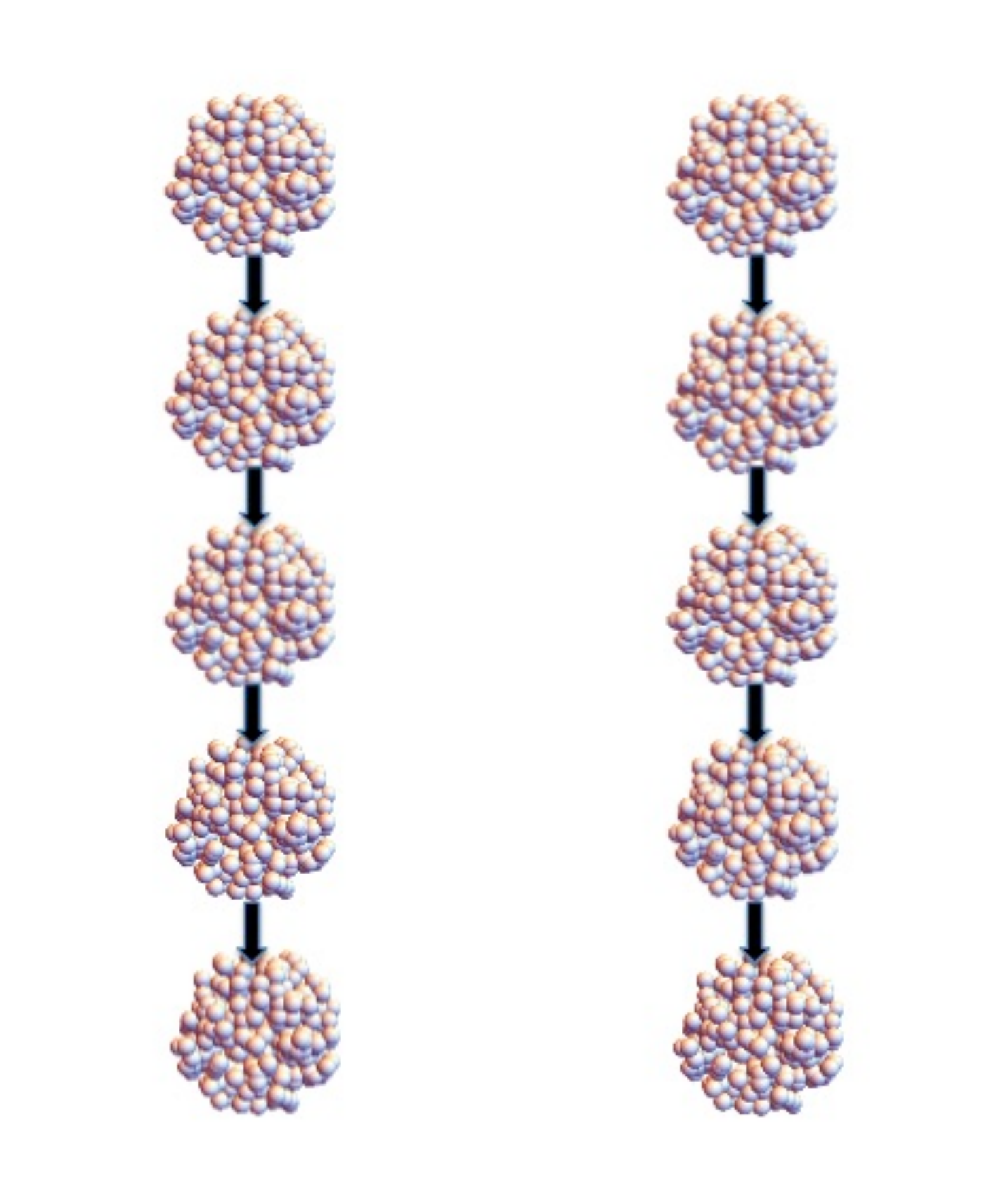}%
}

\subfloat[Isolated single time step of (a) in which each qubit undergoes a 1-qubit gate.  Just the $\left|0_{i-1}\right\rangle$ to $\left|0_{i}\right\rangle$ transitions are shown.]{%
  \includegraphics[width=2.5in]{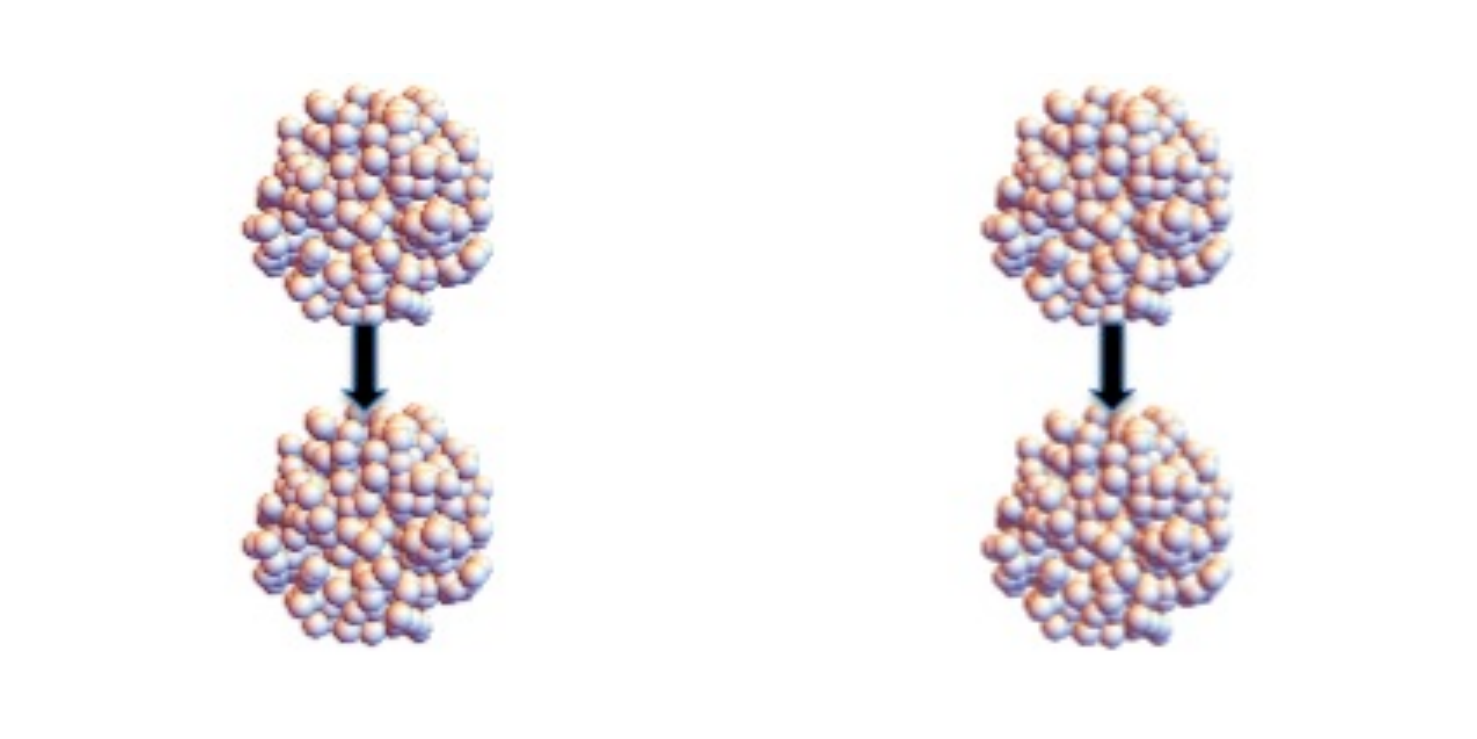}%
}
\subfloat[Isolated single time step, suitable for replacing (b), in which 2 qubits undergo a 2-qubit gate with $U = \mathcal{I} \otimes \mathcal{I}$.  Just the $\left|0_{i-1}\right\rangle \otimes \left|0_{j-1}\right\rangle $ to $\left|0_{i}\right\rangle \otimes \left|0_{j}\right\rangle$ transition is shown. ]{%
  \includegraphics[width=2.5in]{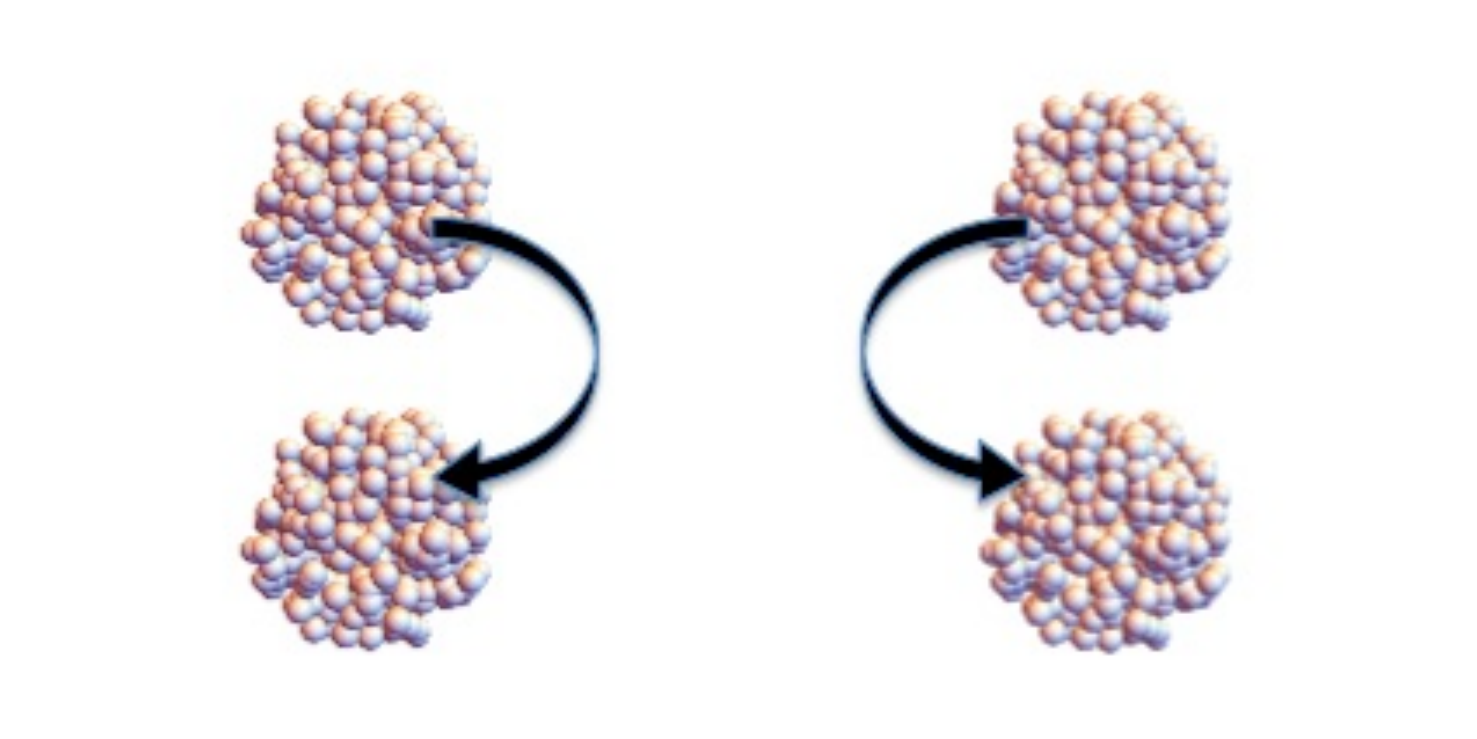}%
}
\caption{Schematic physical apparatus for realizing Hamiltonian (\ref{eq:independentqubitHamiltonian}) in the case that $U = \mathcal{I}$ at every time step.}
\label{fig:apparatusidentity}
\end{figure}

\section{Gap of $H(\lambda)$}

The following lemma simplifies our analysis of $g_{min}$ by allowing us to substitute identity gates for the unitary gates $\mathcal{G}_1 \equiv \{U_{A,i}\}$, and $\mathcal{G}_2 \equiv \{U_{A,j;B,j}\}$ in $H(\lambda)$.

\begin{Lemma}[Identity gates]

Starting with $H(\lambda)$, define a new Hamiltonian $\pr{H}(\lambda)$ in which every 1-qubit gate $U_{A,j}$ is replaced with an identity gate $\mathcal{I}$  and every 2-qubit gate $U_{A,j;B,j}$ is replaced with $\mathcal{I} \otimes \mathcal{I}$.  This $\pr{H}(\lambda)$ can be obtained from $H(\lambda)$ by a unitary transformation that preserves the energy gap.
\label{lemma:identitygates}
\end{Lemma}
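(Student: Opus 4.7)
The plan is to exhibit an explicit unitary $W$ on the restricted Hilbert space of Def.~\ref{def:GSQC} such that $W\pr{H}(\lambda)W^\dagger = H(\lambda)$. Since unitary conjugation preserves the entire spectrum, this immediately preserves the gap, which is what the lemma asserts.

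I would take $W$ to be diagonal in the position basis $|i_1,\ldots,i_M\rangle$ (so that particles do not move) and nontrivial only on the bit-value basis $|\vec b\rangle = |b_1,\ldots,b_M\rangle$. For each position tuple, let $\mathcal{U}_{i_1,\ldots,i_M}$ denote the $M$-qubit unitary obtained by composing, in order of increasing time index $j$, every 1-qubit gate $U_{A,j}$ with $j \le i_A$ and every 2-qubit gate $U_{A,j;B,j}$ with $j\le i_A$ and $j\le i_B$ (each embedded by tensoring with identities on the remaining qubits). I would then declare
\[
W|i_1,\ldots,i_M;\vec b\rangle \;=\; |i_1,\ldots,i_M\rangle \otimes \mathcal{U}_{i_1,\ldots,i_M}|\vec b\rangle.
\]
On time-invalid tuples (where exactly one of $i_A,i_B$ has crossed the time index of some 2-qubit gate) I would simply omit the offending gate from the product; the convention is immaterial, as those sectors are acted on by the Hamiltonian only through purely positional penalty terms. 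Since each $\mathcal{U}_{i_1,\ldots,i_M}$ is unitary on $(\mathbb{C}^2)^{\otimes M}$ and $W$ is block diagonal in positions, $W$ is manifestly unitary on the restricted Hilbert space.

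Next I would verify $W\pr{H}(\lambda)W^\dagger = H(\lambda)$ term by term. The counting terms $(1-\lambda_{A-1}^3)C^\dagger_{A,i}C_{A,i}$ and the penalty terms $h^{i,i}_{A,B}(\mathrm{P})$ depend only on particle positions, so $W$ commutes with them. The INIT terms $h^{o_A-1}_A(\mathrm{INIT})$ project onto $b_A=1$ at $i_A=o_A-1$; at that position no gate involving $A$ has yet been applied, so $\mathcal{U}$ acts as the identity on the $A$-th tensor factor and the projector is preserved. For a 1-qubit gate term $h^i_A(U_{A,i})$ (including the $\lambda_A$-scaled gate $h^{o_A}_A(\lambda_A\mathcal{I})$), the crucial identity is
\[
\mathcal{U}_{\ldots,i_A=i,\ldots} \;=\; \bigl(I\otimes U_{A,i}\otimes I\bigr)\,\mathcal{U}_{\ldots,i_A=i-1,\ldots},
\]
which holds because advancing $i_A$ from $i-1$ to $i$ appends exactly one new factor $U_{A,i}$ to the time-ordered product (other gates at time $i$ act on disjoint qubits and commute with it). Substituting into (\ref{eq:onequbitgate}) shows that conjugation by $W$ converts $h^i_A(\mathcal{I})$ into $h^i_A(U_{A,i})$ block by block.

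The 2-qubit case is the main obstacle to anticipate, because $U_{A,i;B,i}$ need not factor across $A$ and $B$, so one cannot simply redefine the single-qubit operators separately. The way out is the observation --- visible directly from expanding (\ref{eq:twoqubitgate}) --- that $h^{i,i}_{A,B}(U_{A,i;B,i})$ has matrix elements only between the configurations $(i_A,i_B)=(i,i)$ and $(i_A,i_B)=(i-1,i-1)$, and never touches the time-invalid configurations on which my convention for $\mathcal{U}$ was ambiguous. Those are handled by $h^{i,i}_{A,B}(\mathrm{P})$, which commutes with $W$. The only identity therefore required is
\[
\mathcal{U}_{\ldots,i_A=i,i_B=i,\ldots} \;=\; \bigl(I\otimes U_{A,i;B,i}\otimes I\bigr)\,\mathcal{U}_{\ldots,i_A=i-1,i_B=i-1,\ldots},
\]
which again follows from the construction: advancing both $i_A$ and $i_B$ from $i-1$ to $i$ appends precisely the single factor $U_{A,i;B,i}$ to the time-ordered product, and no other gate at time $i$ touches $A$ or $B$. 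This completes the term-by-term verification, giving $W\pr{H}(\lambda)W^\dagger = H(\lambda)$, so the spectra and in particular the minimum gaps of $\pr{H}(\lambda)$ and $H(\lambda)$ coincide.
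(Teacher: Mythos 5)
Your proposal is correct and is essentially the paper's own argument: your position-diagonal $W$ with $\mathcal{U}_{i_1,\dots,i_M}$ equal to the accumulated product of gates (with the same convention of dropping a 2-qubit gate on time-invalid tuples) is exactly the unitary $\mathcal{U}$ with matrices $V(i_1,\dots,i_M)$ constructed in the paper's proof. The only differences are presentational --- you work in the first-quantized restricted space where unitarity is manifest and spell out the term-by-term conjugation identities that the paper leaves as ``direct calculation,'' while the paper instead devotes its effort to verifying unitarity in second-quantized notation.
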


\begin{proof}
We prove this lemma by exhibiting the unitary transformation
\begin{eqnarray*}
\lefteqn{\mathcal{U}  =  \sum_{i_1=o_1-1}^{f_1} \dots \sum_{i_M=o_M-1}^{f_M} \otimes_{A=1}^M C_{A,i_A}^\dagger V(i_1,\dots,i_M) \otimes _{B=1}^M C_{B,i_B}} \\
&= & \sum_{i_1=o_1-1}^{f_1} \dots \sum_{i_M=o_M-1}^{f_M}\sum_{b_1=0}^{1} \dots \sum_{b_M=0}^{1}\sum_{b^\prime_1=0}^{1} \dots \sum_{b^\prime_M=0}^{1} \\
&& \hspace{-0.2in}V(i_1,\dots,i_M)_{b_1,\dots,b_M;b^\prime_1,\dots,b^\prime_M}\otimes_{A=1}^M c_{A,i_A,b_A}^\dagger\otimes _{B=1}^M c_{B,i_B,b^\prime_B}
\end{eqnarray*}
Here, $V(i_1,\dots,i_M)$ is a $2^M$ by $2^M$ unitary matrix that acts on the $C_{B,i_B}$ column-vectors, leaving the qubit indices $B$ and time steps $i_B$ fixed but changing the bit values $b_B$.

We show that $\mathcal{U}$ is unitary in the basis described in Def. \ref{def:GSQC}: eigenstates of $\sum_{i=o_A-1}^{f_A} C_{A,i}^{\dagger}C_{A,i}$, with $+1$ eigenvalue, for $A=1,\dots,M$.  To demonstrate unitarity, note that $c_{B,i_B,b^\prime_B} \otimes_{A=1}^M c_{A,j_A,\beta_A}\ket{\psi} =  0$ for all states $\ket{\psi}$  in the basis.  Indeed, $0=(\sum_{i=o_A-1}^{f_A} C_{A,i}^{\dagger}C_{A,i} - 1)\ket{\psi}$, implying $c_{A,j,b}c_{A,j^\prime,b^\prime}(\sum_{i=o_A-1}^{f_A} C_{A,i}^{\dagger}C_{A,i} - 1)\ket{\psi} = 0 = (\sum_{i=o_A-1}^{f_A} C_{A,i}^{\dagger}C_{A,i} +1) c_{A,j,b}c_{A,j^\prime,b^\prime}\ket{\psi}$.  Since $(\sum_{i=o_A-1}^{f_A} C_{A,i}^{\dagger}C_{A,i} +1)$ is a positive operator, it follows that $c_{A,j,b}c_{A,j^\prime,b^\prime}\ket{\psi} = 0$ and thus $c_{B,i_B,b^\prime_B} \otimes_{A=1}^M c_{A,j_A,\beta_A}\ket{\psi} =  0$.  With this fact in mind, compute
\begin{eqnarray*}
\lefteqn{\mathcal{U}^\dagger \mathcal{U}\ket{\psi}  = \sum_{i_1=o_1-1}^{f_1} \dots  \sum_{\beta^\prime_M=0}^{1} V(i_1,\dots,i_M)^{\dagger}_{b_1,\dots,b_M;b^\prime_1,\dots,b^\prime_M}} \\
&& V(j_1,\dots,j_M)_{\beta^\prime_1,\dots,\beta^\prime_M;\beta_1,\dots,\beta_M} \otimes_{A=1}^M c_{A,i_A,b_A}^\dagger \\
&& \otimes _{B=1}^M (\delta_{i_B,j_B}\delta_{b^\prime_B,\beta^\prime_B} - c^\dagger_{B,j_B,\beta^\prime_B}c_{B,i_B,b^\prime_B})\\
&&\otimes_{A=1}^M c_{A,j_A,\beta_A}\ket{\psi}
\end{eqnarray*}
The only contribution from $\otimes _{B=1}^M (\delta_{i_B,j_B}\delta_{b^\prime_B,\beta^\prime_B} - c^\dagger_{B,j_B,\beta^\prime_B}c_{B,i_B,b^\prime_B})$ comes from $\otimes _{B=1}^M \delta_{i_B,j_B}\delta_{b^\prime_B,\beta^\prime_B}$ since $c_{B,i_B,b^\prime_B} \otimes_{A=1}^M c_{A,j_A,\beta_A}\ket{\psi} =  0$.  Thus, 
\begin{eqnarray*}
\lefteqn{\mathcal{U}^\dagger\mathcal{U} \ket{\psi}  = \sum_{i_1=o_1-1}^{f_1} \dots \sum_{b^\prime_M=0}^{1}V(i_1,\dots,i_M)^{\dagger} _{b_1,\dots,b_M;b^\prime_1,\dots,b^\prime_M}} \\
&& \hspace{-0.2in}V(i_1,\dots,i_M)_{b^\prime_1,\dots,b^\prime_M;\beta_1,\dots,\beta_M} \otimes_{A=1}^M c_{A,i_A,b_A}^\dagger c_{A,i_A,\beta_A}\ket{\psi} \\
& = &\sum_{i_1=o_1-1}^{f_1} \dots \sum_{i_M=o_M-1}^{f_M} \otimes_{A=1}^M C_{A,i_A}^\dagger C_{A,i_A}\ket{\psi}  = \ket{\psi}.
\end{eqnarray*}

If the gate model quantum algorithm contains only one qubit gates, then take $V(i_1,\dots,i_M) \equiv \otimes _{B=1}^M (U_{B,i_B}\cdots U_{B,1})$.  If a two-qubit gate $U_{A,j;B,j}$ appears instead of $U_{A,j}$ and $U_{B,j}$  in the gate model algorithm, replace  $U_{A,j}\otimes U_{B,j}$ in this definition with $U_{A,j;B,j}$  when $i_A, i_B \ge j$ and otherwise set $U_{A,j} = U_{B,j} = \mathcal{I}$.  Direct calculation shows that $\pr{H}(\lambda)\ket{\psi}=\mathcal{U}^\dagger H(\lambda)\mathcal{U}\ket{\psi}$.
\end{proof}

Note that $\pr{H}(\lambda)$ does not change the bit value of any qubit: it has no matrix element between $c_{A,i,b}$ and $c_{A,i,1-b}$ for any $A$, $i$, or $b$.  Fig. \ref{fig:apparatusidentity}, to be compared with Fig. \ref{fig:apparatus}, shows this simplification graphically.  Thus, $\pr{H}(\lambda)$ is block-diagonal in the basis of bit values, and we will find it convenient to study it block-by-block.  Note, in particular, that its ground state resides in the block in which all bits have value 0 and $\sum_{A=1}^{M} h^{o_A-1}_{A}(\mathrm{INIT})$ vanishes.

Our lower bound on the gap will make use of the following generic lemma about the sums of Hamiltonians \cite{Mizel07}.

\begin{Lemma}[Eigenvalue Bound for Sum of Hamiltonians]
Let $h$ be a Hamiltonian with a ground state space ${\mathcal S}_0$ of states of energy $e_0$ and a first excited energy $e_1$.  Let $\delta h$ be a positive semi-definite Hamiltonian with maximum eigenvalue $\| \delta h \|$.  Then a lower bound for the eigenvalues of $h+\delta h$ is
\[
\min_{\left|\psi_0\right\rangle \in {\mathcal S}_0} e_0 + \frac{(e_1-e_0) \left\langle\psi_0| \delta h |\psi_0\right\rangle}{(e_1-e_0) + \left\langle\psi_0| \delta h |\psi_0\right\rangle + \| \delta h \|}.
\]
\label{lemma:sumofhamiltonians}
\end{Lemma}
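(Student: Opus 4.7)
My plan is to invoke the variational principle $\lambda_{\min}(h+\delta h)=\min_{\|\psi\|=1}\langle\psi|h+\delta h|\psi\rangle$ and bound each piece separately. For an arbitrary unit $|\psi\rangle$, write the orthogonal decomposition $|\psi\rangle=\sqrt{1-p}\,|\psi_0\rangle+e^{i\theta}\sqrt{p}\,|\psi_\perp\rangle$ with $|\psi_0\rangle\in\mathcal{S}_0$ and $|\psi_\perp\rangle\perp\mathcal{S}_0$ both unit.  The spectral gap of $h$ immediately gives $\langle\psi|h|\psi\rangle\ge e_0+p(e_1-e_0)$.  For $\delta h$, positive semi-definiteness lets us factor $\delta h=X^\dagger X$, and Cauchy--Schwarz applied to $\|X|\psi\rangle\|^2$ with the worst-case relative phase yields
\[
\langle\psi|\delta h|\psi\rangle\ge\bigl(\sqrt{(1-p)a}-\sqrt{pb}\bigr)^2=(1-p)a+pb-2\sqrt{p(1-p)ab},
\]
with $a=\langle\psi_0|\delta h|\psi_0\rangle$ and $b=\langle\psi_\perp|\delta h|\psi_\perp\rangle\le\|\delta h\|$.

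The decisive step is a tailored Young inequality: for any $\epsilon>0$, $2\sqrt{p(1-p)ab}\le\epsilon(1-p)a+pb/\epsilon$.  With the specific choice $\epsilon=(a+\|\delta h\|)/[(e_1-e_0)+a+\|\delta h\|]$, I would upper-bound $b\le\|\delta h\|$ in the coefficient of $pb$ (now negative) and add in $\langle\psi|h|\psi\rangle\ge e_0+p(e_1-e_0)$.  The $p$-independent terms then collect into exactly $e_0+(e_1-e_0)a/[(e_1-e_0)+a+\|\delta h\|]$, while the coefficient of $p$ simplifies to the non-negative quantity $a(e_1-e_0)^2/\{[(e_1-e_0)+a+\|\delta h\|](a+\|\delta h\|)\}$.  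Consequently the lower bound is minimised over $p\in[0,1]$ at $p=0$, producing the desired inequality for the particular $|\psi_0\rangle$ arising from $|\psi\rangle$.  Since the resulting expression is monotone increasing in $a$, taking the minimum over $|\psi_0\rangle\in\mathcal{S}_0$ delivers the bound stated in the lemma.

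The principal obstacle is identifying the correct weight $\epsilon$: with a naive choice one is left with a two-variable minimisation over $p$ and $b$ whose exact minimum is the root of a quadratic and does not match the tidy form in the statement.  The specific $\epsilon$ above is reverse-engineered so that the $p$-independent terms assemble into the target's numerator, and its viability rests on the elementary identity $1-a/K-\|\delta h\|/(a+\|\delta h\|)=a(e_1-e_0)/[K(a+\|\delta h\|)]$ with $K=(e_1-e_0)+a+\|\delta h\|$, which guarantees the $p$-coefficient is non-negative.  Once this identity is in hand, every remaining step is direct verification.
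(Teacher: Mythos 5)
Your proposal is correct. Every step checks out: the gap bound $\langle\psi|h|\psi\rangle\ge e_0+p(e_1-e_0)$, the reverse-triangle/Cauchy--Schwarz bound $\langle\psi|\delta h|\psi\rangle\ge(1-p)a+pb-2\sqrt{p(1-p)ab}$, the legitimacy of replacing $b$ by $\|\delta h\|$ once its coefficient $1-1/\epsilon$ is negative, and the algebraic identity $1-a/K-\|\delta h\|/(a+\|\delta h\|)=a(e_1-e_0)/\left[K(a+\|\delta h\|)\right]$ with $K=(e_1-e_0)+a+\|\delta h\|$, which makes the coefficient of $p$ equal to $a(e_1-e_0)^2/\left[K(a+\|\delta h\|)\right]\ge 0$ so that $p=0$ is the worst case; the final minimization over $\mathcal{S}_0$ is then immediate (the degenerate cases $\delta h=0$ or $p=1$ are harmless). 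Your route is genuinely different from the paper's, though it shares the two essential ingredients: the Cauchy--Schwarz control of the cross matrix element of $\delta h$ between the ground space and its complement, and the worst-case substitutions $\langle\delta\psi|\delta h|\delta\psi\rangle\to\|\delta h\|$, $\langle\delta\psi|h-e_0|\delta\psi\rangle\to e_1-e_0$. The paper instead writes the true ground state of $h+\delta h$ as a combination of a single $|\psi_0\rangle\in\mathcal{S}_0$ and one orthogonal state, forms the $2\times 2$ matrix of $h+\delta h-e_0$ on that span, and bounds its lowest eigenvalue by the determinant-over-trace inequality $\varepsilon_0\ge\varepsilon_0\varepsilon_1/(\varepsilon_0+\varepsilon_1)$ before worst-casing the entries. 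What the paper's argument buys is brevity; what yours buys is that it never needs to refer to the actual ground state or to the spectrum of a restricted matrix -- it is a pointwise variational estimate valid for every trial state, with the cross term absorbed by a weighted Young inequality whose weight is tuned precisely so the bound is linear and nondecreasing in the excited-state weight $p$. The two derivations land on exactly the same expression, so your proof can stand as a self-contained alternative.
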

\begin{proof}
Without loss of generality, we can write the ground state of $h+\delta h$ in the form $a \left|\psi_0\right\rangle + b \left|\delta \psi \right\rangle$, where $\left|\psi_0\right\rangle$ is a normalized state in ${\mathcal S}_0$, $\left|\delta \psi \right\rangle$ is a normalized state orthogonal to ${\mathcal S}_0$,  and $|a|^2+|b|^2=1$.  Expressing $h + \delta h-e_0$ in the basis of $\left|\psi_0\right\rangle$, $\left|\delta \psi \right\rangle$ gives the $2 \times 2$ matrix
\[
 \left[\begin{array}{cc}
\left\langle \psi_0 |\delta h| \psi_0 \right\rangle & \left\langle \psi_0 |\delta h| \delta \psi \right\rangle \\ \left\langle \delta \psi |\delta h| \psi_0 \right\rangle &  \left\langle \delta \psi |h + \delta h -e_0| \delta \psi \right\rangle
\end{array}\right].
\]
If the eigenvalues of this matrix are $\varepsilon_0 \le \varepsilon_1$, then, since $h-e_0$ and $\delta h$ are positive semi-definite, we know that $\varepsilon_0$ and $\varepsilon_1$ are non-negative.  It follows that $\varepsilon_0 \ge \varepsilon_0 \varepsilon_1/(\varepsilon_0+\varepsilon_1)$, which is the ratio of the determinant to the trace. 

The determinant of this matrix is at least $\left\langle \delta \psi |h-e_0| \delta\psi \right\rangle\left\langle \psi_0 |\delta h| \psi_0 \right\rangle$ since $\left\langle \psi_0 |\delta h| \psi_0 \right\rangle \left\langle \delta \psi |\delta h| \delta \psi \right\rangle - |\left\langle \delta \psi |\delta h| \psi_0 \right\rangle|^2 \ge 0$.  Dividing by the trace of this matrix gives $
\left\langle \psi_0 |\delta h| \psi_0 \right\rangle/[1+(\left\langle \psi_0 |\delta h| \psi_0 \right\rangle+ \left\langle \delta \psi |\delta h| \delta \psi \right\rangle)/ \left\langle \delta \psi |h -e_0| \delta \psi \right\rangle]$.  This is minimized by setting $\left\langle \delta \psi |\delta h| \delta \psi \right\rangle$ to its maximum value $\|\delta h\|$ and $\left\langle \delta \psi |h -e_0| \delta \psi \right\rangle$ to its minimum value $e_1-e_0$, which gives 
\[
e_0 + \frac{(e_1-e_0) \left\langle\psi_0| \delta h |\psi_0\right\rangle}{(e_1-e_0) + \left\langle\psi_0| \delta h |\psi_0\right\rangle + \| \delta h \|}.
\]
We minimize over choices of $\left|\psi_0\right\rangle$ to obtain the bound that appears in the statement of the lemma.
\end{proof}

To lower bound the gap of $H(\lambda)$, we now cast the circuit of Def. \ref{def:GSQC} into a more specific form.  Our first case of interest is a one-dimensional configuration in which the qubits are positioned on a line and 2-qubit gates are performed between nearest neighbors.   It is reminiscent of the configuration considered in \cite{Breuckmann14} but permits computations with $N \gg M$ and includes 1-qubit gates.  A sketch of the configuration appears in Fig. \ref{fig:nanocrystalarray1dwith1qubitgates}.

\begin{definition}[One-dimensional configuration]

A GSQC Hamiltonian $H(\lambda)$ of Def. \ref{def:GSQC} has a one-dimensional configuration if it is derived from the following type of circuit.  The circuit acts on an odd number of qubits $M$ and is generated from a underlying algorithm of even depth $n \ge M-1$.  To the end of the circuit are added an extra $n+M-1$ time steps of 1-qubit identity gates acting on every qubit, bringing the depth to $2n+M-1$.  The 1-qubit gates and the 2-qubit gates are then segregated into separate time steps such that no 2-qubit gate acts at an odd time step.  At time step $2j+1$, for $j \in [1,\dots,(2n+M-1)]$, every qubit $A$ undergoes a 1-qubit gate $U_{A,2j+1}$.  At time step $2j+2$, for $j \in [1,\dots,(2n+M-1)]$, there are 2-qubit gates $U_{A,2j+2;B,2j+2}$ that follow a specific pattern consistent with a 1-dimensional layout of qubits.  In particular, when $j$ is odd, $A=2$ and $B=3$ undergo a 2-qubit gate, qubits $A=4$ and $B=5$ undergo a 2-qubit gate, qubits $A=6$ and $B=7$ undergo a 2-qubit gate, and so on.  Qubit 1 is the only qubit to undergo a 1-qubit gate $U_{1,2j+2} =\mathcal{I}$ at time step $2j+2$ when $j$ is odd.  When $j$ is even, the pattern of gates switches, so that qubits $A=1$ and $B=2$ undergo a 2-qubit gate, qubits $A=3$ and $B=4$ undergo a 2-qubit gate, qubits $A=5$ and $B=6$ undergo a 2-qubit gate, and so on.  Qubit $M$ is the only qubit to undergo a 1-qubit gate $U_{M,2j+2} = \mathcal{I}$ at time step $2j+2$ when $j$ is even.  The pattern of 2-qubit gates is summarized in Table \ref{table:2qbgatesin1d}.  The segregation into 1-qubit gate time steps and 2-qubit gate time steps doubles the depth of the circuit to $2(2n+M-1)$, from original time step $3$ to final time step $2(2n+M)$.

In addition to the gates above, every qubit undergoes a 1-qubit identity gate $U_{2(2n+M)+1} =\mathcal{I}$.  Each of the qubits with even $A$ begins with two additional 1-qubit identity gates $U_{A,1} = U_{A,2} = \mathcal{I}$, so that $o_A = 1$ and $f_A = 2(2n+M)+1$.  Each of the qubits with odd $A<M$ finishes with two additional 1-qubit identity gates $U_{A,2(2n+M)+2} = U_{A,2(2n+M)+3} = \mathcal{I}$ so that $o_A = 3$ and $f_A = 2(2n+M)+3$.  Qubit $A=M$ has $o_A =3$ and $f_A = 2(2n+M)+1$.   The net depth of the circuit is $N = 2(2n+M)+3 \ge 6M-1$.  Only identity gates act during the final $2n+2M =  (N-3)/2+M$ time steps of the circuit.
\label{def:1d}
\end{definition}
The Hamiltonian of Def. \ref{def:1d} is depicted in Fig. \ref{fig:nanocrystalarray1dwith1qubitgates} for the case $M=5$, $N=9$.  (The values $M=5$, $N=9$ are chosen to keep the figure a manageable size despite violating the condition $N\ge 6 M-1$.)  The figure is simplified, analogously to Fig. \ref{fig:apparatusidentity}, under the assumption that all gates are identity gates (Lem. \ref{lemma:identitygates}).

\begin{center}
\begin{table}
\begin{tabular}{|c|c|c|c|c|c|c|c|}
\hline
\multirow{2}{*}{$U_{A,4;B,4}$}&A & M & M-2 & M-4 & $\dots$ & 5 & 3 \\
\cline{2-8}
& B & M-1 & M-3 & M-5 & $\dots$ & 4 & 2 \\
\hline
\multirow{2}{*}{$U_{A,6;B,6}$}&A & 1 & 3 & 5 & $\dots$ & M-4 & M-2 \\
\cline{2-8}
& B & 2 & 4 & 6 & $\dots$ & M-3 & M-1 \\
\hline
\multirow{2}{*}{$U_{A,8;B,8}$}&A & M & M-2 & M-4 & $\dots$ & 5 & 3 \\
\cline{2-8}
& B & M-1 & M-3 & M-5 & $\dots$ & 4 & 2 \\
\hline
\multirow{2}{*}{$U_{A,10;B,10}$}&A & 1 & 3 & 5 & $\dots$ & M-4 & M-2 \\
\cline{2-8}
& B & 2 & 4 & 6 & $\dots$ & M-3 & M-1 \\
\hline
\multirow{2}{*}{$U_{A,12;B,12}$}&A & M & M-2 & M-4 & $\dots$ & 5 & 3 \\
\cline{2-8}
& B & M-1 & M-3 & M-5 & $\dots$ & 4 & 2 \\
\hline
\multirow{2}{*}{$U_{A,14;B,14}$}&A & 1 & 3 & 5 & $\dots$ & M-4 & M-2 \\
\cline{2-8}
& B & 2 & 4 & 6 & $\dots$ & M-3 & M-1 \\
\hline
\end{tabular}
\caption{Pattern of 2-qubit gates in a 1-dimensional qubit configuration.}
\label{table:2qbgatesin1d}
\end{table}
\end{center}

\begin{figure}
\includegraphics[width=2.5in]{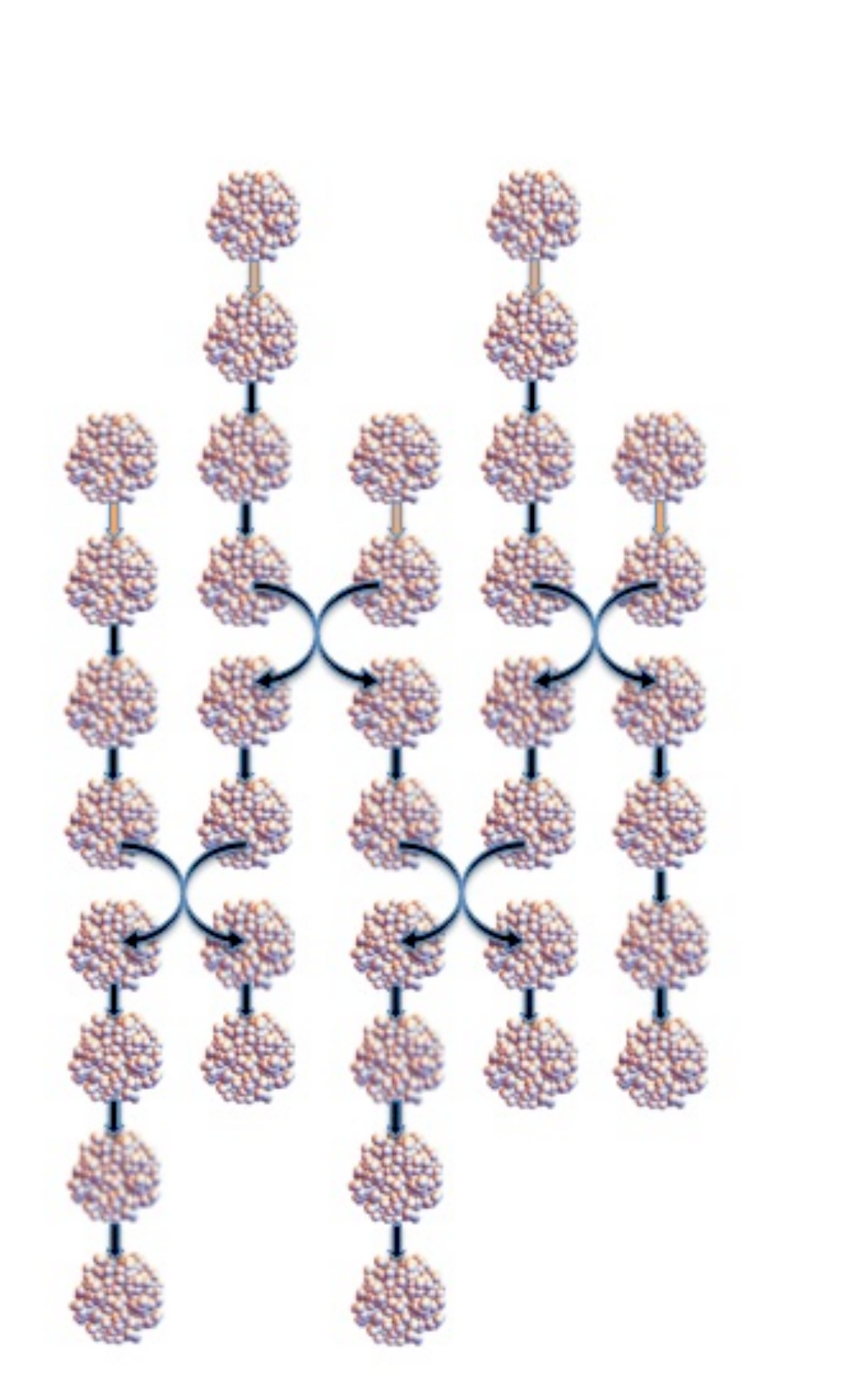}
\caption{Schematic physical apparatus for realizing the one-dimensional configuration of Def. \ref{def:1d}.  The Hamiltonian is assumed to have been simplified using Lem. \ref{lemma:identitygates}, so a single itinerant electron migrates in each array of 8 nanocrystals.  Black lines denote tunnelling matrix elements between orbitals housed on different nanocrystals.  Yellow lines denote $\lambda$-dependent tunnelling matrix elements.}
\label{fig:nanocrystalarray1dwith1qubitgates}
\end{figure}

\begin{theorem}[Gap of $H(\lambda)$]
Apply Lem. \ref{lemma:identitygates} to the Hamiltonian $H(\lambda)$ of Def. \ref{def:1d}, and let  $\left| \psi_0 (\lambda)\right\rangle$ be the resulting ground state.  Let $A$ be the integer satisfying $\frac{A-1}{M} \le  \lambda < \frac{A}{M} $ for $\lambda < 1$ and let $A=M$ for $\lambda =1$.  (Thus, $A = \lfloor M \lambda \rfloor+1$ for $\lambda <1$.)  Let $e_1(\lambda)$ be the lowest non-zero eigenvalue of $H(\lambda) -  h^{o_A}_A(\lambda_A \mathcal{I}) $ in the sub-block in which all bits have bit value zero.  Then $H(\lambda)$ has a gap $g(\lambda)$ bounded below by
\[
g_{min} = \min_{\lambda} \frac{1}{4}e_1(\lambda) \bra{ \psi_0 (\lambda)} c^\dagger_{A,o_A-1,0}c_{A,o_A-1,0}  \ket{ \psi_0 (\lambda)}.
\]
\label{thm:Hlambdagap}
\end{theorem}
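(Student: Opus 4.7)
The plan is to apply Lemma~\ref{lemma:sumofhamiltonians} with the split $H(\lambda) = h + \delta h$, where $\delta h \equiv h^{o_A}_A(\lambda_A\mathcal{I})$ and $h \equiv H(\lambda) - \delta h$ has ground-state energy $0$ and first-excited energy $e_1(\lambda)$ by hypothesis. Lemma~\ref{lemma:sumofhamiltonians} nominally lower-bounds the ground-state energy of $h+\delta h$, which is $0$ here; to convert this into a bound on the gap, I apply it on the orthogonal complement of the true ground state $\ket{\psi_0(\lambda)}$, so that the smallest eigenvalue of the restricted Hamiltonian is precisely $g(\lambda)$.

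By Lemma~\ref{lemma:identitygates}, I may replace every gate by the identity without changing the gap. Since identity gates do not alter bit values, the Hamiltonian decomposes across bit-value sub-blocks, and the ground state lies entirely in the all-bit-zero sub-block (where the $\mathrm{INIT}$ terms vanish). Within this sub-block I characterize the ground space $\mathcal{S}_0$ of $h$. Removing $\delta h$ severs the hopping link at $o_A$ for qubit $A$; by the scheduling of Definition~\ref{def:GSQC}, this happens while $\lambda_{A-1}=1$, so that qubit $A$'s confinement $\mathcal{E}(1-\lambda_{A-1}^3)$ is off, whereas qubits $B>A$ remain pinned at $o_B-1$ because $\lambda_{B-1}<1$. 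Careful inspection shows that $\mathcal{S}_0$ is exactly two-dimensional, with an orthonormal basis $\{\ket{\psi_A^-},\ket{\psi_A^+}\}$ corresponding to qubit $A$ localized at $o_A-1$ and to qubit $A$ spread across $o_A,\dots,f_A$, respectively. Writing $p \equiv \bra{\psi_0(\lambda)}c^\dagger_{A,o_A-1,0}c_{A,o_A-1,0}\ket{\psi_0(\lambda)}$, I decompose $\ket{\psi_0(\lambda)} = \sqrt{p}\,\ket{\psi_A^-} + \sqrt{1-p}\,\ket{\psi_A^+}$ and identify $\ket{\psi_0^\perp} = -\sqrt{1-p}\,\ket{\psi_A^-} + \sqrt{p}\,\ket{\psi_A^+}$ as the unique state in $\mathcal{S}_0 \cap \{\ket{\psi_0(\lambda)}\}^\perp$.

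Applying Lemma~\ref{lemma:sumofhamiltonians} on the restricted space $\{\ket{\psi_0(\lambda)}\}^\perp$ then yields
\[
g(\lambda) \;\ge\; \frac{e_1(\lambda)\,\bra{\psi_0^\perp}\delta h\ket{\psi_0^\perp}}{e_1(\lambda) + \bra{\psi_0^\perp}\delta h\ket{\psi_0^\perp} + \|\delta h\|}.
\]
A direct $2\times 2$ computation in the basis $\{\ket{\psi_A^-},\ket{\psi_A^+}\}$, exploiting that $\delta h\ket{\psi_0(\lambda)}=0$ forces the restriction of $\delta h$ to $\mathcal{S}_0$ to be rank one with $\ket{\psi_0(\lambda)}$ in its kernel, gives closed-form expressions for $\bra{\psi_0^\perp}\delta h\ket{\psi_0^\perp}$ in terms of $p$ and $\lambda_A$, together with $\|\delta h\| = \mathcal{E}(1+\lambda_A^2)$. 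Algebraic manipulation then produces the claimed bound $\tfrac{1}{4}\,e_1(\lambda)\,p$; the factor $\tfrac{1}{4}$ emerges from bounding the denominator by an appropriate multiple of the largest of the three terms. The hardest step, in my view, is the rigorous characterization of $\mathcal{S}_0$ as exactly two-dimensional: one must verify that the two-qubit penalty terms $h^{i,i}_{A,B}(\mathrm{P})$ create no additional zero-energy configurations once qubit $A$ can migrate into the late sector, and confirm that qubits $B>A$ remain strictly confined by the $\mathcal{E}(1-\lambda_B^3)$ penalty so that no spurious low-energy modes leak into the count.
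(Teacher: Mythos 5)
Your outline of the all-bit-zero sub-block analysis matches the paper's part (a): same split $\delta h = h^{o_A}_A(\lambda_A\mathcal{I})$, same observation that $h$ acquires a second zero-energy state, same use of Lem.~\ref{lemma:sumofhamiltonians} on the complement of $\ket{\psi_0(\lambda)}$. But there are two genuine gaps. First, bounding the gap of $H(\lambda)$ requires bounding the lowest eigenvalue in \emph{every} sub-block, and you never treat the $2^M-1$ sub-blocks in which one or more bits equal $1$. You cannot dismiss them as trivially high-energy: the initialization term $\mathcal{E}\,c^\dagger_{B,o_B-1,1}c_{B,o_B-1,1}$ acts only at the single time step $o_B-1$, so a state in such a sub-block can dodge it by having little amplitude there, and its energy can lie \emph{below} the all-zero-bit gap. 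The paper handles this with a second application of Lem.~\ref{lemma:sumofhamiltonians}, taking $\delta h = h^{o_A}_A(\lambda_A\mathcal{I}) + h^{o_B-1}_B(\mathrm{INIT})$ (so $\|\delta h\|\le(\lambda_A^2+2)\mathcal{E}$) and minimizing $\bra{\psi}\delta h\ket{\psi}$ over the two-dimensional ground space of $h$, with the worst case $B=A$; it is precisely this case that produces the factor $1/4$ in the theorem, whereas the all-zero-bit computation you describe yields $e_1\,p/2$. Your claim that "algebraic manipulation" produces $1/4$ within the zero-bit block alone therefore does not substitute for the missing case analysis.

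Second, your confinement argument for qubits $B>A$ is not adequate as stated. The relevant coefficient is $\mathcal{E}(1-\lambda_{B-1}^3)$ (not $\lambda_B$), and for $B=A+1$ it equals $\mathcal{E}(1-\lambda_A^3)$, which tends to zero as $\lambda_A\to 1$; so "strict confinement" fails near $\lambda = A/M$, and a priori low-lying excited states of $H(\lambda)$ could have qubit $A+1$ migrated past $o_{A+1}-1$. The paper first block-diagonalizes $H(\lambda)$ with respect to the conserved operators $C^\dagger_{B,o_B-1}C_{B,o_B-1}$, $B>A$, disposes of blocks with $B>A+1$ outright, and handles the delicate block with $B=A+1$ in the window $3/4<\lambda_A<1$ by an operator comparison showing its lowest energy is bounded below by the corresponding energy at $\lambda_A=1$, at which point the index $A$ increments and the block is absorbed into the next stage of the schedule. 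Without this reduction (or an equivalent uniform-in-$\lambda$ argument), both your identification of the two-dimensional ground space of $h$ and the subsequent gap bound are incomplete exactly where the bound is most fragile.
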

\begin{proof}
Note that $H(\lambda)$ commutes with $C^\dagger_{B,o_B-1}C_{B,o_B-1}$ for all $B > A$, so it can be diagonalized into $2^{M-A}$ blocks with $C^\dagger_{B,o_B-1}C_{B,o_B-1}$ fixed at either 0 or 1 for each $B > A$.  To find the first excited state, we can limit our attention to the block of $H(\lambda)$ in which all qubits $B > A =  \lfloor M \lambda \rfloor +1$ are quiescent, with $C^\dagger_{B,o_B-1,0}C_{B,o_B-1,0}$ having eigenvalue $1$.

To see this, consider a block in which $C^\dagger_{B,o_B-1}C_{B,o_B-1}$ has eigenvalue $0$ for some particular $B > A+1$.  Note that $H(\lambda)$ contains the term $ \sum_{i=o_B}^{f_B} \mathcal{E}(1-\lambda_{B-1}^3) C^\dagger_{B,i}C_{B,i}$.  Since $B>A+1$, we have $\lambda_{B-1}=\lambda_B=0$, so this sum becomes $\sum_{i=o_B}^{f_A} \mathcal{E} C^\dagger_{B,i}C_{B,i}$.  This pushes the energies of all states in the block up to $\mathcal{E}$ or greater.

This argument needs a little adjustment when we consider the block in which $C^\dagger_{B,o_B-1,0}C_{B,o_B-1,0}$ has eigenvalue $1$ for all quiescent qubits $B > A+1$ but eigenvalue $0$ for a single itinerant qubit $B=A+1$.  The energy of every state in the block is at least $\mathcal{E}(1-\lambda_A^3)$ because of the contribution of $\sum_{i=o_B}^{f_B} \mathcal{E}(1-\lambda_{B-1}^3) C^\dagger_{B,i}C_{B,i} =  \mathcal{E}(1-\lambda_{A}^3)  \sum_{i=o_B}^{f_B}C^\dagger_{B,i}C_{B,i}  = \mathcal{E}(1-\lambda_{A}^3) $ to $H(\lambda)$.  Thus, if $\lambda_A \le 3/4$, every state in the block has energy $(1-(3/4)^3) \mathcal{E}$ or greater.  But what happens when the energy $\mathcal{E}(1-\lambda_A^3)$ becomes small as $\lambda_A$ approaches 1?  In the basis of Def. \ref{def:GSQC}, $\mathcal{E}(1-\lambda_{A}^3) \ge \mathcal{E}(1-\lambda_{A}^3)(C^\dagger_{A,o_A-1}C_{A,o_A-1}+C^\dagger_{A,o_A}C_{A,o_A})$.  When $3/4 < \lambda_A < 1$, $\mathcal{E}(1-\lambda_{A}^3)(C^\dagger_{A,o_A-1}C_{A,o_A-1}+C^\dagger_{A,o_A}C_{A,o_A}) + h_A^{o_A}(\lambda_A \mathcal{I}) > h_A^{ o_A}(\mathcal{I}) = (h_A^{ o_A}(\lambda_A\mathcal{I}) + \sum_{i=o_{A+1}}^{f_{A+1}} \mathcal{E}(1-\lambda_{A}^3) C^\dagger_{A+1,i}C_{A+1,i}|_{\lambda_A=1}$,  so it follows that the smallest energy in the block when $3/4<\lambda_A<1$ is lower bounded by the smallest energy in the block when $\lambda_A =1$.  Thus, we do not have to consider the block when $3/4 < \lambda_A <1$ as long as we study the block when $\lambda_A=1$.  Note, however, that $\lambda_A$ reaches 1 precisely when $\lambda$ reaches the value $\lambda^\prime = A/M$, which triggers the value of $A$ to increment to $A^\prime = \lfloor M \lambda^\prime \rfloor+1 = A+1$.   Because $A$ has incremented to $A^\prime$, the itinerant qubit $B=A+1$ is now qubit $B=A^\prime$, and it follows that our block at $\lambda_A=1$ can be studied by studying the block of $H(\lambda^\prime)$ in which all qubits $B > A^\prime$ are quiescent with $C^\dagger_{B,o_B-1,0}C_{B,o_B-1,0}$ having eigenvalue $1$.

The conclusion is that we can focus, for $0<\lambda<1$, on the block of $H(\lambda)$ in which all qubits $B > A$ are quiescent with $C^\dagger_{B,o_B-1,0}C_{B,o_B-1,0}$ having eigenvalue $1$.  The application of Lem. \ref{lemma:identitygates} allows us to treat all unitary gates in $H(\lambda)$ as identity gates so our block can be further diagonalized into $2^M$ sub-blocks in the basis of bit values.  All of its $2^M$ sub-blocks are identical except for the contribution of the positive definite initialization Hamiltonian $\sum_{A=1}^{M} h^{o_A-1}_{A}(\mathrm{INIT})$.  The (normalized) ground state $\left| \psi_0 (\lambda)\right \rangle$ of Thm. \ref{thm:groundstate} resides in the sub-block in which the initialization Hamiltonian vanishes identically -- the sub-block in which all the bit values are zero.  In part (a) of the following, we apply Lem. \ref{lemma:sumofhamiltonians} to bound the first excited state of $H(\lambda)$ in this sub-block.  In part (b), we apply Lem. \ref{lemma:sumofhamiltonians} to bound the ground state energy of $H(\lambda)$ in the other $2^M-1$ sub-blocks.

(a) Set $\delta h =  h^{o_A}_A(\lambda_A \mathcal{I})$, $h = H(\lambda) -  \delta h$, and focus on the sub-block in which all bits have bit value zero.  The sub-block contains the ground state of $H(\lambda)$, $\left| \psi_0 (\lambda)\right\rangle$.  Because of the subtraction of $\delta h$, $h$ has an additional zero-energy, orthogonal ground state besides $\left| \psi_0 (\lambda)\right\rangle$; we assign it the label $\left| \psi_0^\perp (\lambda)\right\rangle$.  Lem. \ref{lemma:sumofhamiltonians} gives the following lower bound on the lowest energy of $H(\lambda)$  in the sub-block, in the space orthogonal to $\left| \psi_0 (\lambda)\right\rangle$:
\[
\frac{e_1(\lambda) \bra{\psi_0^\perp(\lambda)} h^{o_A}_A(\lambda_A \mathcal{I}) \ket{\psi_0^\perp(\lambda)}}{e_1(\lambda) + \bra{\psi_0^\perp(\lambda)} h^{o_A}_A(\lambda_A \mathcal{I}) \ket{\psi_0^\perp(\lambda)} + \lambda_A^2+1 }.
\]
To evaluate $\bra{\psi_0^\perp(\lambda)} h^{o_A}_A(\lambda_A \mathcal{I}) \ket{\psi_0^\perp(\lambda)}$, note that $\ket{\psi_0(\lambda)} = \kappa_0\left[ \lambda_A(\lambda) + (1-\lambda_A(\lambda)) c^\dagger_{A,o_A-1,0}c_{A,o_A-1,0}\right] \ket{\psi_0(A/M)}$
since $\ket{\psi_0(A/M)}$ corresponds to the case $\lambda=A/M$, $\lambda_A =  1$.  The orthogonal ground state is $\ket{\psi_0^\perp(\lambda)} = \kappa_0^\perp \left[ 1 + (\mu-1) c^\dagger_{A,o_A-1,0}c_{A,o_A-1,0}\right] \ket{\psi_0(A/M)}$.  Orthogonality implies $\mu = \lambda_A(1 - 1/\bra{\psi_0(A/M)}c^\dagger_{A,o_A-1,0}c_{A,o_A-1,0}\ket{\psi_0(A/M)})$ while the normalization constants are $\kappa_0 = 1/\sqrt{\lambda_A^2+(1-\lambda_A^2) \bra{\psi_0(A/M)}c^\dagger_{A,o_A-1,0}c_{A,o_A-1,0}\ket{\psi_0(A/M)}}$ and $\kappa_0^\perp = 1/\sqrt{1+(\mu^2-1) \bra{\psi_0(A/M)}c^\dagger_{A,o_A-1,0}c_{A,o_A-1,0}\ket{\psi_0(A/M)}}$.

We find that $\bra{\psi_0^\perp(\lambda)} h^{o_A}_A(\lambda_A \mathcal{I}) \ket{\psi_0^\perp(\lambda)} = \lambda_A ^2+ \bra{\psi_0(A/M)}c^\dagger_{A,o_A-1,0}c_{A,o_A-1,0}\ket{\psi_0(A/M)}/(1-\bra{\psi_0(A/M)}c^\dagger_{A,o_A-1,0}c_{A,o_A-1,0}\ket{\psi_0(A/M)})$.  Inserting these expressions into our lower bound, and simplifying under the assumption $e_1(\lambda)<1$, we find a bound $e_1(\lambda)\bra{\psi_0(A/M)}c^\dagger_{A,o_A-1,0}c_{A,o_A-1,0}\ket{\psi_0(A/M)}/2$. 

(b) Consider the other $2^M-1$ sub-blocks in which 1 or more qubits have bit value 1. The initialization Hamiltonian makes a contribution $\mathcal{E} c^\dagger_{B,o_B-1,1}c_{B,o_B-1,1}$ to the sub-block for each qubit with bit value 1.  This positive semi-definite contribution is smallest in sub-blocks for which a single qubit $B$ has bit value equal to 1; if we go to a sub-block that includes another positive semi-definite term, we can only push the energy of its ground state higher.  Thus, we are justified in focusing on the case in which the initialization Hamiltonian contributes $\mathcal{E} c^\dagger_{B,o_B-1,1}c_{B,o_B-1,1}$ to the block for a single qubit $B$.

Set $\delta h =  h^{o_A}_A(\lambda_A \mathcal{I}) +h^{o_B-1}_{B}(\mathrm{INIT})$ and $h = H(\lambda) -  \delta h$.  Lem. \ref{lemma:sumofhamiltonians} gives the following lower bound on the lowest energy of $H(\lambda)$ in this block of interest:
\[
\min_{\ket{\psi} = \cos \theta \ket{\psi_0(\lambda)} +\sin \theta \ket{\psi_0^\perp(\lambda)}} \frac{e_1(\lambda) \bra{\psi} \delta h \ket{\psi}}{e_1(\lambda) +  \bra{\psi} \delta h \ket{\psi} +  \lambda_A^2+2 }.
\]
We minimize over $\theta$ to find $\bra{\psi} \delta h \ket{\psi}$ and obtain the smallest result when $A=B$.  Assuming that $e_1(\lambda)<1$, we derive a lower bound  $e_1(\lambda) \bra{\psi_0(A/M)}c^\dagger_{A,o_A-1,0}c_{A,o_A-1,0}\ket{\psi_0(A/M)}/4$, which is smaller than the bound in (a).
\end{proof}

We now bound $g_{min}$ of Thm. \ref{thm:Hlambdagap} by bounding the quantities $e_1(\lambda)$ and $\bra{ \psi_0 (\lambda)} c^\dagger_{A,o_A-1,0}c_{A,o_A-1,0}  \ket{ \psi_0 (\lambda)}$. 

\begin{Lemma}
Let $H(\lambda)$ be a GSQC Hamiltonian (Def. \ref{def:GSQC}) of a circuit that conforms with the one-dimensional configuration of Def. \ref{def:1d}.  Then  $\bra{ \psi_0 (\lambda)} c^\dagger_{A,o_A-1,0}c_{A,o_A-1,0}  \ket{ \psi_0 (\lambda)} \ge 1/N$, where $A$ is the integer satisfying $\frac{A-1}{M} \le  \lambda < \frac{A}{M} $ (with $A=M$ if $\lambda =1$).
\label{lemma:1overN}
\end{Lemma}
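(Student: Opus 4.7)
The plan is to translate the matrix element $\bra{\psi_0(\lambda)} c^\dagger_{A,o_A-1,0} c_{A,o_A-1,0} \ket{\psi_0(\lambda)}$ into a marginal probability in a uniform combinatorial measure, then exploit the regularity of the one-dimensional configuration (Def.~\ref{def:1d}) to bound that probability below by $1/N$.

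First, Lem.~\ref{lemma:identitygates} lets us replace every gate in $H(\lambda)$ by an identity without affecting the ground-state occupation numbers. The resulting ground state $\ket{\psi_0(\lambda)}$ lives in the all-zero-bit sub-block, and within that sub-block $\ket{\psi_0(\lambda)}$ is a positive-coefficient superposition of fermionic position configurations $\otimes_A c^\dagger_{A,i_A,0}\ket{\mathrm{vac}}$: the shift terms $h^i_A(\mathcal{I})$ and $h^{i,j}_{A,B}(\mathcal{I}\otimes\mathcal{I})$ equate the coefficients of consecutive configurations, and the penalty $h^{i,j}_{A,B}(\mathrm{P})$ annihilates time-invalid ones. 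Next, I use the explicit decomposition from the proof of Thm.~\ref{thm:Hlambdagap},
\[
\ket{\psi_0(\lambda)} = \kappa_0\bigl[\lambda_A + (1-\lambda_A) c^\dagger_{A,o_A-1,0} c_{A,o_A-1,0}\bigr]\ket{\psi_0(A/M)},
\]
combined with the fermionic identity $c\,c^\dagger c = c$, to obtain $c_{A,o_A-1,0}\ket{\psi_0(\lambda)} = \kappa_0\, c_{A,o_A-1,0}\ket{\psi_0(A/M)}$. Taking norms squared and inserting the normalization constant computed in Thm.~\ref{thm:Hlambdagap} gives
\[
\bra{\psi_0(\lambda)} c^\dagger_{A,o_A-1,0} c_{A,o_A-1,0}\ket{\psi_0(\lambda)} = \frac{p_A}{\lambda_A^2 + (1-\lambda_A^2)\,p_A} \;\ge\; p_A,
\]
where $p_A := \bra{\psi_0(A/M)} c^\dagger_{A,o_A-1,0} c_{A,o_A-1,0}\ket{\psi_0(A/M)}$. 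It therefore suffices to prove $p_A \ge 1/N$.

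At $\lambda = A/M$ every $\lambda_B$ with $B \le A$ has saturated at $1$ while every $\lambda_B$ with $B > A$ is still $0$, so qubits $B > A$ are pinned to $i_B = o_B-1$, and $\ket{\psi_0(A/M)}$ assigns equal coefficient to every time-valid configuration of $(i_1,\dots,i_A)$. Thus $p_A = Z_0/Z$, where $Z$ is the total number of such time-valid configurations and $Z_0$ is the subcount with $i_A = o_A - 1$. I then aim to show that the marginal distribution of $i_A$ under the uniform measure on these configurations is itself uniform over its support, whose cardinality is bounded by $f_A - o_A + 2 \le N$; this immediately yields $p_A \ge 1/N$.

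The main obstacle is establishing this marginal uniformity. The argument exploits the period-$4$ structure of 2-qubit gates in Def.~\ref{def:1d}: gates between any nearest-neighbor pair $(B, B+1)$ occur every fourth time step, partitioning each qubit's allowed time-step range into bins of width $4$ in the interior (width $2$ only at the boundaries where qubits come online or retire). Time-validity forces neighboring qubits to lie in corresponding bins, so a valid configuration is specified by a compatible assignment of bin indices together with a within-bin position for each qubit, and within-bin counts factorize as a product of bin widths. I would prove the uniformity by induction on $A$ (equivalently, by a transfer-matrix recursion sweeping from qubit $1$ outward), showing that $|\{\text{configurations with } i_A = k\}|$ depends on $k$ only through its bin index and that the width-$2$ boundary bins at $i_A = o_A - 1$ reduce $Z_0$ and the corresponding boundary slices of $Z$ in matched ratio. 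The delicate bookkeeping at these boundary bins --- where the extent of qubit $A$'s time-step range and the locations of its first and last 2-qubit gates must be cross-checked against the adjacent qubits' corresponding boundaries --- is the technical difficulty of the argument.
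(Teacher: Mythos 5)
Your reduction step is fine: using the decomposition $\ket{\psi_0(\lambda)} = \kappa_0\bigl[\lambda_A + (1-\lambda_A)\,c^\dagger_{A,o_A-1,0}c_{A,o_A-1,0}\bigr]\ket{\psi_0(A/M)}$ to get $\bra{\psi_0(\lambda)}c^\dagger_{A,o_A-1,0}c_{A,o_A-1,0}\ket{\psi_0(\lambda)} \ge p_A$ is correct and is essentially an equivalent repackaging of what the paper does by carrying the weights $\lambda_A^2$ through the count directly. The genuine gap is that the combinatorial heart of the lemma --- the claim that the marginal of $i_A$ under the uniform measure on time-valid configurations is uniform over its support --- is exactly what you leave unproven: you describe a bin/transfer-matrix induction "from qubit $1$ outward," flag "width-2 boundary bins," and explicitly defer the "delicate bookkeeping" as the technical difficulty. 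That deferred step \emph{is} the lemma; without it you have only reformulated the statement as $Z_0/Z \ge 1/N$.

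The paper closes this by conditioning in the opposite (downward) direction, where the count becomes a one-line observation rather than an induction with boundary cases. For $A<M$, qubit $A+1$ is pinned at $o_{A+1}-1$, so the next 2-qubit gate it shares with qubit $A$ confines $i_A$ to exactly the $4$ values $o_A-1,\dots,o_A+2$; then, for \emph{each} such value of $i_A$, qubit $A-1$ has exactly $4$ admissible time steps (which $4$ depends on $i_A$, but the number does not), and so on down to qubit $1$ --- the extra identity gates prescribed in Def.~\ref{def:1d} at the start and end of each qubit's range are there precisely so that this conditional count is exactly $4$ everywhere, so the "width-2 boundary bins" you worry about never appear in this direction. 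Hence the number of valid configurations of qubits $1,\dots,A-1$ is $4^{A-1}$ independently of $i_A$, the marginal is uniform up to the factor $\lambda_A^2$ on $i_A\ge o_A$, and one gets $1/(1+3\lambda_A^2)\ge 1/4$ for $A<M$ and, for $A=M$ (where $i_M$ ranges over the $N-3$ values $2,\dots,N-2$), $1/(1+(N-4)\lambda_M^2)\ge 1/(N-3) > 1/N$. Note also that your plan does not exploit the pinning of qubit $A+1$ on $i_A$ itself (which is what makes the $A<M$ case give the much stronger $1/4$); if you insist on sweeping upward from qubit $1$ you would have to prove position-independence of the partial counts at every boundary, which is the bookkeeping you did not carry out.
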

\begin{proof}
Because it has zero energy, the ground state must have the same probability density on all time-valid choices of $\{i_1,\dots,i_M\}$.  The claim of the lemma therefore comports with the naive expectation that each qubit occupies each time step with probability at least $\sim 1/N$.  A careful proof follows.

Suppose that $A < M$.  Then, in the ground state $\ket{ \psi_0 (\lambda)}$, qubits $A+1,\dots,M$ are localized at time steps $o_{A+1}-1,\dots,o_M-1$ respectively.    Because qubit $A+1$ resides inertly at position $o_{A+1}-1$, there are only $4$ valid time step values $o_A - 1,\dots,o_A+2$ available to qubit $A$ before it finds itself unable to traverse a 2-qubit gate with qubit $A+1$.  For each valid time step, there are $4$ valid time step values available to qubit $A-1$.  For each of those time step values, there are $4$ valid time step values available to qubit $A-2$.  This pattern continues down to qubit $1$.  Introducing a normalization constant $\kappa_0$, we have $\bra{ \psi_0 (\lambda)} c^\dagger_{A,o_A-1,0}c_{A,o_A-1,0}  \ket{ \psi_0 (\lambda)} = 4^{A-1} \kappa_0^2$, while $\bra{ \psi_0 (\lambda)} c^\dagger_{A,o_A,0}c_{A,o_A,0}  \ket{ \psi_0 (\lambda)}=\bra{ \psi_0 (\lambda)} c^\dagger_{A,o_A+1,0}c_{A,o_A+1,0}  \ket{ \psi_0 (\lambda)}=\bra{ \psi_0 (\lambda)} c^\dagger_{A,o_A+2,0}c_{A,o_A+2,0}  \ket{ \psi_0 (\lambda)} = \lambda_A^2 4^{A-1} \kappa_0^2$.  Since the sum of these matrix elements must be 1, we can solve for $\kappa_0$ and conclude that $\bra{ \psi_0 (\lambda)} c^\dagger_{A,o_A-1,0}c_{A,o_A-1,0}  \ket{ \psi_0 (\lambda)} = 1/(1+3\lambda_A^2) \ge 1/4 > 1/N$.

Next suppose that $A=M$.  In this case, in the ground state $\ket{ \psi_0 (\lambda)}$, qubit $A$ visits all $N-3$ time steps from $o_M-1=2$ to $f_M = N-2$.  Again, $\bra{ \psi_0 (\lambda)} c^\dagger_{A,o_A-1,0}c_{A,o_A-1,0}  \ket{ \psi_0 (\lambda)} = 4^{A-1} \kappa_0^2$, while $\bra{ \psi_0 (\lambda)} c^\dagger_{A,o_A,0}c_{A,o_A,0}  \ket{ \psi_0 (\lambda)} = \dots = \bra{ \psi_0 (\lambda)} c^\dagger_{A,f_A,0}c_{A,f_A,0}  \ket{ \psi_0 (\lambda)} = \lambda_A^2 4^{A-1} \kappa_0^2$.  We conclude that $\bra{ \psi_0 (\lambda)} c^\dagger_{A,o_A-1,0}c_{A,o_A-1,0}  \ket{ \psi_0 (\lambda)} = 1/(1+(N-4) \lambda_A^2) \ge 1/(N-3) > 1/N$. 
\end{proof}

To bound $g_{min}$ using Thm. \ref{thm:Hlambdagap}, our remaining task is to lower bound the quantity $e_1(\lambda)$.  We assume that Lem. \ref{lemma:identitygates} has been applied to $H(\lambda)$, and our attention is focused on the sub-block of $H(\lambda) -  h^{o_A}_A(\lambda_A \mathcal{I})$ with all zero bit values.  The operator $c^\dagger_{A,o_A-1,0}c_{A,o_A-1,0}$ commutes with this Hamiltonian.  We will argue below that that the smallest value of $e_1(\lambda)$ occurs when $\lambda = 1$ and $c^\dagger_{A,o_A-1,0}c_{A,o_A-1,0} = c^\dagger_{M,o_M-1,0}c_{M,o_M-1,0}$ has eigenvalue 0.  To analyze this case, it is useful to associate a graph with $H(1) -  h^{o_M}_M(\mathcal{I})$ as follows.  The vertices of the graph lie in one-to-one correspondence with time-valid points ${\bf i} = (i_1,i_2,\dots,i_M )$ for which $i_M \ge o_M$ and all penalty Hamiltonians of the form (\ref{eq:penalty}) vanish \endnote{To compute $e_1(\lambda)$, we need not consider those very high-energy states which the penalty Hamiltonian does not annihilate.}.  The set of vertices is called ${\cal V}$.  There is an edge connecting any two vertices between which there is a non-zero Hamiltonian matrix element.  All edges have equal weight because every such matrix element has the same value $-\mathcal{E}$ for $H(1) -  h^{o_M}_M( \mathcal{I})$.  Two vertices ${\bf i} $ and ${\bf i} ^\prime$ connected by an edge are called adjacent; we write ${\bf i}  \sim {\bf i} ^\prime$ to express adjacency.

Any state vector $\left| \psi \right\rangle$ in the Hilbert space of interest can be described in terms of the function $\psi(i_1,\dots,i_M) = \left\langle \mathrm{vac} \right| \otimes_{A=1}^M c^\dagger_{A,i_A,0} \left| \psi \right\rangle$.  If $\left| \psi \right\rangle$ is annihilated by the penalty Hamiltonian (\ref{eq:penalty}), then $\psi(i_1,\dots,i_M)$ will be supported on the vertices of the graph.  The expectation value of $H(1) -  h^{o_M}_M( \mathcal{I})$ in the state is given by

\begin{equation}
E_{\psi} =  \left\langle \psi \right| H(1) -  h^{o_M}_M( \mathcal{I}) \left| \psi \right\rangle = \mathcal{E}\sum_{{\bf i}  \sim {\bf i} ^\prime} \left| \psi({\bf i} ) - \psi({\bf i} ^\prime)\right|^2/\sum_{{\bf i}  \in {\cal V}} \left| \psi({\bf i} )\right|^2.
\label{eq:E}
\end{equation}
The state with smallest $E_{\psi}$ is $\psi_0({\bf i} ) = 1/\sqrt{\|{\cal V}\|}$, a constant function with $E_{\psi_0}= 0$.  We are interested in lower bounding the smallest energy $E_{\psi}$ of any state orthogonal to $\psi_0({\bf i} )$ .  The following lemma, closely related to a result in \cite{Mohar91}, is our workhorse.

\begin{Lemma}[Paths]

Consider a graph with vertices ${\cal V}$.  Let $\|{\cal V}\|$ be even.  Let $\phi({\bf i})$ be a real valued function, orthogonal to $\psi_0({\bf i})$ so that $\sum_{{\bf i} \in {\cal V}} \phi({\bf i}) = 0$.  We wish to lower bound $E_{\phi}$, the expectation value (\ref{eq:E}), for $\phi({\bf i})$.

Suppose that there is a positive integer $T$ large enough to permit the existence of a path $P({\bf i},t): {\cal V} \otimes \left[0,1,\dots , T\right] \rightarrow {\cal V}$ with the following properties:
\begin{enumerate}
\item[(I)] The path must begin as the identity: $P({\bf i},0) = {\bf i}$.
\item[(II)] The path must advance by at most an edge at a time: $P({\bf i},t) = P({\bf i},t+1)$ or $P({\bf i},t)\sim P({\bf i},t+1)$.
\item[(III)] The path must produce a one-to-one mapping to vertices at the final time $T$: $P({\bf i},T)=P({\bf i}^\prime,T) \Leftrightarrow {\bf i} = {\bf i}^\prime$.
\item[(IV)] Set $\Phi$ equal to the median value of $\phi({\bf i})$ and let $\psi({\bf i}) = \phi({\bf i}) - \Phi$.  Because $\Phi$ is the median value of $\phi({\bf i})$, we can partition $ {\cal V}$ into a set ${\cal P}$ in which $\psi({\bf i})$ is non-negative and a disjoint set ${\cal N}$ in which $\psi({\bf i})$ is non-positive, with $\|{\cal P}\| = \|{\cal N}\| = \|{\cal V}\|/2$. The path must allow a second partition of ${\cal V}$ into pairs of adjacent vertices ${\bf i} \sim {\bf i}^\prime$ such that either $P({\bf i},T) \in {\cal P}  \land P({\bf i}^\prime,T) \in {\cal N}$ or $P({\bf i},T) \in {\cal N} \land P({\bf i}^\prime,T) \in {\cal P}$.
\end{enumerate}

Suppose that $B = max_{{\bf i} \sim {\bf i}^\prime} B_{{\bf i},{\bf i}^\prime}$, where, for any ${\bf i} \sim {\bf i}^\prime$, $B_{{\bf i},{\bf i}^\prime}$ gives the number of instances in which $P$ employs the edge between ${\bf i}$ and ${\bf i}^\prime$.  Explicitly, $B_{{\bf i},{\bf i}^\prime} = \sum_{{\bf i}^{\prime\prime} \in {\cal V}} \sum_{t=0}^{T-1} \left( \delta_{P({\bf i}^{\prime\prime},t),{\bf i}} \delta_{P({\bf i}^{\prime\prime},t+1),{\bf i}^\prime}+ \delta_{P({\bf i}^{\prime\prime},t+1),{\bf i}} \delta_{P({\bf i}^{\prime\prime},t),{\bf i}^\prime}\right)$ where $\delta_{{\bf i},{\bf i}^{\prime\prime}} = 1 \Leftrightarrow {\bf i} = {\bf i}^{\prime\prime}$.  Then, $E_{\phi} \ge 2\mathcal{E}/(2T+1)(2B+1)$.
\label{lemma:path}
\end{Lemma}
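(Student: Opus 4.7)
Since $E_\phi$ is (up to the overall factor $\mathcal{E}$) the Rayleigh quotient of the unweighted graph Laplacian, the lemma reduces to a Poincar\'e-type inequality
\[
\sum_{{\bf i}\in{\cal V}}\phi({\bf i})^2 \;\le\; \tfrac{1}{2}(2T+1)(2B+1)\sum_{{\bf i}\sim{\bf i}'}|\phi({\bf i})-\phi({\bf i}')|^2,
\]
which I would establish by a canonical-paths argument in the spirit of \cite{Mohar91}. The first move is to shift to the median-adjusted function $\psi = \phi - \Phi$. Because $\sum_{\bf i}\phi=0$, a direct expansion gives $\sum_{\bf i}\psi^2 = \sum_{\bf i}\phi^2 + \|{\cal V}\|\Phi^2 \ge \sum_{\bf i}\phi^2$, while $|\phi({\bf i})-\phi({\bf i}')| = |\psi({\bf i})-\psi({\bf i}')|$ is unchanged; so it suffices to prove the bound with $\psi$ in place of $\phi$.

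Next I would use property (III) to regroup
$\sum_{\bf j}\psi({\bf j})^2 = \sum_k [\psi({\bf a}_k)^2+\psi({\bf b}_k)^2]$
with ${\bf a}_k=P({\bf i}_k,T)$ and ${\bf b}_k=P({\bf i}_k',T)$, the index $k$ running over the pairs from (IV). Property (IV) places ${\bf a}_k,{\bf b}_k$ in opposite sign sets, so $\psi({\bf a}_k)\psi({\bf b}_k)\le 0$ and hence $\psi({\bf a}_k)^2+\psi({\bf b}_k)^2 \le (\psi({\bf a}_k)-\psi({\bf b}_k))^2$. To turn this into a sum of edge-differences, form the canonical walk that runs backward along $P({\bf i}_k,\cdot)$ from ${\bf a}_k$ to ${\bf i}_k$, crosses the middle edge ${\bf i}_k\sim{\bf i}_k'$ supplied by (IV), then runs forward along $P({\bf i}_k',\cdot)$ to ${\bf b}_k$. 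Property (II) makes consecutive vertices either equal or adjacent, so after dropping stationary steps the walk has length at most $2T+1$, and Cauchy--Schwarz yields
\[
(\psi({\bf a}_k)-\psi({\bf b}_k))^2 \;\le\; (2T+1)\sum_{e\in\text{walk}_k}|\psi({\bf i})-\psi({\bf i}')|^2.
\]

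Summing over $k$ reorders the right-hand side as $(2T+1)\sum_e N_e\,|\psi({\bf i})-\psi({\bf i}')|^2$, where $N_e$ counts the number of pair-walks using edge $e$. This is where the hypotheses pay off: as $k$ ranges, $\{{\bf i}_k,{\bf i}_k'\}$ exhausts ${\cal V}$ exactly once, so the contribution of all $P$-segments to $N_e$ totals precisely $B_e\le B$; and the pairing being a perfect matching forces distinct pairs to have distinct middle edges, adding at most one to any $N_e$. A refined accounting that weights the single middle-edge step against the at-most-$2T$ $P$-segment steps (via a weighted Cauchy--Schwarz, or equivalently by separately estimating the two halves of the walk and the middle edge) sharpens the crude bound $N_e\le B+1$ into the $(2B+1)/2$ that the lemma requires, yielding the stated $E_\phi\ge 2\mathcal{E}/[(2T+1)(2B+1)]$.

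The main obstacle is precisely this congestion accounting. Properties (III) and (IV) are engineered so that $N_e$ is controlled by the single parameter $B$ rather than by a combinatorial factor scaling with $\|{\cal V}\|$: bijectivity of $P(\cdot,T)$ makes the $P$-segment contributions aggregate cleanly into $B_e$, and the matching structure keeps middle edges from colliding. Extracting the sharp constant $2B+1$, rather than the $2(B+1)$ one obtains from treating the middle edge and $P$-segments on equal footing, is the only calculation that requires real care; everything else is bookkeeping built around the Cauchy--Schwarz inequality.
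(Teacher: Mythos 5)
Your proposal follows essentially the same route as the paper's proof: subtract the median, use (III) to transport $\sum_{\bf i}\psi({\bf i})^2$ to the images $P({\bf i},T)$, use the sign opposition of (IV) to bound $\psi({\bf a}_k)^2+\psi({\bf b}_k)^2 \le (\psi({\bf a}_k)-\psi({\bf b}_k))^2$, telescope along the concatenated walk (backward along one path, across the matching edge, forward along the partner's path), apply Cauchy--Schwarz with the factor $2T+1$, and finish with a congestion count controlled by $B$. The one place where you have a genuine gap is exactly the step you flag at the end: summing over the $\|{\cal V}\|/2$ pairs, your (correct) accounting gives per-edge congestion $N_e \le B_e+1$, hence $E_\phi \ge \mathcal{E}/[(2T+1)(B+1)] = 2\mathcal{E}/[(2T+1)(2B+2)]$, and you then assert without proof that a ``refined accounting'' or weighted Cauchy--Schwarz upgrades this to the stated $2\mathcal{E}/[(2T+1)(2B+1)]$. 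That upgrade is not routine: if you weight the single matching-edge step by $w$ and the $2T$ path steps by $1$, the optimal choice $w=\sqrt{2T/B}$ replaces $(2T+1)(2B+1)/2$ by $(\sqrt{2TB}+1)^2$, which is smaller only when $T+B \ge 2\sqrt{2TB}+1/2$ -- fine in the regime $T\gg B$ relevant to the paper's applications, but false in general (e.g.\ at $T=B$). So the promised uniform sharpening is not established as stated.

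For comparison, the paper reaches $2B+1$ by a slightly different bookkeeping: it keeps a factor of $2$ on the left-hand side ($2\sum_{\bf i}|\psi({\bf i})|^2$), sums the telescoped inequality over all vertices rather than over pairs, so the path segments enter with coefficient $2B_{{\bf i},{\bf i}^\prime}$, and then absorbs the matching-edge contributions into a single $\sum_{{\bf i}\sim{\bf i}^\prime}|\psi({\bf i})-\psi({\bf i}^\prime)|^2$, giving the per-edge coefficient $2B_{{\bf i},{\bf i}^\prime}+1$. (A literal count in that per-vertex sum actually puts each matching edge in twice, so the scrupulous constant from this style of argument is $2B+2$ there as well; the distinction between $2B+1$ and $2B+2$ is immaterial for every application in the paper, where $B$ grows with $N$.) Your argument is therefore sound and essentially identical to the paper's in structure; to close the gap you should either adopt the vertex-summed accounting, or simply prove and use the bound $E_\phi \ge 2\mathcal{E}/[(2T+1)(2B+2)]$, which your pair-based count already delivers and which suffices for all downstream results, rather than leaning on an unproven weighted Cauchy--Schwarz step.
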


\begin{proof}

Note that 
\[
E_{\psi} = \frac{\mathcal{E}\sum_{{\bf i} \sim {\bf i}^\prime} \left| \psi({\bf i}) - \psi({\bf i}^\prime)\right|^2}{\sum_{{\bf i} \in {\cal V}} \left| \psi({\bf i})\right|^2} = \frac{\mathcal{E}\sum_{{\bf i} \sim {\bf i}^\prime} \left| \phi({\bf i}) - \phi({\bf i}^\prime)\right|^2}{\sum_{{\bf i} \in {\cal V}} \left| \phi({\bf i})\right|^2 - \Phi (\phi({\bf i}) +\phi^*({\bf i})) +  \left|\Phi\right|^2} = \frac{\mathcal{E}\sum_{{\bf i} \sim {\bf i}^\prime} \left| \phi({\bf i}) - \phi({\bf i}^\prime)\right|^2}{\sum_{{\bf i} \in {\cal V}} \left| \phi({\bf i})\right|^2 +  \left|\Phi\right|^2} \le E_{\phi}.
\]
Thus,a lower bound for $E_{\psi}$ also lower bounds $E_{\phi}$.

Note that (III) implies
\[
2 \sum_{{\bf i} \in {\cal V}} \left| \psi({\bf i})\right|^2 =   2 \sum_{{\bf i} \in {\cal V}} \left| \psi(P({\bf i},T))\right|^2  =   \sum_{{\bf i} \in {\cal V}} \left| \psi(P({\bf i},T))\right|^2 +  \left| \psi(P({\bf i}^\prime({\bf i}) ,T))\right|^2
\]
where ${\bf i}^\prime({\bf i})$ is the partner of ${\bf i}$ in the second partition described in (IV).  It follows from (IV) that $\psi(P({\bf i},T)) \psi(P({\bf i}^\prime({\bf i}),T)) \le 0$, so
\[
2 \sum_{{\bf i} \in {\cal V}} \left| \psi({\bf i})\right|^2 \le \sum_{{\bf i} \in {\cal V}} \left| \psi(P({\bf i},T)) -  \psi(P({\bf i}^\prime({\bf i}),T))\right|^2.
\]
Each term on the right hand side expands into a telescoping sum
\begin{eqnarray*}
\lefteqn{\left| \psi(P({\bf i},T)) -  \psi(P({\bf i}^\prime,T))\right|^2}\\
&= &\left| \psi(P({\bf i},T)) -  \psi(P({\bf i},T-1)) + \dots +   \psi(P({\bf i},1)) -  \psi(P({\bf i},0)) \right. \\
& & \hspace{1.0in} + \psi(P({\bf i},0)) -  \psi(P({\bf i}^\prime,0)) \\
& & + \left. \psi(P({\bf i}^\prime,0)) -  \psi(P({\bf i}^\prime,1)) + \dots + \psi(P({\bf i}^\prime,T-1)) -  \psi(P({\bf i}^\prime,T)) \right|^2.\\
& \le & (2T+1) \left( \left| \psi(P({\bf i},T)) -  \psi(P({\bf i},T-1))\right|^2 + \dots +  \left| \psi(P({\bf i},1)) -  \psi(P({\bf i},0))\right|^2 \right.\\
& & \hspace{1.0in} +  \left| \psi(P({\bf i},0)) -  \psi(P({\bf i}^\prime,0))\right|^2 \\
& & \left. + \left| \psi(P({\bf i}^\prime,T)) -  \psi(P({\bf i}^\prime,T-1))\right|^2 + \dots +  \left| \psi(P({\bf i}^\prime,1)) -  \psi(P({\bf i}^\prime,0))\right|^2 \right)
\end{eqnarray*}
where we have suppressed the dependence of ${\bf i}^\prime({\bf i})$ to lighten the notation.

Using (I), we can write $\left| \psi(P({\bf i},0)) -  \psi(P({\bf i}^\prime,0))\right|^2 = \left| \psi({\bf i}) - \psi({\bf i}^\prime)\right|^2$, so
\begin{eqnarray*}
2 \sum_{{\bf i} \in {\cal V}} \left| \psi({\bf i})\right|^2 &  \le & (2T+1) \left( \sum_{{\bf i} \in {\cal V}} \sum_{t=0}^{T-1} \left| \psi(P({\bf i},t)) -  \psi(P({\bf i},t+1))\right|^2  \right. \\
&& \left. \hspace{0.75in} +  \sum_{{\bf i} \in {\cal V}} \sum_{t=0}^{T-1} \left| \psi(P({\bf i}^{\prime}({\bf i}),t)) -  \psi(P({\bf i}^{\prime}({\bf i}),t+1))\right|^2 + \sum_{{\bf i} \in {\cal V}} \left| \psi({\bf i}) - \psi({\bf i}^\prime)\right|^2 \right) \\
& = & (2T+1) \left(2 \sum_{{\bf i}^{\prime\prime} \in {\cal V}} \sum_{t=0}^{T-1} \left| \psi(P({\bf i}^{\prime\prime},t)) -  \psi(P({\bf i}^{\prime\prime},t+1))\right|^2 + \sum_{{\bf i} \in {\cal V}} \left| \psi({\bf i}) - \psi({\bf i}^\prime)\right|^2 \right).
\end{eqnarray*}
Rewriting the first sum and adding terms to the second sum, we use (II) to obtain
\begin{eqnarray*}
2 \sum_{{\bf i} \in {\cal V}} \left| \psi({\bf i})\right|^2  & \le & (2T+1) \left(2 \sum_{{\bf i}^{\prime\prime} \in {\cal V}} \sum_{t=0}^{T-1} \sum_{{\bf i} \sim  {\bf i}^\prime} \left(\delta_{P({\bf i}^{\prime\prime},t),{\bf i}} \delta_{P({\bf i}^{\prime\prime},t+1),{\bf i}^\prime}+ \delta_{P({\bf i}^{\prime\prime},t+1),{\bf i}} \delta_{P({\bf i}^{\prime\prime},t),{\bf i}^\prime}\right) \left| \psi({\bf i}) -  \psi({\bf i}^\prime)\right|^2 \right. \\
&&\left. \hspace{0.75in} + \sum_{{\bf i} \sim {\bf i}^\prime} \left| \psi({\bf i}) - \psi({\bf i}^\prime)\right|^2 \right)  \\
& = & (2T+1)  \sum_{{\bf i} \sim  {\bf i}^\prime}  (2B_{{\bf i},{\bf i}^\prime}+1) \left| \psi({\bf i}) -  \psi({\bf i}^\prime)\right|^2\\
& \le & (2T+1) (2B+1)   \sum_{{\bf i} \sim  {\bf i}^\prime} \left| \psi({\bf i}) -  \psi({\bf i}^\prime)\right|^2.
\end{eqnarray*}
Dividing the initial expression and the final expression by $(2T+1)(2B+1) \sum_{{\bf i} \in {\cal V}} \left| \psi({\bf i})\right|^2$ yields the desired inequality.
\end{proof}

We have written Lem. \ref{lemma:path} so that it only applies to graphs that have $\|{\cal V}\|$ even, and moreover only to graphs that can be partitioned into pairs of adjacent vertices.  These assumptions are almost certainly inessential, but they simplify the proof of the Lemma.

In the Appendix, we apply this lemma to examples of increasing complexity. The proof for our graph of interest, the one-dimensional configuration (Def. \ref{def:1d}) depicted in Fig. \ref{fig:nanocrystalarray1dwith1qubitgates}, references these examples.

\begin{theorem}
For a circuit with a one-dimensional configuration of qubits (Def. \ref{def:1d}) the minimum gap is bounded by
\[
g_{min} \ge  \mathcal{E}/(12M(N+2M-4)+1)(4(N+4M-4)+1)(2N).
\]
\label{thm:one-dimensional}
\end{theorem}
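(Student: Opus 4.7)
My strategy is to chain Theorem~\ref{thm:Hlambdagap}, Lemma~\ref{lemma:1overN}, and Lemma~\ref{lemma:path}.  Theorem~\ref{thm:Hlambdagap} already reduces $g_{\min}$ to $\tfrac{1}{4} e_1(\lambda) \bra{\psi_0(\lambda)} c^\dagger_{A,o_A-1,0} c_{A,o_A-1,0} \ket{\psi_0(\lambda)}$ minimized over $\lambda$, and Lemma~\ref{lemma:1overN} bounds the second factor below by $1/N$.  It therefore suffices to prove $e_1(\lambda) \ge 2\mathcal{E}/[(2T+1)(2B+1)]$ with $T = 2(N+4M-4)$ and $B = 6M(N+2M-4)$, because then $g_{\min} \ge \mathcal{E}/[2N(2T+1)(2B+1)]$ reproduces the stated denominator exactly.

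The first reduction is to identify which $(\lambda,A)$ is extremal for $e_1(\lambda)$.  For $\lambda\in[(A-1)/M,A/M)$ every qubit $A'>A$ is pinned at its origin time step by the penalty $\mathcal{E}(1-\lambda_{A'-1}^3)\sum_i C^\dagger_{A',i}C_{A',i}$ appearing in $H(\lambda)$, so on the all-zero-bit sub-block the operator $H(\lambda)-h^{o_A}_A(\lambda_A\mathcal{I})$ only moves the first $A$ qubits on a restricted subgraph.  A direct Dirichlet-form comparison of the form (\ref{eq:E}) shows that these subgraphs grow monotonically as $\lambda$ increases (new vertices and edges become available as each new qubit is activated) while the residual penalties only raise excited energies; consequently $e_1(\lambda)$ attains its minimum at $\lambda=1$, $A=M$ with qubit $M$ itinerant.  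The graph ${\cal V}$ of interest is then the set of all time-valid configurations $(i_1,\dots,i_M)$ with $i_M\ge o_M$ that are annihilated by every penalty (\ref{eq:penalty}), with edges given by the nonzero matrix elements of $H(1)-h^{o_M}_M(\mathcal{I})$.

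The core construction is the path $P({\bf i},t)$ required by Lemma~\ref{lemma:path}.  I would assemble it from the single- and two-qubit fragments treated in the Appendix.  Each fragment advances one qubit by a single legal step, dragging its brick-work partner through a shared 2-qubit gate when necessary so that the penalty (\ref{eq:penalty}) continues to annihilate the intermediate state.  Sequencing the fragments qubit by qubit lets every qubit sweep across its own available time window, whose length is bounded by $N+4M-4$ once the identity pads prepended and appended in Def.~\ref{def:1d} are counted; a full sweep followed by the reverse of a single coordinated half-step produces a bijection ${\bf i}\mapsto P({\bf i},T)$ with $T\le 2(N+4M-4)$, securing (I)--(III).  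For (IV) I would arrange the endpoint so that the parity of a distinguished qubit's time index determines membership in $\mathcal{P}$ or $\mathcal{N}$, then pair each ${\bf i}$ with the adjacent vertex obtained by moving that qubit one step, forcing the required sign flip across the median.

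The main obstacle is the edge-usage count $B\le 6M(N+2M-4)$.  Each edge of ${\cal V}$ corresponds to one qubit (or, at brick-work gate steps, one pair of qubits) advancing at a specific time, and a naive count lets a single edge be reused $O(N^{M-1})$ times as the cascade of fragments ranges over starting configurations of the spectator qubits, which is hopelessly large.  The economy comes from ordering the sweeps so that only one qubit is actively moving during any given fragment, decoupling the combinatorics across the remaining $M-1$ qubits: the factor $6M$ then tracks the number of fragments that can in principle traverse a given edge (one per active qubit, with a small multiplicative constant for the brick-work pairing at 2-qubit gate steps), and the factor $N+2M-4$ tracks the length of the affected qubit's time window.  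The Appendix examples verify these counts on successively larger building blocks and then on the full configuration of Fig.~\ref{fig:nanocrystalarray1dwith1qubitgates}.  With $T$ and $B$ in hand, Lemma~\ref{lemma:path} delivers the lower bound on $e_1(1)$, and combining it with Theorem~\ref{thm:Hlambdagap} and Lemma~\ref{lemma:1overN} yields the claimed bound on $g_{\min}$.
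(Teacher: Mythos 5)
Your overall architecture is the same as the paper's (chain Thm.~\ref{thm:Hlambdagap} with Lem.~\ref{lemma:1overN}, reduce to bounding $e_1(1)$, and build a sweep path in the spirit of the Appendix for Lem.~\ref{lemma:path}), but the quantitative core of your plan is not viable: you have interchanged the two quantities that enter Lem.~\ref{lemma:path}. In the paper $T=6M(N+2M-4)$ is the \emph{path length} and $B=2(N+4M-4)$ is the \emph{maximal edge congestion}; you claim $T\le 2(N+4M-4)$ and $B\le 6M(N+2M-4)$. Since the bound $2\mathcal{E}/(2T+1)(2B+1)$ is symmetric in the product, your final formula coincides with the paper's, but a path of length $O(N+M)$ cannot satisfy condition (IV) on this graph. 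The sets ${\cal P}$ and ${\cal N}$ are fixed by the median of the \emph{arbitrary} function $\phi$; they are not something you may ``arrange'' so that parity of a distinguished qubit's time index decides membership. Take $\phi$ essentially equal to the sign of $i_M-N/2$: then half of all configurations must be transported a distance $\Theta(N)$ in $i_M$, and in the time-valid region of Def.~\ref{def:1d} qubit $M$ can cross each of its 2-qubit-gate edges only when qubit $M-1$ sits in the right slot of its 4-step relative window, which in turn forces qubit $M-2$ to move, and so on. This cascade (the downward/upward sweeps of Tables~\ref{table:QMdef} and \ref{table:finaladvance}) costs up to $6M$ edges per unit advance of $i_M$, so $T$ is necessarily $\Theta(MN)$; what is small is the congestion, because a given edge is reused at most twice per $6M$-block of $t$ at the top level and at most $2\cdot 4$ times per lower level, giving $B\le 2(N+4M-4)$. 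Relatedly, $N+2M-4$ is not ``the length of the affected qubit's time window''; it is the aggregate $6M(N-2)+24(1+\dots+(M-1))=6M(N+2M-4)$ over all $M$ levels (window $N-2$ for qubit $M$, window $4$ for the rest).

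Two further gaps. First, your reduction to $\lambda=1$ rests on the claim that because the accessible subgraph grows with $\lambda$, $e_1(\lambda)$ is minimized at $\lambda=1$; for this Neumann-type quadratic form (\ref{eq:E}) the first nonzero eigenvalue is not monotone under adding vertices and edges, so this step is unjustified. The paper instead observes that for $\lambda<(M-1)/M$ the relevant block is literally a smaller instance of the same problem ($M$ replaced by $A$, $f_A$ by $o_A+2$), to which the identical path argument gives a \emph{larger} lower bound, and likewise disposes of the block with $c^\dagger_{M,o_M-1,0}c_{M,o_M-1,0}=1$; you need an argument of that kind. Second, your plan never constructs the analogue of the balancing map $j_M$ of Examples 2--4 (the inductive choice, with the pool $R_{i_M}$, that makes the signs of $\psi$ balance on each residual fiber so that the pairing in (IV) can be exhibited for arbitrary $\phi$); deferring to ``the Appendix verifies these counts'' does not cover this, nor the modifications actually needed here (relative windows of size 4 rather than 2, interleaved 1-qubit-gate steps, the 3-edge moves of Table~\ref{table:QMdef}), which is precisely where the factors $6M$, $N+2M-4$ and $N+4M-4$ come from. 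As written, the proposal would not yield a proof even though its displayed inequality happens to match the paper's.
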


\begin{proof}
We first argue that $e_1(\lambda)$ is smallest when $\lambda = 1$.  If $\lambda<(M-1)/M$, note that $e_1(\lambda)$ is associated with an excited state of $H(\lambda) -  h^{o_A}_A(\lambda_A \mathcal{I})$ for which all qubits $B > A$ are fixed at $c^\dagger_{B,o_B-1,0}c_{B,o_B-1,0}$.  In this state, there are only $4$ valid time step values $o_A - 1,\dots,o_A+2$ available to qubit $A$ before it finds itself unable to traverse the 2-qubit gate with qubit $A+1$.  Otherwise, as far as the energy of excited this state is concerned, we may as well have a computer with $A$ qubits rather than $M$ qubits: qubits $B>A$ play no role in determining the energy of the excited state.  We can lower bound the energy of the excited state using the same technique that we use below to lower bound $e_1(1)$, except that $M$ is replaced with $A$ and $f_A$ is replaced with $o_A+2$.  Since these replacements increase the lower bound, we can focus on $e_1(1)$ and need not consider the case $\lambda < (M-1)/M$ further.  And, noting that $H(\lambda) -  h^{o_M}_M(\lambda_M \mathcal{I})$ is independent of $\lambda$ for $(M-1)/M \le \lambda \le 1$, we can just fix $\lambda =1$.

Once we fix $\lambda =1$, we note that $H(1) -  h^{o_M}_M( \mathcal{I})$ commutes with $c^\dagger_{M,o_M-1,0}c_{M,o_M-1,0}$.  If we work among states with $c^\dagger_{M,o_M-1,0}c_{M,o_M-1,0}$ equal to $1$, then qubit $M-1$ is confined to a 4 time-steps $o_M - 1,\dots,o_M+2$.  We can lower bound the energy of the excited state using the same technique that we use below, except that $M$ is replaced with $M-1$ and $f_A$ is replaced with $o_A+2$.  Thus, we instead work among states with $c^\dagger_{M,o_M-1,0}c_{M,o_M-1,0}$ equal to $0$.

Now, we bound $e_1(1)$ by closely following the strategy of Ex. 4 of the Appendix, with a small modification to accommodate the extra time steps with 1-qubit gates.  The vertices of our graph can be labelled $(k_1,\dots,k_M)$.   The one-dimensional configuration (Def. \ref{def:1d}) has been chosen so that, at a fixed value of $k_{\alpha}$,  $k_{\alpha-1}$ is constrained to 1 of 4 consecutive values.  We relabel these 4 values using new variables $i_{\alpha-1} \in \{0,1,2,3\}$ that lie in correspondence with the 4 consecutive values of $k_{\alpha-1}$.  For $\alpha = M$, $i_M \equiv k_M \in \{3, \dots, N-2 \}$.

The graph has the following edges in terms of the relabelled coordinates $i_\alpha$.  The 2-qubit gates provide edges between $(i_1,\dots,i_{\alpha-1}=3,i_\alpha=1,\dots,i_M)$ and $(i_1,\dots,i_{\alpha-1}-3=0,i_\alpha+1=2,\dots,i_M)$.  The 1-qubit gates provide edges between $(i_1,\dots,i_{\alpha}=0,\dots,i_M)$ and $(i_1,\dots,i_{\alpha}+1=1,\dots,i_M)$ and between $(i_1,\dots,i_{\alpha}=2,\dots,i_M)$ and $(i_1,\dots,i_{\alpha}+1=3,\dots,i_M)$.  There are additional 1-qubit edges between $(i_1,\dots,i_{\alpha},\dots,i_M)$ and $(i_1,\dots,i_{\alpha}+1,\dots,i_M)$  for $\alpha=1$ or $\alpha = M$ and at the beginning and end of the computation for other $i_\alpha$.

We  define $j_M(i_1,\dots,i_M)$ as in Ex. 3 and 4, ensuring $\sum_{i_1 = 0}^{3} \dots \sum_{i_{M-1} = 0}^{3} sign\,\, \psi(i_1,\dots,i_{M-1},j_M(i_1,\dots,i_M)) = 0$ for each fixed $i_M \in \{3,\dots, N-2\}$.  As in Ex. 4, we define $P^{(M)}(i_1,\dots,i_{M},t) = Q^{(M)}((i_1,\dots,i_{M}),t\text{ sign }(j_M(i_1,\dots,i_{M}) - i_M))$ for $0 \le t \le 6M \vert j_M(i_1,\dots,i_M) - i_M\vert $ and $P^{(M)}((i_1,\dots,i_{M}),t) = (i_1,\dots,j_M(i_1,\dots,i_M))$ for $6M \vert j_M(i_1,\dots,i_M) - i_M\vert \le t \le T^{(M)}$.  

We define $Q^{(M)}$ so that it establishes a path from $(i_1,\dots,i_{M-1},i_{M})$ to $(i_1,\dots,i_{M-1},i_{M}+1)$.  If  $i_M \text{ mod } 4 \equiv 1$, $i_M \text{ mod } 4 \equiv 2$,  $i_M \text{ mod } 4 \equiv 3$, then there is a direct edge from $(i_1,\dots,i_{M})$ to $(i_1,\dots,i_{M}+1)$, so we can set $Q^{(M)}((i_1,\dots,i_{M}),t) = (i_1,\dots,i_{M}+1)$ for $1 \le t \le 6M$.  Otherwise, we advance $i_M$ using the following downward sweep, upward sweep procedure analogous to that of Ex. 4.

(1) If $i_{M-1} \le 1$, set $\alpha = M-1$ and perform a ``downward sweep'' of $\alpha$ as follows.   If $(i_1(\tau),\dots,i_{\alpha}(\tau),\dots,i_M(\tau)) \equiv Q^{(M)}((i_1,\dots,i_{M}),\tau)$ has $i_{\alpha}(\tau) \le 1$, determine if there is a direct edge from $(i_1(\tau),\dots,i_{\alpha}(\tau),\dots,i_M(\tau))$ to $(i_1(\tau),\dots,i_{\alpha}(\tau)+1,\dots,i_M(\tau))$ and then an edge to $(i_1(\tau),\dots,i_{\alpha}(\tau)+2,\dots,i_M(\tau))$ due to two 1-qubit gates.  If both edges are available, set $Q^{(M)}((i_1,\dots,i_M),\tau+1) = (i_1(\tau),\dots,i_{\alpha}(\tau)+1,\dots,i_M(\tau))$, set $Q^{(M)}((i_1,\dots,i_M),\tau+2) = (i_1(\tau),\dots,i_{\alpha}(\tau)+2,\dots,i_M(\tau))$, increment $\tau$ by 2, and halt the downward sweep of $\alpha$.  If $i_{\alpha}(\tau) \le 1$ but no such direct connection to $i_{\alpha} (\tau)+ 2$ is available, examine $i_{\alpha -1}(\tau)$.  If $i_{\alpha-1}(\tau) \ge 2$, then we subtract 2 from $i_{\alpha-1}(\tau)$ and add 2 to $i_{\alpha}(\tau)$ using the paths shown in Table \ref{table:QMdef}.  Decrement $\alpha$ and continue the downward sweep.  If the downward sweep reaches $\alpha=1$, we must have $i_1(\tau) \le 1$. Thus, there will always be an edge to $(i_1(\tau)+1,\dots,i_M)$ and then an edge to $(i_1(\tau)+2,\dots,i_M)$ due to two 1-qubit gates, and the downward sweep will inevitably halt at some $\alpha = A$.

We then begin an ``upward sweep," setting $\alpha = A+1$.  If $i_{\alpha-1}(\tau) \ge 2$ and $i_{\alpha}(\tau) \le 1$, then we subtract 2 from $i_{\alpha-1}(\tau)$ and add 2 to $i_{\alpha}$ using Table \ref{table:QMdef}.  We increment $\alpha$ and continue the upward sweep.  When we reach $\alpha = M$, we make the final advance of $i_M$ to $i_M+1$ using Table \ref{table:finaladvance}.

The downward sweep and upward sweep involve at most $M$ values of $\alpha$ each, and each change in $\alpha$ requires at most $3$ edges as given in Table \ref{table:QMdef}.  Thus, $\tau$ is at most $6M$ when both sweeps are complete.  If $\tau < 6M$, fix $Q^{(M)}((i_1,\dots,i_{M}),\tau) = (i_1,i_2,i_3,\dots,i_{M-2},i_{M-1},i_{M}+1)$ for all remaining $\tau$ up to $6M$.  This completes the definition of $Q^{(M)}$ when $i_{M-1} \le 1$.

(2) If $i_{M-1} \ge 2$, begin with Table \ref{table:finaladvance}.   Then, follow the downward sweep and upward sweep procedures described in case (1).  Just omit the final advance -- do not use Table \ref{table:finaladvance} a second time.

As in Ex. 4, we stitch multiple $Q^{(M)}$ together to complete the definition.  We can then move on to $P^{(M-1)}$ and $Q^{(M-1)}$ and so on until we have a complete definition of $P$.  Using reasoning analogous to that of Ex. 4, we find that $T = 6M(N-2)+6(M-1)(4)+6(M-2)(4)+6(1)(4) = 6M(N+2M-4)$ and $B \le 2N+2(4+\dots+4) = 2(N+4M-4)$; we then apply Lem. \ref{lemma:path} to obtain $e_1(1) = E_{\phi} \ge 2\mathcal{E}/(12M(N+2M-4)+1)(4(N+4M-4)+1)$.  Combining this result with earlier results gives the desired bound on $g_{min}$.
\end{proof}

Note that $g_{min}$ is greater than order $1/N^4$, assuming $M \le N$, consistent with the claims of \cite{Mizel02,Mizel14}.  Thm. \ref{thm:adiabatic} implies that an adiabatic evolution time $T \gg 1095M^2\mathcal{E}^2/g_{min}^3 \approx 4 \times 10^{12} M^5 N^{9}/\mathcal{E}$ is sufficient for computation with $H(\lambda)$.

\begin{center}
\begin{table}
\begin{tabular}{||c|c||c|c||c|c||c|c||}
\hline
$i_{\alpha-1}(\tau)$&$i_{\alpha}(\tau)$&$i_{\alpha-1}(\tau+1)$&$i_{\alpha}(\tau+1)$&$i_{\alpha-1}(\tau+2)$&$i_{\alpha}(\tau+2)$&$i_{\alpha-1}(\tau+3)$&$i_{\alpha}(\tau+3)$ \\
\hline
2&0&2&1&3&1&0&2\\
\hline
3&0&3&1&0&2&1&2\\
\hline
2&1&3&1&0&2&0&3\\
\hline
3&1&0&2&1&2&1&3\\
\hline
\end{tabular}
\caption{Path of $Q^{(M)}$ from $(i_1(\tau),\dots,i_{\alpha-1}(\tau),i_{\alpha}(\tau),\dots,i_{M}(\tau))$ to $(i_1(\tau+3)=i_1(\tau),\dots,i_{\alpha-1}(\tau+3) = i_{\alpha-1}(\tau)-2,i_{\alpha}(\tau+3)= i_{\alpha}(\tau)+2,\dots,i_{M}(\tau+3)=i_{M}(\tau))$.  This 3-step path plays the role that the 1-step path from $(i_1(\tau),\dots,i_{\alpha-1}(\tau),i_{\alpha}(\tau),\dots,i_{M}(\tau))$ to $(i_1(\tau+1)=i_1(\tau),\dots,i_{\alpha-1}(\tau+1) = i_{\alpha-1}(\tau)-1,i_{\alpha}(\tau+1)= i_{\alpha}(\tau)+1,\dots,i_{M}(\tau+1)=i_{M}(\tau))$ played in Ex. 4.  Here, 3 steps are necessary instead of just 1 because of the presence of 1-qubit gates.}
\label{table:QMdef}
\end{table}
\end{center}

\begin{center}
\begin{table}
\begin{tabular}{||c|c||c|c||c|c||}
\hline
$i_{M-1}(\tau)$&$i_{M}(\tau)$& $i_{M-1}(\tau+1)$&$i_{M}(\tau+1)$ & $i_{M-1}(\tau+2)$&$i_{M}(\tau+2)$ \\
\hline
2&$i_{M}(\tau)$ & 3 & $i_{M}(\tau)$ & 0 & $i_{M}(\tau)+1$\\
\hline
3&$i_{M}(\tau)$ & 0 &  $i_{M}(\tau)+1$ & 1 &  $i_{M}(\tau)+1$\\
\hline
\end{tabular}
\caption{Path of $Q^{(M)}$ from $(i_1(\tau),\dots,i_{M-1}(\tau),i_{M}(\tau))$ to $(i_1(\tau+2) = i_1(\tau),\dots,i_{M-1}(\tau+2)=i_{M-1}(\tau)-2 ,i_{M}(\tau+2)=i_{M}(\tau)+1)$.}
\label{table:finaladvance}
\end{table}
\end{center}

Now, we bound the gap for a circuit with all-to-all qubit connectivity.  

\begin{center}
\begin{table}
\begin{tabular}{|c|c|c|c|c|c|c|c|}
\hline
\multirow{2}{*}{$U_{A,4;B,4}$}& A & M & M-3 & M-5 & $\dots$ & 4 & 2 \\
\cline{2-8}
& B & M-1 & M-2 & M-4 & $\dots$ & 5 & 3 \\
\hline
\multirow{2}{*}{$U_{A,6;B,6}$}&A & 1 & 3 & 5 & $\dots$ & M-4 & M-2 \\
\cline{2-8}
& B & 2 & 4 & 6 & $\dots$ & M-3 & M-1 \\
\hline
\multirow{2}{*}{$U_{A,8;B,8}$}& A & M & M-1 & M-3 & $\dots$ & 6 & 4 \\
\cline{2-8}
& B & M-2 & M-4 & M-6 & $\dots$ & 3 & 2 \\
\hline
\multirow{2}{*}{$U_{A,10;B,10}$}&A & 1 & 2 & 3 & $\dots$ & M-6 & M-4 \\
\cline{2-8}
&B & 4 & 6 & 8 & $\dots$ & M-1 & M-2 \\
\hline
\multirow{2}{*}{$U_{A,12;B,12}$}&A & M & M-2 & M-1 & $\dots$ & 8 & 6 \\
\cline{2-8}
&B & M-4 & M-6 & M-8 & $\dots$ & 2 & 4\\
\hline
\multirow{2}{*}{$U_{A,14;B,14}$}& A & 1 & 4 & 2 & $\dots$ & M-8 & M-6 \\
\cline{2-8}
& B & 6 & 8 & 10 & $\dots$ & M-2 & M-4 \\
\hline
\end{tabular}
\caption{Pattern of 2-qubit gates at time step $2j+2$ in a qubit configuration with all-to-all connectivity.}
\label{table:roundrobin}
\end{table}
\end{center}

\begin{definition}[All-to-all configuration]  Consider the circuit used to define $H(\lambda)$ in Def. \ref{def:GSQC}.  Suppose that the 1-dimensional configuration of Def. \ref{def:1d} is altered so that the 2-qubit gates $U_{A,2j+2;B,2j+2}$ of $H(\lambda)$ give opportunities for every pair of qubits to interact via 2-qubit gates.  All pairs are treated democratically, consistent with an all-to-all physical connectivity of qubits.  In particular, the 2-qubit gates $U_{A,2j+2;B,2j+2}$ conform to the ``round-robin tournament'' pattern, presented in Table \ref{table:roundrobin}.    When $2j+2 = 4$, qubits $A=2$ and $B=3$ undergo a 2-qubit gate $U_{A,4;B,4}$, as do qubits $A=4$ and $B=5$, qubits $A=6$ and $B=7$, and so on out to $A=M-1$ and $B=M$.  Qubit $1$ undergoes a 1-qubit gate.  When $2j+2 = 6$, qubits $A=1$ and $B=2$ undergo a 2-qubit gate $U_{A,6;B,6}$, as do qubits $A=3$ and $B=4$, qubits $A=5$ and $B=6$, and so on out to $A=M-2$ and $B=M-1$.  Qubit $M$ undergoes a 1-qubit gate.  The pattern of gates at $2j+2 = 8$ is obtained by keeping $A = M$ fixed and shifting the other $U_{A,4;B,4}$ entries around the table in a clockwise fashion.  The pattern of gates at $2j+3 = 10$ is obtained by keeping $A = 1$ fixed and shifting the other $U_{A,6;B,6}$ entries around the table in a clockwise fashion.  In general, the pattern of gates at $2j+2 \ge 8$ is obtained by keeping the top left $A$ index fixed and shifting the other $U_{A,2j-2;B,2j-2}$ entries around the table in a clockwise fashion.

As in the 1-dimensional configuration of Def. \ref{def:1d}, there are extra 1-qubit identity gates at the time steps near $i=1$ and $i=N$.  The qubits with even $A$ have $o_A = 1$.  The qubits with odd $A$ have $o_A = 3$.  The largest time step with 2-qubit gates occurs at $i=N-3$.  Any qubit index that appears as the first index $A$ of a 2-qubit gate $U_{A,N-3;B,N-3}$ at time step $N-3$ has $f_A = N$.  The remaining qubits have final time step $f_A = N-2$.
\label{def:alltoall}
\end{definition}

The following theorem bounds the gap for this pattern of 2-qubit gates using Thm. \ref{thm:one-dimensional}.

\begin{theorem}
For a circuit with the all-to-all configuration of gates specified in Def. \ref{def:alltoall}, $g_{min}$ obeys the lower bound given in Thm. \ref{thm:one-dimensional} for a 1-dimensional configuration of gates.
\end{theorem}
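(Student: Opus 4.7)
My plan is to show that every step of the proof of Thm.~\ref{thm:one-dimensional} transfers to the all-to-all configuration of Def.~\ref{def:alltoall}, because the reductions producing the bound depend only on the local gate structure at each qubit, which is identical in the two configurations, and because the path construction underlying the bound on $e_1(1)$ can be rewritten to use the round-robin schedule without increasing the relevant counts in Lem.~\ref{lemma:path}.

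First, I would apply Lem.~\ref{lemma:identitygates} to replace all unitaries by identities, then use Thm.~\ref{thm:Hlambdagap} to split $g_{min}$ into bounds on $e_1(\lambda)$ and $\bra{\psi_0(\lambda)} c^\dagger_{A,o_A-1,0}c_{A,o_A-1,0} \ket{\psi_0(\lambda)}$. Both results depend only on the existence of gates at the specified time steps and on commutation with $C^\dagger_{B,o_B-1}C_{B,o_B-1}$, which hold verbatim in the all-to-all setting. Lemma~\ref{lemma:1overN} also transfers: its proof relies only on the fact that qubit $A$ has $4$ accessible positions relative to its first 2-qubit partner at time step $o_A+1$, a property of Def.~\ref{def:alltoall} as well, since every qubit participates in a 2-qubit gate promptly after its original time step.

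Second, I would re-derive analogues of Tables~\ref{table:QMdef} and \ref{table:finaladvance} adapted to the round-robin schedule. The relabeled coordinates $(i_1,\dots,i_M)$ with $i_\alpha\in\{0,1,2,3\}$ for $\alpha<M$ and $i_M\in\{3,\dots,N-2\}$ are the same as in the 1D case; what changes is only which pair of qubits is forced to advance in lockstep at each even time step. The downward-and-upward sweep of Thm.~\ref{thm:one-dimensional} implements each lockstep advance of some pair $(\alpha-1,\alpha)$ using whichever 2-qubit edge between those qubits is nearest in the round-robin schedule; since every pair interacts within $M-1$ consecutive even time steps, such an edge is always accessible locally.

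The main obstacle is bookkeeping: verifying that $T$ and $B$ in Lem.~\ref{lemma:path} do not exceed the values $6M(N+2M-4)$ and $2(N+4M-4)$ used in Thm.~\ref{thm:one-dimensional}. I expect this to follow from the observation that the all-to-all penalty Hamiltonians are strictly more restrictive than the 1D ones (pairs such as $(1,M)$ now have an enforced time-ordering), so the time-valid vertex set is actually a subset of the 1D vertex set, and the path construction can be pulled back to this smaller set with the same per-move edge budget. An alternative, if bookkeeping becomes delicate, is a Dirichlet-form monotonicity argument: exhibit a weighted map from the all-to-all graph's Dirichlet form to a reference form bounded below by that of the 1D case. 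Either way, the conclusion is the inequality stated in Thm.~\ref{thm:one-dimensional}, carried over unchanged.
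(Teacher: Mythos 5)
There is a genuine gap in your argument, and it sits exactly where you flag the ``main obstacle.''  Your route rests on two structural claims about the all-to-all graph that are not correct.  First, you assert that the all-to-all penalty Hamiltonians are ``strictly more restrictive'' than the 1-d ones, so that the time-valid vertex set is a subset of the 1-d vertex set.  It is not: the round-robin schedule of Def.~\ref{def:alltoall} imposes constraints on \emph{different} pairs at each even time step than Table~\ref{table:2qbgatesin1d} does, so neither constraint set contains the other.  A specific pair (say qubits $2$ and $3$) interacts at time step $4$ and then not again for of order $2(M-1)$ steps, so its relative window is far wider than the $4$ consecutive values used in Thm.~\ref{thm:one-dimensional}, while pairs such as $(2,M)$ acquire time-ordering constraints that are absent in the 1-d case.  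Consequently your second claim --- that the relabelled coordinates $i_\alpha\in\{0,1,2,3\}$ for $\alpha<M$ carry over unchanged and only the identity of the lockstep partner changes --- also fails; the coordinate windows that make the downward/upward sweep and the counts $T$ and $B$ in Lem.~\ref{lemma:path} work are a consequence of adjacent qubits interacting every other even step, which the round-robin schedule does not provide.  Even granting a subset relation, the inference would not go through: a gap lower bound does not pass to an induced subgraph (paths in your construction may exit the subset, and Dirichlet forms restricted to fewer vertices can have smaller, not larger, gaps --- compare a path graph inside a complete graph), so neither the ``pull back the path'' step nor the vaguely stated Dirichlet-form monotonicity alternative is sound as written.

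The missing idea is an explicit isomorphism between the two constraint graphs, and this is how the paper proceeds.  It introduces the partial swap unitaries $\mathcal{W}_{A,k;B,k}$ of Eq.~(\ref{eq:swap}), which exchange qubits $A$ and $B$ only when both sit at time step $k$ or later, and shows that conjugating $H^{\text{1d}}(\lambda)$ (after Lem.~\ref{lemma:identitygates}) by an alternating product $\mathcal{V}_{N-3}\mathcal{W}_{N-5}\cdots\mathcal{V}_{8}\mathcal{W}_{6}$ of such swaps converts the pairing pattern of Table~\ref{table:2qbgatesin1d} into the round-robin pattern of Table~\ref{table:roundrobin}, with agreement on the penalty-free subspace where the relevant low-lying spectrum lives.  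The bound of Thm.~\ref{thm:one-dimensional} then transfers verbatim with no new path construction and no re-derivation of Thm.~\ref{thm:Hlambdagap} or Lem.~\ref{lemma:1overN} for the new schedule.  If you want to salvage your direct approach, you would in effect have to build this label-permuting bijection into your coordinates before defining the sweeps --- at which point you would be reproducing the paper's swap-conjugation argument in different notation.
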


\begin{proof}
We refer to our Hamiltonian, describing a configuration of qubits with all-to-all connectivity, as $H^{\text{all-to-all}}(\lambda)$.  We refer to the Hamiltonian of Thm. \ref{thm:one-dimensional}, describing a 1-dimensional configuration of qubits, as $H^{\text{1d}}(\lambda)$.  We assume that Lem. \ref{lemma:identitygates} has been applied to both $H^{\text{1d}}(\lambda)$ and $H^{\text{all-to-all}}(\lambda)$, so that only identity gates appear in each Hamiltonian.  We will show that the resulting two Hamiltonians are related by a unitary transformation in their penalty-free low-energy subspace, from which the proof immediately follows.

Define the unitary operator
\begin{eqnarray}
\mathcal{W}_{A,k;B,k} & = & \sum_{i_A< k, i_B < k} C^\dagger_{A,i_A} C_{A,i_A} C^\dagger_{B,i_B} C_{B,i_B}  +  \sum_{i_A< k, i_B \ge k} C^\dagger_{A,i_A} C_{A,i_A} C^\dagger_{B,i_B} C_{B,i_B} \nonumber \\
&& + \sum_{i_A \ge k, i_B< k} C^\dagger_{A,i_A} C_{A,i_A} C^\dagger_{B,i_B} C_{B,i_B}  + \sum_{i_A \ge k, i_B \ge k} C^\dagger_{A,i_A} C_{A,i_B} C^\dagger_{B,i_B} C_{B,i_A} 
\label{eq:swap}
\end{eqnarray}
This operator swaps qubits $A$ and $B$ if both reside at time step $k$ or later and has no affect on any other qubits.  Within the basis of eigenstates specified in Def. \ref{def:GSQC}, it leaves invariant any 1-qubit or 2-qubit gate at time step $k-1$ or earlier as well as any associated penalty Hamiltonian.  It leaves $h_{A,B}^{k,k}(\mathcal{I} \otimes \mathcal{I})+h_A^{k+1}(\mathcal{I}) + h_B^{k+1}(\mathcal{I})$ invariant.  It transforms any 2-qubit gate between $A$ and $C$ or $B$ and $C$ at a time step $k+2$ or later: $h_{A,C}^{k+2,k+2}(\mathcal{I} \otimes \mathcal{I}) = \mathcal{W}^\dagger_{A,k;B,k}  h_{B,C}^{k+2,k+2}(\mathcal{I} \otimes \mathcal{I})\mathcal{W}_{A,k;B,k}$.  Because $\mathcal{W}_{A,k;B,k}$ only swaps $A$ and $B$ if they are both at time step $k$ or later, it does not perfectly transform the associated penalty Hamiltonian $h_{B,C}^{k+2,k+2}(P)$ into $h_{A,C}^{k+2,k+2}(P)$.  However, note that the penalty-free subspace of $\mathcal{W}^\dagger_{A,k;B,k}(h_{A,B}^{k,k}(P) + h_{B,C}^{k+2,k+2}(P))\mathcal{W}_{A,k;B,k}$ is identical with the penalty-free subspace of $h_{A,B}^{k,k}(P) + h_{A,C}^{k+2,k+2}(P)$.   Indeed, if $i_A, i_B  < k$ or $i_A, i_B \ge k$, $\mathcal{W}^\dagger_{A,k;B,k}(h_{A,B}^{k,k}(P) + h_{B,C}^{k+1,k+1}(P))\mathcal{W}_{A,k;B,k}$ is identical with $h_{A,B}^{k,k}(P) + h_{A,C}^{k+1,k+1}(P)$.  On the other hand, if $i_A < k$, $i_B  \ge k$ or  $i_A \ge k$, $i_B  < k$, then $\mathcal{W}^\dagger_{A,k;B,k}h_{A,B}^{k,k}(P)\mathcal{W}_{A,k;B,k}$ and $h_{A,B}^{k,k}(P)$ both impose a penalty, so we are outside the penalty-free subspace.  It is sufficient to focus on the penalty-free subspace for our proof because $g_{min}$ is associated with a state that lies in this subspace.

Suppose that we start with $H^{\text{1d}}(\lambda)$ and apply the transformation (\ref{eq:swap}) to every pair in Table \ref{table:2qbgatesin1d} at time step $2j+2 = 6$, except the leftmost pair $(1,2)$.  Explicitly, we let $\mathcal{W}_k\equiv \mathcal{W}_{M-2,k;M-1,k} \dots \mathcal{W}_{5,k;6,k} \mathcal{W}_{3,k;4,k}$ and compute ${\mathcal{W}^\dagger}_6 H^{\text{1d}}(\lambda) \mathcal{W}_6 $.  The pattern of pairs from time step $2j+2 \ge 8$ in Table \ref{table:2qbgatesin1d} that characterizes $H^{\text{1d}}(\lambda)$ changes to the pattern shown in Table \ref{table:2qbgatesin1dafterswap6} once we use $h_{A,B}^{k,k}(\mathcal{I} \otimes \mathcal{I}) = h_{B,A}^{k,k}(\mathcal{I} \otimes \mathcal{I})$ to swap the label in the $A$ row and the label in the $B$ row.   The pairing of qubits at time step $2j+2 = 8$ in ${\mathcal{W}^\dagger}_6 H^{\text{1d}}(\lambda) \mathcal{W}_6 $ is now identical with that of $H^{\text{all-to-all}}(\lambda)$.

\begin{center}
\begin{table}
\begin{tabular}{|c|c|c|c|c|c|c|c|}
\hline
\multirow{2}{*}{$U_{A,4;B,4}$}& A & M & M-3 & M-5 & $\dots$ & 4 & 2 \\
\cline{2-8}
& B & M-1 & M-2 & M-4 & $\dots$ & 5 & 3 \\
\hline
\multirow{2}{*}{$U_{A,6;B,6}$}&A & 1 & 3 & 5 & $\dots$ & M-4 & M-2 \\
\cline{2-8}
& B & 2 & 4 & 6 & $\dots$ & M-3 & M-1 \\
\hline
\multirow{2}{*}{$U_{A,8;B,8}$}&A & M & M-1 & M-3 & $\dots$ & 6 & 4 \\
\cline{2-8}
& B & M-2 & M-4 & M-6 & $\dots$ & 3 & 2 \\
\hline
\multirow{2}{*}{$U_{A,10;B,10}$}&A & 1 & 4 & 6 & $\dots$ & M-3 & M-1 \\
\cline{2-8}
& B & 2 & 3 & 5 & $\dots$ & M-4 & M-2 \\
\hline
\multirow{2}{*}{$U_{A,12;B,12}$}&A & M & M-1 & M-3 & $\dots$ & 6 & 4 \\
\cline{2-8}
& B & M-2 & M-4 & M-6 & $\dots$ & 3 & 2 \\
\hline
\multirow{2}{*}{$U_{A,14;B,14}$}&A & 1 & 4 & 6 & $\dots$ & M-3 & M-1 \\
\cline{2-8}
& B & 2 & 3 & 5 & $\dots$ & M-4 & M-2 \\
\hline
\end{tabular}
\caption{Pattern of 2-qubit gates in a 1-dimensional qubit configuration after application of $\mathcal{W}_{M-2,6;M-1,6} \dots \mathcal{W}_{5,6;6,6} \mathcal{W}_{3,6;4,6}$ and some swapping of labels $A$ and $B$.}
\label{table:2qbgatesin1dafterswap6}
\end{table}
\end{center}

Suppose that prior to applying the transformation $\mathcal{W}_6$, we had applied the transformation (\ref{eq:swap}) to every pair in Table \ref{table:2qbgatesin1d} at time step $2j+2 = 8$, except the leftmost pair $(M,M-1)$.  That is, we define $\mathcal{V}_k\equiv \mathcal{W}_{M-3,k;M-2,k} \dots \mathcal{W}_{4,k;5,k} \mathcal{W}_{2,k;3,k}$ and compute ${\mathcal{W}^\dagger}_6{\mathcal{V}^\dagger}_8 H^{\text{1d}}(\lambda) \mathcal{V}_8  \mathcal{W}_6$.  The pattern in Table \ref{table:2qbgatesin1dafterswap6} now changes to that in Table \ref{table:2qbgatesin1dafterswap8}, which now agrees with the pairing in Table \ref{table:roundrobin} up to time step $2j+2 = 10$.

\begin{center}
\begin{table}
\begin{tabular}{|c|c|c|c|c|c|c|c|}
\hline
\multirow{2}{*}{$U_{A,4;B,4}$}& A & M & M-3 & M-5 & $\dots$ & 4 & 2 \\
\cline{2-8}
& B & M-1 & M-2 & M-4 & $\dots$ & 5 & 3 \\
\hline
\multirow{2}{*}{$U_{A,6;B,6}$}&A & 1 & 3 & 5 & $\dots$ & M-4 & M-2 \\
\cline{2-8}
& B & 2 & 4 & 6 & $\dots$ & M-3 & M-1 \\
\hline
\multirow{2}{*}{$U_{A,8;B,8}$}&A & M & M-1 & M-3 & $\dots$ & 6 & 4 \\
\cline{2-8}
& B & M-2 & M-4 & M-6 & $\dots$ & 3 & 2 \\
\hline
\multirow{2}{*}{$U_{A,10;B,10}$}&A & 1 & 2 & 3 & $\dots$ & M-6 & M-4 \\
\cline{2-8}
& B & 4 & 6 & 8 & $\dots$ & M-1 & M-2 \\
\hline
\multirow{2}{*}{$U_{A,12;B,12}$}&A & M & M-4 & M-6 & $\dots$ & 3 & 2 \\
\cline{2-8}
& B & M-2 & M-1 & M-3 & $\dots$ & 6 & 4 \\
\hline
\multirow{2}{*}{$U_{A,14;B,14}$}&A & 1 & 2 & 3 & $\dots$ & M-6 & M-4 \\
\cline{2-8}
& B & 4 & 6 & 8 & $\dots$ & M-1 & M-2 \\
\hline
\end{tabular}
\caption{Pattern of 2-qubit gates in a 1-dimensional qubit configuration after application of $\mathcal{W}_{M-3,8;M-2,8} \dots W_{4,8;5,8} \mathcal{W}_{2,8;3,8}$ then $\mathcal{W}_{M-2,6;M-1,6} \dots \mathcal{W}_{5,6;6,6} \mathcal{W}_{3,6;4,6}$.}
\label{table:2qbgatesin1dafterswap8}
\end{table}
\end{center}

More generally, the table associated with 
\begin{equation}
{\mathcal{W}^\dagger}_6{\mathcal{V}^\dagger}_8 {\mathcal{W}^\dagger}_{10} {\mathcal{V}^\dagger}_{12} \dots {\mathcal{W}^\dagger}_{N-5} {\mathcal{V}^\dagger}_{N-3} H^{\text{1d}}(\lambda) \mathcal{V}_{N-3}  \mathcal{W}_{N-5} \dots \mathcal{V}_{12}  \mathcal{W}_{10} \mathcal{V}_8  \mathcal{W}_6.
\label{eq:WVWVHWVWV}
\end{equation}
agrees at all rows with Table \ref{table:roundrobin}.  To prove this, we note that $\mathcal{W}_6$ converted row $2j+2=8$ of Table \ref{table:2qbgatesin1d} to row $2j+2 = 8$ of Table \ref{table:roundrobin}.  We note also that $\mathcal{V}_8  \mathcal{W}_6$ effected the clockwise shift (the permutation that carries $3$ to $2$, $5$ to $3$, etc.) that turned row $2j+2=10$ of Table \ref{table:2qbgatesin1d} into row $2j+2 = 10$ of Table \ref{table:roundrobin}.  It follows that the product $\mathcal{V}_{2j+2}  \mathcal{W}_{2j} \dots \mathcal{V}_{8}  \mathcal{W}_{6}$ gives repeated clockwise shifts, changing row $2j+2$ of Table \ref{table:2qbgatesin1d} into  row $2j+2$ of Table \ref{table:roundrobin} for even $j$.  It also follows that the product $\mathcal{W}_{2j+2}  \left(\mathcal{V}_{2j}  \mathcal{W}_{2j-2} \dots \mathcal{V}_{8}  \mathcal{W}_{6} \right)$ changes row $2j+2$ of Table \ref{table:2qbgatesin1d} into the second row of Table \ref{table:roundrobin}  and then applies repeated clockwise shifts, giving $2j+2$ row of Table \ref{table:roundrobin} for odd $j$.

Thus, Hamiltonian (\ref{eq:WVWVHWVWV}) agrees with $H^{\text{all-to-all}}(\lambda)$ on their penalty-free subspaces, proving the theorem.
\end{proof}

As in the one-dimensional case, Thm. \ref{thm:adiabatic} implies that an adiabatic evolution time $T \gg 1095M^2 \mathcal{E}^2 /g_{min}^3 \approx 4\times10^{12} M^5 N^{9}/\mathcal{E}$ is sufficient for computation in the case of all-to-all connectivity.

\section{Measuring output}

After we have carried $\lambda$ to $1$, we need to measure the output of the computation.  The following theorem bounds the probability that a measurement finds all qubits in the final $(N-3)/2 + M$ time steps.  Since Def. \ref{def:1d} specifies that all gates in these time steps are identity gates, the output of the computation can be correctly obtained if all qubits found in this range, even if they are not found at their final time step $f_A$.

\begin{theorem}
Consider a circuit with the 1-dimensional configuration of gates specified in Def. \ref{def:1d} or the all-to-all configuration of gates specified in Def. \ref{def:alltoall}.  If $\lambda = 1$, the probability that a measurement finds all qubits at the last $(N-3)/2 + M$ time steps is at least $1/3$.
\end{theorem}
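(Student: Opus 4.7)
First apply Lem.~\ref{lemma:identitygates} to replace every gate in $H(1)$ by an identity. The spectrum is preserved, and by Thm.~\ref{thm:groundstate} the ground state $\ket{\psi_0(1)}$ becomes (up to the same unitary) the uniform superposition over the set $\mathcal V$ of time-valid configurations $(i_1,\dots,i_M)$ with $o_A-1\le i_A\le f_A$ in the all-zero-bit sub-block. The probability of interest is therefore the combinatorial ratio $|\mathcal V_{\mathrm{good}}|/|\mathcal V|$, where $\mathcal V_{\mathrm{good}}\subseteq\mathcal V$ is the subset with $i_A\ge N-L+1$ for every $A$ and $L=(N-3)/2+M$. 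The penalty-free-subspace unitary built in the preceding theorem identifies the all-to-all and one-dimensional $\ket{\psi_0(1)}$ and commutes with each $C^\dagger_{A,k}C_{A,k}$, so it preserves this event; hence it suffices to treat the one-dimensional configuration of Def.~\ref{def:1d}.

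In the one-dimensional case, each adjacent pair $(A,A+1)$ interacts via 2-qubit gates whose times form an arithmetic progression of period 4, and time-validity forces both qubits to lie in the same length-4 phase interval of that progression. Running the counting recursion of Lem.~\ref{lemma:1overN} down the whole chain yields, for every fixed $i_M\in[o_M-1,f_M]$, exactly $4^{M-1}$ valid extensions $(i_1,\dots,i_{M-1})$, so
\[
|\mathcal V|=(N-3)\cdot 4^{M-1}.
\]
Equivalently, in the uniform ground state $i_M$ is uniform on $[o_M-1,f_M]$ and, conditioned on $(i_M,\dots,i_{A+1})$, the position $i_A$ is uniform over the 4-element phase window selected by $i_{A+1}$. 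The desired probability is therefore the reachability probability that the backward chain $i_M\to i_{M-1}\to\dots\to i_1$ remains above the threshold $N-L+1$ at every step.

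To lower bound this by $1/3$ I split the starting values $i_M\ge N-L+1$ into ``deep'' values --- those whose entire induced phase cascade lies in $[N-L+1,N]$, for which the chain survives with probability 1 --- and an $O(M)$-wide boundary band. For boundary $i_M$ the single-step conditional survival is at least $1/4$ (at least one of the four uniform positions in any straddling window lies above the threshold) with the surviving posterior biased toward the top of the window; the next chain link's window is offset from the current one by only 2 time steps, so its below-threshold set can only shrink, which prevents straddles from compounding exponentially along the chain. Summing the contributions of the $L-O(M)$ deep starting values and the $O(M)$ boundary starting values, and substituting $L=(N-3)/2+M\ge 4M-2$ from $N\ge 6M-1$, gives $|\mathcal V_{\mathrm{good}}|/|\mathcal V|\ge 1/3$. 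The main obstacle is exactly this non-compounding argument: the threshold $N-L+1$ need not be aligned with the period-4 phase grid of any individual chain link, so verifying that straddles are self-limiting and that the boundary band has only $O(M)$ starting values requires a careful case analysis of how the offset grids of successive links sit relative to $N-L+1$, exploiting the explicit structure of Def.~\ref{def:1d} in full.
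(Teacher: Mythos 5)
Your setup is right---after Lem.~\ref{lemma:identitygates} the ground state at $\lambda=1$ is uniform over the time-valid configurations, $|\mathcal V|=(N-3)\cdot4^{M-1}$, and the marginal of $i_M$ is uniform over its $N-3$ allowed time steps---but the core of your argument has a genuine gap, and it is one you yourself flag as ``the main obstacle.'' You try to bound the probability that the backward chain $i_M\to i_{M-1}\to\dots\to i_1$ stays above a single threshold $N-L+1$ as a survival probability, with per-link conditional survival ``at least $1/4$'' and an unproved claim that straddles do not compound. As sketched, a per-link factor of $1/4$ over an $O(M)$-deep boundary band gives a bound that decays in $M$, not a constant $\ge 1/3$; the ``self-limiting straddles'' case analysis that would rescue this is exactly the part you leave unverified, so the proof is not complete.

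The step you are missing is that no probabilistic cascade is needed at all. The time-validity constraints (the vanishing of the penalty Hamiltonians) lock each adjacent pair into a window of $4$ consecutive time steps, so the position of qubit $M-1$ is \emph{deterministically} within $2$ time steps behind that of qubit $M$, and so on down the chain. Hence one conditions only on the single event that qubit $M$ lies in the last $(N-1)/2-(M-1)$ time steps---a strictly smaller window than $L=(N-3)/2+M$---which by uniformity of $i_M$ has probability $\bigl((N-1)/2-(M-1)\bigr)/(N-3)>1/3$ using $N\ge 6M-1$. On that event, qubit $M-1$ is forced into the last $(N-1)/2-(M-1)+2$ time steps, qubit $M-2$ into the last $(N-1)/2-(M-1)+4$, and qubit $1$ into the last $(N-1)/2+(M-1)=(N-3)/2+M$ time steps, all with probability $1$. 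The event for qubit $M$ alone implies the full event, and the $1/3$ bound follows with no analysis of how the threshold sits relative to the phase-$4$ grids.
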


\begin{proof}
The argument parallels the proof of Lem. \ref{lemma:1overN}.  When $\lambda=1$, the probability that qubit $M$ is at any time step $i$ goes like $4^{M-1}$ times a normalization constant.  This is equal for each of the $N-3$ possible time steps of qubit $M$, so the probability qubit $M$ resides at any given time step is $1/(N-3)$.  In particular, the probability that it resides on the last $(N-1)/2 - (M-1)$ time steps is at least $(N-1)/2(N-3) - (M-1)/(N-3) > 1/3$ since $N \ge 6M-1$ in Def. \ref{def:1d}.

If qubit $M$ resides on the last $(N-1)/2 - (M-1)$ time steps, then the 2-qubit gates in the circuit ensure that qubit $M-1$ resides on the last $(N-1)/2 - (M-1) + 2$ time steps, qubit $M-2$ resides on the last $(N-1)/2 - (M-1) + 4$ time steps, and so on.  We conclude that all qubits $1,\dots,M$ necessarily reside on the last $(N-1)/2 - (M-1) + 2(M-1) = (N-1)/2 + (M-1) = (N-3)+M$ time steps.
\end{proof}

\section{Conclusion}

We have provided a careful analysis of parallelized, universal adiabatic quantum computation.  Taking as input an arbitrary quantum algorithm described as a circuit of unitary gates, we defined a system and its Hamiltonian $H(\lambda)$ possessing several key properties.  The ground state of $H(\lambda)$ is non-degenerate.   The ground state of $H(0)$ is a simple product state that seems straightforward to prepare.  The amount of time needed to adiabatically evolve from $\lambda=0$ to $\lambda=1$, such that the system remains almost completely in the instantaneous ground state of $H(\lambda)$, is bounded by a polynomial function of the width and the depth of the circuit of unitary gates.  Thus, the computation can be executed efficiently.  Finally, the output of the quantum algorithm can be extracted with high-probability by measuring the ground state of $H(1)$.  

We considered a case in which $H(\lambda)$ has restricted connectivity derived from a circuit operating on a one-dimensional configuration of qubits.  We also considered a case in which $H(\lambda)$ has all-to-all qubit connectivity.  

Throughout this paper, we made the assumption that the adiabatic quantum computation is proceeding in a noise-free, zero-temperature environment.  The problem of making adiabatic quantum computation fault-tolerant against thermal excitations is challenging and fascinating.  One suspects that the parallelized execution of gates integral to our $H(\lambda)$ could play a role in such fault-tolerance since parallel gates are essential for fault-tolerance in the circuit model.

\section*{Acknowledgements}

The author is very grateful for helpful remarks by Van Molino and by Stephen DiPippo.

\bibliographystyle{apsrev4-1}
\bibliography{ari_mizel}

\section*{Appendix}

\noindent {\bf Example 1: 1-dimensional chain}

Consider a 1-dimensional chain.  The set ${\cal V}_1=\{{\bf i}_1,\dots,{\bf i}_{N_1}\}$ is comprised of an even number of vertices $\|{\cal V}_1\| = N_1$.  (Note that ${\bf i}_1$, a first vertex in ${\cal V}_1$, is distinct from $i_1$, the first component of the vertex ${\bf i}$.)  Edges join ${\bf i}_1$ to ${\bf i}_2$,  ${\bf i}_2$ to ${\bf i}_3$, $\dots$, and ${\bf i}_{N_1 - 1}$ to ${\bf i}_{N_1}$.  Given any non-vanishing $\phi({\bf i})$ orthogonal to the ground state, we seek a lower bound on $E_{\phi}$, the expectation value (\ref{eq:E}) of its energy.  We will prove a bound by applying Lem. \ref{lemma:path} to a suitable path $P({\bf i},t)$.

Using the definitions in Lem. \ref{lemma:path}, note that $\|{\cal N}\| = \|{\cal P}\| = N_1/2$ vertices.  Label the elements of ${\cal N}$ by ${\bf i}_{n_1},\dots,{\bf i}_{n_{N_1/2}}$ and the elements of ${\cal P}$ by ${\bf i}_{p_1},\dots,{\bf i}_{p_{N_1/2}}$.  Let us specify a path $P({\bf i},t)$ that carries even locations ${\bf i}_{2 \alpha}$ in the chain to elements of $\cal N$ and odd locations ${\bf i}_{2\alpha-1}$ in the chain to elements of $\cal P$.  In particular, for $\alpha=1,\dots,N_1/2$, define $P({\bf i}_{2 \alpha},t) = {\bf i}_{2 \alpha + t \text{ sign }(n_{\alpha} - 2\alpha)}$ for $0 \le t \le  \left|n_{\alpha} - 2 \alpha \right|$ and  $P({\bf i}_{2 \alpha},t) = {\bf i}_{n_\alpha}$ for $\left|n_{\alpha} - 2 \alpha \right| \le t \le N_1$.  Similarly, define $P({\bf i}_{2 \alpha-1},t) = {\bf i}_{2 \alpha-1 + t \text{ sign }(p_{\alpha} - (2\alpha-1))}$ for $0 \le t \le  \left|p_{\alpha} - (2\alpha-1) \right|$ and $P({\bf i}_{2\alpha-1},t) = {\bf i}_{p_\alpha}$ for $\left|p_{\alpha} - (2\alpha-1) \right| \le t \le N_1$.  Set $T = N_1$.    This $P({\bf i},t)$ satisfies the requirements of the Lemma.  A computation shows that $B < N_1$.  Thus, $E_{\phi} \ge 2\mathcal{E}/(2N_1+1)(2N_1+1)$.  Comparing to the exact result for the chain \cite{Mizel02},  $2\mathcal{E}(1-\cos \pi/(N_1)) \approx \mathcal{E}(\pi/N_1)^2$, we see that our bound underestimates its value by a factor of about $2 \pi^2$.

To illustrate this construction with concrete numbers, let $N_1 = 6$, $\phi({\bf i}_1) = \phi({\bf i}_2) = \phi({\bf i}_3) = -1$, and  $\phi({\bf i}_4) = \phi({\bf i}_5) = \phi({\bf i}_6) = 1$ as shown in Fig. \ref{fig:1dchain}.  Choose $\Phi = 0$, the median value of $\phi({\bf i})$.  Then ${\cal N} = \{{\bf i}_1,{\bf i}_2,{\bf i}_3\}$, so $n_1 = 1$, $n_2 = 2$, $n_3=3$, and ${\cal P} = \{{\bf i}_4,{\bf i}_5,{\bf i}_6\}$, so $p_1 = 4$, $p_2 = 5$, and $p_3=6$.  The definition of $P({\bf i},t)$ follows Table \ref{table:1d} where $T = 6$.  Note that the $P({\bf i},t)$ presented in the table satisfies all of the requirements of Lem. \ref{lemma:path}: (I) $P({\bf i},0) = {\bf i}$, (II) $P({\bf i},t)$ advances by at most one edge at a time, (III) $P({\bf i},6)$ is one-to-one.  Finally, one sees that (IV) is satisfied from the final column of the table, if we pair ${\bf i}_1\sim {\bf i}_2$, ${\bf i}_3 \sim {\bf i}_4$, and ${\bf i}_5 \sim {\bf i}_6$.   The conclusion is that $E_\phi \ge 2\mathcal{E}/(13)^2 \approx 0.012\mathcal{E}$.  The exact result is $2\mathcal{E}(1-\cos \pi/6) \approx 0.27\mathcal{E}.$

\begin{center}
\begin{table}
\begin{tabular}{|c|c|c|}
\hline
${\bf i}$ & $P({\bf i},t)$ & $\psi(P({\bf i},T))$\\
\hline
${\bf i}_1$ & $ \begin{array}{cc} {\bf i}_{1+t} & 0 \le t \le 3 \\ {\bf i}_4 & 4 \le t \le 6 \end{array}$& 1 \\
\hline
${\bf i}_2$ & $ \begin{array}{cc} {\bf i}_{2-t} & 0 \le t \le 1 \\ {\bf i}_1 & 2 \le t \le 6 \end{array}$ & -1\\
\hline
${\bf i}_3$ & $ \begin{array}{cc} {\bf i}_{3+t} & 0 \le t \le 2 \\ {\bf i}_5 & 3 \le t \le 6 \end{array}$ & 1\\
\hline
${\bf i}_4$ & $ \begin{array}{cc} {\bf i}_{4-t} & 0 \le t \le 2 \\ {\bf i}_2 & 3 \le t \le 6 \end{array}$ & -1\\
\hline
${\bf i}_5$ & $ \begin{array}{cc} {\bf i}_{5+t} & 0 \le t \le 1 \\ {\bf i}_6 & 2 \le t \le 6 \end{array}$ & 1\\
\hline
${\bf i}_6$ & $ \begin{array}{cc} {\bf i}_{6-t} & 0 \le t \le 3 \\ {\bf i}_3 & 4 \le t \le 6 \end{array}$ & -1\\
\hline
\end{tabular}
\caption{Sample path for 1-dimensional chain}
\label{table:1d}
\end{table}
\end{center}

\begin{figure}
\includegraphics[width=3.5in]{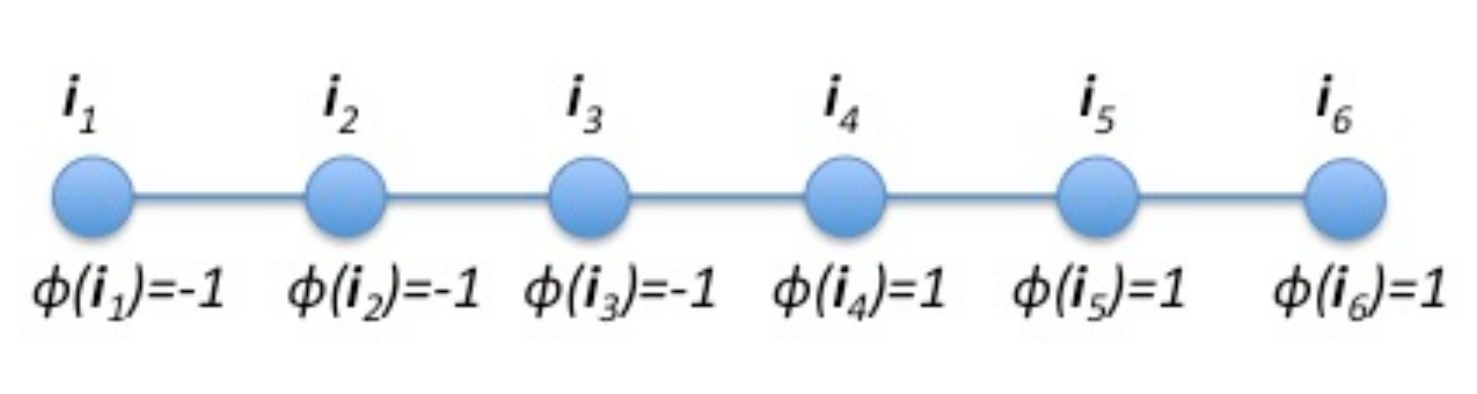}
\caption{1-dimensional chain}
\label{fig:1dchain}
\end{figure}

\noindent {\bf Example 2: 2-dimensional chain}

Consider a graph of the form ${\cal V}_1 \otimes  {\cal V}_2$ where $\|{\cal V}_1\| = N_1$ and $\|{\cal V}_2\| = N_2$ are even integers.  The vertices ${\bf i} \in {\cal V}_1 \otimes  {\cal V}_2$ have $i_1 \in \{1,\dots, N_1 \}$ and $i_2 \in \{1,\dots, N_2 \}$.  We have ${\bf i} \sim {\bf i}^\prime$ if ${i^\prime}_1  = i_1 +1$ and ${i^\prime}_2  = i_2$ or  if ${i^\prime}_1  = i_1$ and ${i^\prime}_2  = i_2 +1$.  Fig. \ref{fig:2dchain} shows an example with $N_1 = N_2 = 4$.  Given any non-vanishing $\phi({\bf i})$ satisfying $\sum_{{\bf i} \in {\cal V}} \phi({\bf i}) = 0$, we seek a lower bound on the expectation value of its energy, $E_{\phi}$.

Our strategy is to define a path $P^{(2)}$ to shuffle the $i_2$ values of vertices until $\sum_{i_1 = 1}^{N_1} sign\,\, \psi(P^{(2)}({\bf i},T^{(2)})) = 0$ for each fixed $i_2 \in \{1,\dots, N_2 \}$.  After the shuffling is complete, the problem can be solving using the method of Ex. 1 independently at each $i_2$.  The shuffling will use $T^{(2)} = N_2$ steps.

We specify $P^{(2)}$ in terms a function $j_2: \{1,\dots,N_1 \} \otimes \{1,\dots,N_2 \} \rightarrow \{1,\dots,N_2 \}$.   This function $j_2({\bf i})$ assigns a new $i_2$ coordinate to each vertex in the graph.  Given $j_2$,  we set $P^{(2)}({\bf i},t) = (i_1,i_2 + t\text{ sign }(j_2({\bf i}) - i_2))$ for $0 \le t \le |j_2({\bf i}) - i_2|$ and $P^{(2)}({\bf i},t) = (i_1,j_2({\bf i}))$ for $ |j_2({\bf i}) - i_2| \le t \le T^{(2)}$.  This $P^{(2)}$ carries each vertex ${\bf i}=(i_1,i_2)$ to $(i_1,j_2({\bf i}))$.

We now define $j_2({\bf i})$.  Condition (III) of Lem. \ref{lemma:path} requires that $P^{(2)}({\bf i},T^{(2)})$ be one-to-one, so we keep track of a pool of ``remaining'' vertices to ensure that the $(i_1,j_2({\bf i}))$ are distinct for distinct ${\bf i}=(i_1,i_2)$.  To start, the pool of remaining vertices is simply the whole set $R_0 = {\cal V}_1 \otimes  {\cal V}_2$.  We proceed inductively from $i_2$ to $i_2+1$.   Start with $i_2 = 1$.   Let ${\cal I}_{i_2}^{\cal P} = \{i_1|  (i_1,k_2) \in R_{i_2 - 1} \Rightarrow (i_1,k_2) \in {\cal P} \}$.  This is a set of $i_1$ values for which $j_2(i_1,i_2)$ can be chosen somewhat indiscriminately since any remaining $(i_1,j_2)$ will give a $\psi(i_1,j_2) \ge 0$ irrespective of the choice of $j_2$.  Similarly, define the set ${\cal I}_{i_2}^{\cal N} = \{i_1| (i_1,k_2) \in R_{i_2 - 1} \Rightarrow (i_1,k_2)  \in {\cal N} \}$.  Finally, ${\cal I}_{i_2} = \{1,\dots,N_1\} - {\cal I}_{i_2}^{\cal P} \cup {\cal I}_{i_2}^{\cal N}$ gives the $i_1$ values for which $j_2(i_1,i_2)$ should be chosen more deliberately to target $\psi(i_1,j_2) \ge 0$ or $\psi(i_1,j_2) \le 0$.    With the sets ${\cal I}_{i_2}^{\cal P}$, ${\cal I}_{i_2}^{\cal N}$, ${\cal I}_{i_2}$ established, we can define $j_2$.  For each $i_1 \in {\cal I}_{i_2}^{\cal P} \cup  {\cal I}_{i_2}^{\cal N}$, pick any $j_2(i_1,i_2)$ such that $(i_1,j_2) \in R_{i_2 - 1}$.  For $N_1/2  -  \|{\cal I}_{i_2}^{\cal N}\|$ elements $i_1$ of  ${\cal I}_{i_2}$, choose $j_2(i_1,i_2)$ such that $(i_1,j_2) \in  R_{i_2 - 1}$ and $\psi((i_1,j_2)) \le 0$  (i.e. $(i_1,j_2) \in  R_{i_2 - 1} \cap {\cal N}$).   For the remaining $N_1/2  -  \|{\cal I}_{i_2}^{\cal P}\|$ elements $i_1$ of ${\cal I}_{i_2}$, choose $j_2(i_1,i_2)$ such that $(i_1,j_2) \in R_{i_2 - 1}$ and $\psi(i_1,j_2) \ge 0$ (i.e. $(i_1,j_2) \in  R_{i_2 - 1} \cap  {\cal P}$).   Set $R_{i_2} = R_{i_2-1} \setminus \{(i_1,j_2(i_1,i_2))| i_1 \in \{1,\dots,N_1\}\}$ to update the pool of remaining vertices.  Increase $i_2$ by 1 and continue the induction until $j_2(i_1,i_2)$ is defined for all $i_2 \le  N_2$.

With $j_2(i_1,i_2)$ defined, and therefore $P^{(2)}$ as well, we have $\sum_{i_1 = 1}^N sign\,\, \psi(P^{(2)}((i_1,i_2),T^{(2)})) = 0$ for any fixed $i_2 \in \{1,\dots N_2 \}$.  At each fixed $i_2$, we can use the method of Ex. 1 to define $P^{(1)}_{i_2}(P^{(2)}((i_1,i_2),T^{(2)}),t)$ for $0 \le t \le T^{(1)} = N_1$.  Our overall map is then obtained by stringing $P^{(1)}$ and $P^{(2)}$ together, so that $P((i_1,i_2),t) = P^{(2)}((i_1,i_2),t)$ for $0 \le t \le T^{(2)}$  and $P((i_1,i_2),t) = P^{(1)}_{i_2}(P^{(2)}((i_1,i_2),T^{(2)}),t-T^{(2)})$ for $T^{(2)} \le t \le T^{(1)}+T^{(2)}$.  This map has $T = T^{(1)} + T^{(2)} = N_1+N_2$.  Because $P^{(2)}$ only uses edges that change $i_2$ and $P^{(1)}$ only uses edges that change $i_1$, we compute that $B = max\,\,\{N_1,N_2\}$.  Lem. \ref{lemma:path} implies that $E_{\phi} \ge 2/(2(N_1+N_2)+1)(2 max\,\, \{N_1,N_2\}+1)$.  This should be compared to the exact result  $2(1-\cos \pi/ max\,\, \{N_1,N_2\}) \approx \pi^2/( max\,\, \{N_1,N_2\})^2$.

We illustrate this procedure in the case $N_1 = N_2= 4$ shown in Fig. \ref{fig:2dchain}.  A sample function $\phi({\bf i})$ is shown in Fig. \ref{fig:2dphi} and listed in Table \ref{table:2d}.    The median value of $\phi({\bf i})$ is $\Phi=0.25$, which is used to define $\psi({\bf i}) = \phi({\bf i}) - \Phi$ displayed in Fig. \ref{fig:2dpsi}. One finds that ${\cal N} = \{(1,4),(2,1),(2,2),(2,3),(2,4),(3,1),(3,4),(4,1)\}$ and ${\cal P} = \{(1,1),(1,2),(1,3),(3,2),(3,3),(4,2),(4,3),(4,4)\}$.  Note that each has $N_1 N_2/2 = 8$ elements as expected since  $\Phi$ was chosen to be the median of $\phi({\bf i})$.

To define $j_2({\bf i})$, we start with $i_2 = 1$ and let $R_0$ include all $16$ vertices in the graph.  We find that ${\cal I}_{1}^{\cal P}$ is empty since there is no choice of $i_1$ that guarantees $\psi((i_1,k_2)) \ge 0$ for all $k_2$.  On the other hand, ${\cal I}_{1}^{\cal N} = \{2\}$ since $ \psi(2,k_2) \le 0$ for all $k_2$.  It follows that ${\cal I}_{1} = \{1,3,4\}$.  We can indiscriminately choose $j_2(2,1)=1$ since, for any choice we make, $ \psi(2,j_2(2,1)) \le 0$.  We choose $j_2(1,1) = 4$ and confirm that $\psi(1,j_2(1,1)) \le 0$.  Then we choose  $j_2(3,1) = 2$ so  that $\psi(3,j_2(3,1)) \ge 0$ and $j_2(4,1) =2$ so that  $\psi(4,j_2(4,1)) \ge 0$.  Based on our choice of $j_2(i_1,1)$, we now have $\sum_{i_1=1}^4 \text{ sign } \psi(i_1,j_2(i_1,1)) = 0$ as desired.  We used vertices $\{(i_1,j_2(i_1,i_2)) | i_1 \in \{1,\dots,N_1\}\} = \{(1,4),(2,1),(3,2),(4,2)\}$, so there are 12 remaining vertices in $R_1=\{(1,1),(1,2),(1,3),(2,2),(2,3),(2,4),(3,1),(3,3),(3,4),(4,1),(4,3),(4,4) \}$.

We move on to $i_2 = 2$.  Now, ${\cal I}_{2}^{\cal P} = \{1\}$ because $\psi(1,1)$, $\psi(1,2)$, and $\psi(1,3)$ are all non-negative while $\psi(1,4)$ need not concern us since we already used $(1,4)$ and thus $(1,4) \notin R_1$.  We find that ${\cal I}_{2}^{\cal N} = \{2\}$ because $\psi(2,2)$, $\psi(2,3)$, and $\psi(2,4)$ are all non-positive.  The remaining values of $i_1$ are in ${\cal I}_{2} = \{3,4\}$.  We can indiscriminately choose $j_2(1,2) = 1$ since $(1,1) \in R_1$ and $j_2(2,2) = 2$ since $(2,2) \in R_1$.  We choose $j_2(3,2) = 3$ and $j_2(4,2) = 1$.  Then, $\psi(1,j_2(1,2)) \ge 0$, $\psi(2,j_2(2,2)) \le 0$, $\psi(3,j_2(3,2)) \ge 0$, and $\psi(4,j_2(4,2)) \le 0$, which implies  $\sum_{i_1=1}^4 \text{ sign } \psi(i_1,j_2(i_1,2)) = 0$ as desired.  We have 8 remaining vertices in $R_2 = \{(1,2),(1,3),(2,3),(2,4),(3,1),(3,4),(4,3),(4,4) \}$.

We go to $i_2 = 3$.  Now, ${\cal I}_{3}^{\cal P} = \{1,4\}$, ${\cal I}_{3}^{\cal N} = \{2,3\}$, and ${\cal I}_{3}$ is empty.  We can indiscriminately choose $j_2(1,3) = 2$, $j_2(4,3) = 3$, $j_2(2,3) = 3$, and $j_2(3,3) = 1$.  Then, $\psi(1,j_2(1,3)) \ge 0$, $\psi(2,j_2(2,3)) \le 0$, $\psi(3,j_2(3,3)) \le 0$, and $\psi(4,j_2(4,3)) \ge 0$, which implies  $\sum_{i_1=1}^4 \text{ sign } \psi(i_1,j_2(i_1,3)) = 0$ as desired.  We have 4 remaining vertices in $R_3 = \{(1,3),(2,4),(3,4),(4,4) \}$.

The final value is $i_2=4$.  We see that ${\cal I}_{4}^{\cal P} = \{1,4\}$, ${\cal I}_{4}^{\cal N} = \{2,3\}$, and ${\cal I}_{4}$ is empty.  Given the remaining vertices in $R_3$, we have no choice remaining: $j_2(1,4) = 3$, $j_2(4,4) = 4$, $j_2(2,4) = 4$,  and $j_2(3,4) = 4$.  Then, $\psi(1,j_2(1,4)) \ge 0$, $\psi(2,j_2(2,4)) \le 0$, $\psi(3,j_2(3,4)) \le 0$, and $\psi(4,j_2(4,4)) \ge 0$, which implies  $\sum_{i_1=1}^4 \text{ sign } \psi(i_1,j_2(i_1,4)) = 0$.  Table \ref{table:2dj2} shows that the map from $(i_1,i_2)$ to $(i_1,j_2(i_1,i_2))$ is one-to-one.  Fig. \ref{fig:2dpsiP2} shows that shifts along the columns by $P^{(2)}$ lead to $\sum_{i_1=1}^4 \text{ sign } \psi(P^{(2)}((i_1,j_2(i_1,i_2)),T^{(2)})) = 0$ along each row.

Finally, we use the map $P^{(1)}$ to shift along each row.  Defining the total map $P({\bf i},T)$, that strings together $P^{(1)}$ and $P^{(2)}$  with  $T=T^{(1)}+T^{(2)}$, we thus ensure $\text{ sign } \psi(P({\bf i},T))$ alternates along each row.  Lem. \ref{lemma:path} then implies that $E_{\phi} \ge 2\mathcal{E}/(17)(9)$ while the exact answer is $2\mathcal{E}(1-\cos \pi/ 4) \approx \mathcal{E}\pi^2/16$.

\begin{figure}
\includegraphics[width=2.5in]{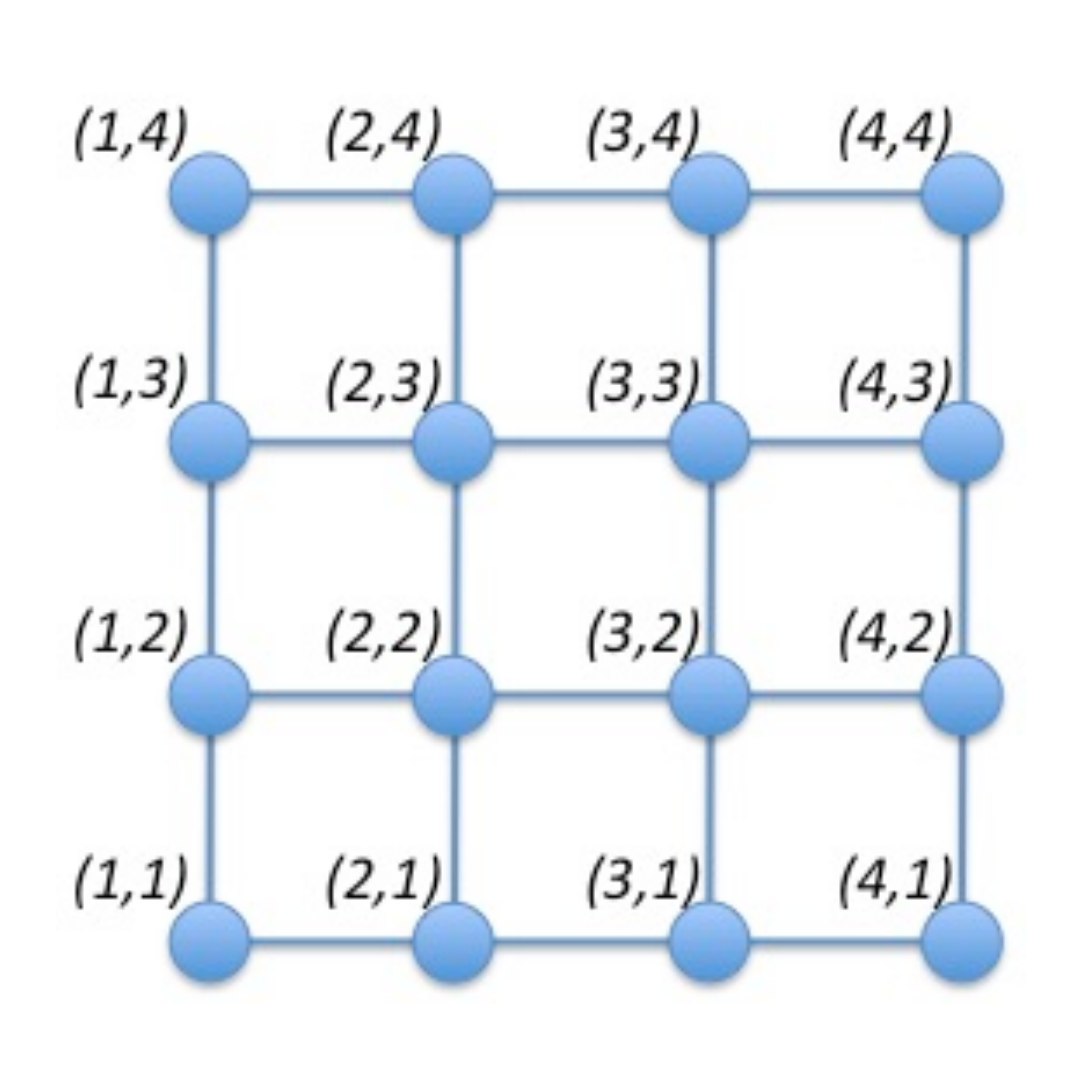}
\caption{Graph for a sample 2-dimensional chain with $N_1 = N_2 =4$.}
\label{fig:2dchain}
\end{figure}

\begin{figure}
\includegraphics[width=2.5in]{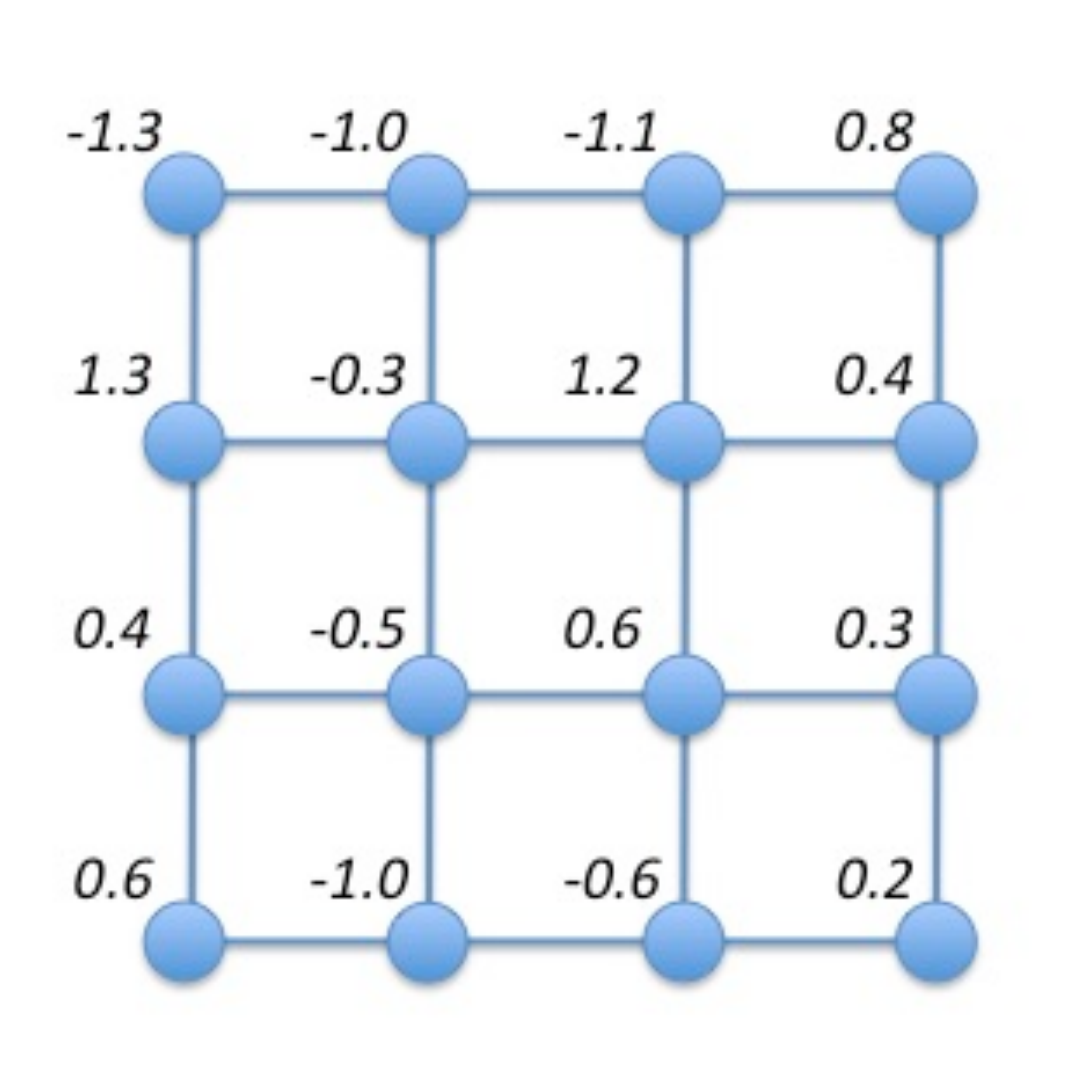}
\caption{Sample $\phi({\bf i})$ for 2-dimensional chain.}
\label{fig:2dphi}
\end{figure}

\begin{center}
\begin{table}
\begin{tabular}{|c|c|c|c|}
\hline
${\bf i}$ & $\phi({\bf i})$ & $\psi({\bf i})$ & $\text{ sign }\psi({\bf i})$\\
\hline
$(1,1)$ & 0.6 & 0.35& 1\\
\hline
$(2,1)$ & -1.0 & -1.25& -1\\
\hline
$(3,1)$ & -0.6 & -0.85& -1\\
\hline
$(4,1)$ & 0.2 & -0.05& -1\\
\hline
$(1,2)$ & 0.4 & 0.15& 1\\
\hline
$(2,2)$ & -0.5 & -0.75& -1\\
\hline
$(3,2)$ & 0.6 & 0.35& 1\\
\hline
$(4,2)$ & 0.3 & 0.05& 1\\
\hline
$(1,3)$ & 1.3 & 1.05& 1\\
\hline
$(2,3)$ & -0.3 & -0.55& -1\\
\hline
$(3,3)$ & 1.2 & 0.95& 1\\
\hline
$(4,3)$ & 0.4 & 0.15& 1\\
\hline
$(1,4)$ & -1.3 & -1.55& -1\\
\hline
$(2,4)$ & -1.0& -1.25& -1\\
\hline
$(3,4)$ & -1.1 & -1.35& -1\\
\hline
$(4,4)$ & 0.8 & 0.55& 1\\
\hline
\end{tabular}
\caption{Values of $\phi({\bf i})$ and $\psi({\bf i})$ for example 2-dimensional graph}
\label{table:2d}
\end{table}
\end{center}

\begin{figure}
\includegraphics[width=2.5in]{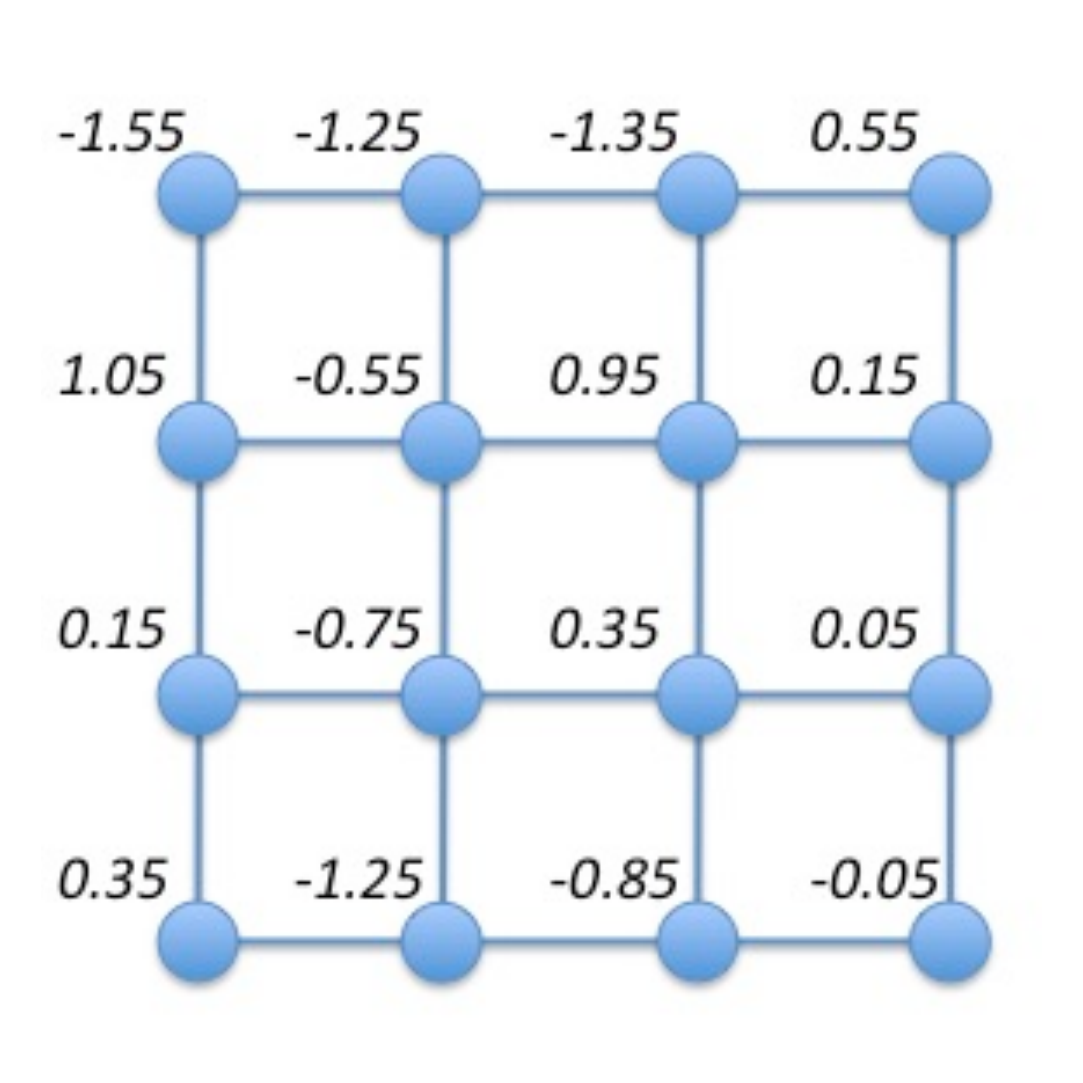}
\caption{Sample $\psi({\bf i})$ for 2-dimensional chain.}
\label{fig:2dpsi}
\end{figure}

\begin{center}
\begin{table}
\begin{tabular}{|c|c|c|}
\hline
$(i_1,i_2)$ & $(i_1,j_2(i_1,i_2))$ & $\text{ sign } \psi(i_1,j_2(i_1,i_2))$\\
\hline
$(1,1)$&$(1,4)$&-1\\
\hline
$(2,1)$&$(2,1)$&-1\\
\hline
$(3,1)$&$(3,2)$&1\\
\hline
$(4,1)$&$(4,2)$&1\\
\hline
$(1,2)$&$(1,1)$&1\\
\hline
$(2,2)$&$(2,2)$&-1\\
\hline
$(3,2)$&$(3,3)$&1\\
\hline
$(4,2)$&$(4,1)$&-1\\
\hline
$(1,3)$&$(1,2)$&1\\
\hline
$(2,3)$&$(2,3)$&-1\\
\hline
$(3,3)$&$(3,1)$&-1\\
\hline
$(4,3)$&$(4,3)$&1\\
\hline
$(1,4)$&$(1,3)$&1\\
\hline
$(2,4)$&$(2,4)$&-1\\
\hline
$(3,4)$&$(3,4)$&-1\\
\hline
$(4,4)$&$(4,4)$&1\\
\hline
\end{tabular}
\caption{Values of $j_2$ for example 2-dimensional graph}
\label{table:2dj2}
\end{table}
\end{center}

\begin{figure}
\includegraphics[width=2.5in]{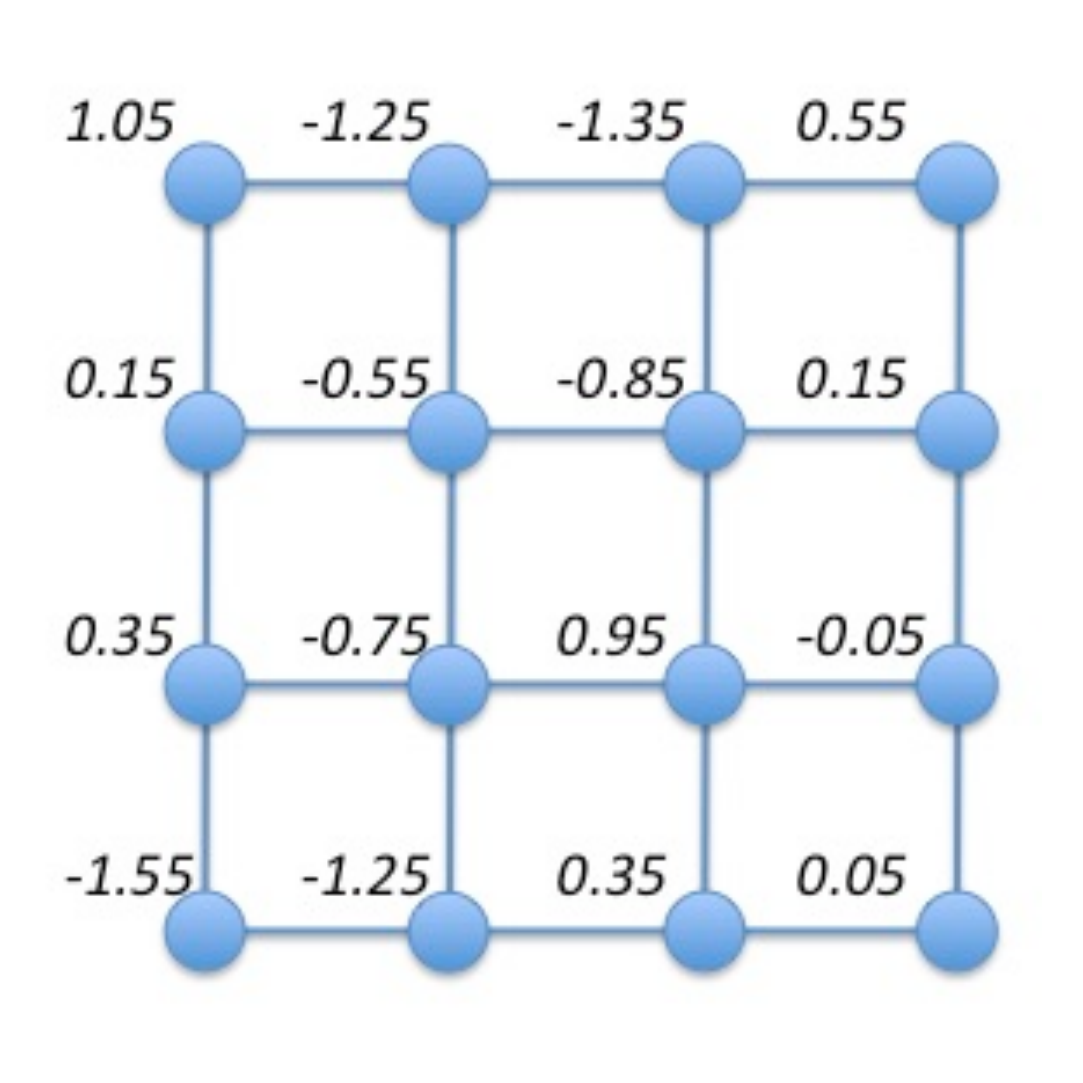}
\caption{Sample $\psi(P^{(2)}({\bf i},T^{(2)}))$ for 2-dimensional chain.}
\label{fig:2dpsiP2}
\end{figure}

\begin{figure}
\includegraphics[width=2.5in]{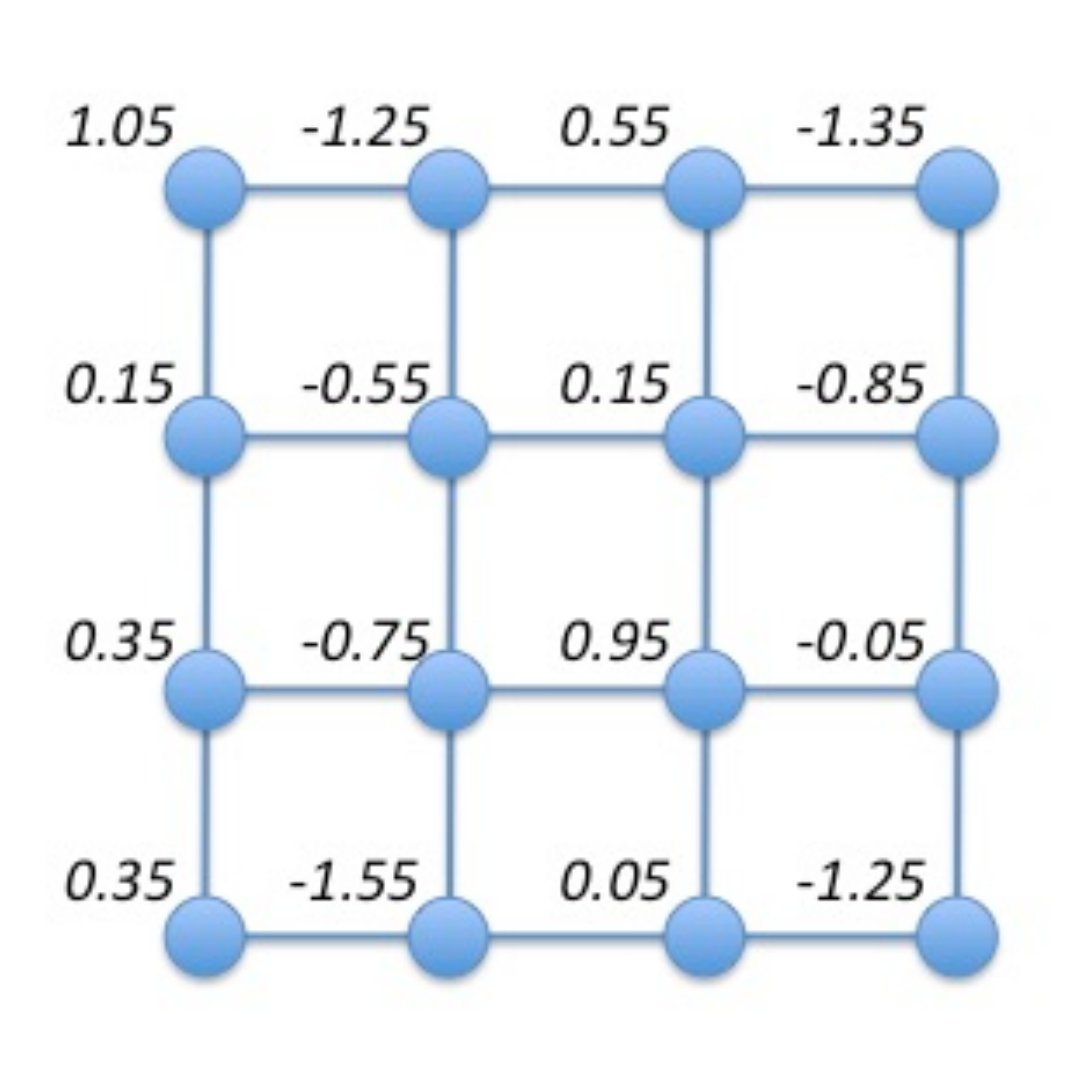}
\caption{Sample $\psi(P({\bf i},T))$ for 2-dimensional chain.}
\label{fig:2dpsiP}
\end{figure}

\noindent {\bf Example 3: M-dimensional chain}

Consider a graph of the form $ {\cal V}_1 \otimes \dots \otimes  {\cal V}_M$ defined analogously to the graphs of Ex. 1 and Ex. 2.  To bound $E_{\phi}$ for this graph,  we can reduce one dimension at a time, just as we did in Ex. 2.

Define $P^{(M)}$ with $T^{(M)} = N_M$ in terms of $j_M: \{1,\dots,N_1 \} \otimes \dots \otimes \{1,\dots,N_M \} \rightarrow \{1,\dots,N_M \}$.  Explicitly, set $P^{(M)}({\bf i},t) = (i_1,\dots,i_M + t\text{ sign }(j_M(i_1,\dots,i_M) - i_M))$ for $0 \le t \le |j_M(i_1,\dots,i_M) - i_M|$ and $P^{(M)}({\bf i},t) = (i_1,j_M(i_1,\dots,i_M))$ for $ |j_M(i_1,\dots,i_M) - i_M| \le t \le T^{(M)}$.

To define $j_M(i_1,\dots,i_M)$, let $R_0 = {\cal V}_1 \otimes \dots \otimes {\cal V}_M$ and start with $i_M = 1$.  Proceed inductively, letting ${\cal I}_{i_M}^{\cal P} = \{(i_1,\dots,i_{M-1})|  (i_1,\dots,i_{M-1},k_M) \in R_{i_M - 1} \Rightarrow (i_1,\dots,i_{M-1},k_M)  \in {\cal P} \}$, ${\cal I}_{i_M}^{\cal N} = \{(i_1,\dots,i_{M-1})| (i_1,\dots,i_{M-1},k_M) \in R_{i_M - 1} \Rightarrow (i_1,\dots,i_{M-1},k_M)  \in {\cal N} \}$, and ${\cal I}_{i_M} = \{1,\dots,N_1\} \otimes \dots \otimes  \{1,\dots,N_{M-1}\} - {\cal I}_{i_M}^{\cal P} \cup {\cal I}_{i_M}^{\cal N}$.    For each $(i_1,\dots,i_{M-1}) \in {\cal I}_{i_M}^{\cal P} \cup  {\cal I}_{i_M}^{\cal N}$, choose any $j_M(i_1,\dots,i_M)$ such that $(i_1,\dots,i_{M-1},j_M) \in R_{i_M - 1}$.  For $N_1 \dots N_{M-1}/2  -  \|{\cal I}_{i_M}^{\cal N}\|$ elements $(i_1,\dots,i_{M-1}) \in {\cal I}_{i_M}$, choose a specific $j_M(i_1,\dots,i_M)$ such that $(i_1,\dots,i_{M-1},j_M) \in {\cal N} \cap R_{i_M - 1}$.  For the remaining $N_1 \dots N_{M-1} /2  -  \|{\cal I}_{i_M}^{\cal P}\|$ elements $(i_1,\dots,i_{M-1}) \in {\cal I}_{i_M}$, choose a specific $j_M(i_1,\dots,i_M)$ such that $(i_1,\dots,i_{M-1},j_M) \in {\cal P} \cap R_{i_M - 1}$.   Set $R_{i_M} = R_{i_M-1} - \{(i_1,\dots,i_{M-1},j_M(i_1,\dots,i_M))|(i_1,\dots,i_{M-1}) \in \{1,\dots,N_1\} \otimes \dots \otimes  \{1,\dots,N_{M-1}\} \}$.  Increase $i_M$ by 1 and continue the induction until $j_M(i_1,\dots,i_M)$ is defined for all $i_M \le  N_M$.

After completing the definition of $P^{(M)}$, we have $\sum_{i_1 = 1}^{N_1} \dots \sum_{i_{M-1} = 1}^{N_{M-1}} sign\,\, \psi(P^{(M)}({\bf i},T^{(M)})) = 0$ for each fixed $i_M \in \{1,\dots, N_M \}$.  We then define $P^{(M-1)}_{i_M}(P^{(M)}({\bf i},T^{(M)}),t)$ for each $i_M \in  \{1,\dots,N_M \}$.  The definition progresses like that of $P^{(M)}$, with $M$ replaced everywhere by $M-1$.  Then we have $\sum_{i_1 = 1}^{N_1} \dots \sum_{i_{M-2} = 1}^{N_{M-2}} sign\,\, \psi(P^{(M-1)}_{i_M}(P^{(M)}({\bf i},T^{(M)}),T^{(M-1)})=0$ for each fixed $(i_{M-1},i_M)$.  We turn to $P^{(M-2)}_{i_{M-1},i_{M}}(P^{(M-1)}_{i_M}(P^{(M)}({\bf i},T^{(M)}),T^{(M-1)}),t)$, then to $P^{(M-3)}_{i_{M-2},i_{M-1},i_{M}}(P^{(M-2)}_{i_{M-1},i_{M}}(P^{(M-1)}_{i_M}(P^{(M)}({\bf i},T^{(M)}),T^{(M-1)}),T^{(M-2)}),t)$, and eventually to $P^{(2)}_{i_3,\dots,i_M}(P^{(3)}_{i_4,\dots,i_M}((\dots),T^{(3)}),t)$ satisfying $\sum_{i_1 = 1}^{N_1} sign\,\, \psi(P^{(2)}_{i_3,\dots,i_M}(P^{(3)}_{i_4,\dots,i_M}((\dots),T^{(3)}),T^{(2)})) = 0$ for each $(i_2,i_3,\dots,i_M)$.  The next step is to define $P^{(1)}_{i_2,\dots,i_M}(P^{(2)}_{i_3,\dots,i_M}((\dots),T^{(2)}),t)$ for each fixed $(i_2,\dots,i_M)$ by following the method of Ex. 1.  Finally, we string the series of maps $P^{(M)}({\bf i},t)$ to $P^{(1)}_{i_2,\dots,i_M}(P^{(2)}_{i_3,\dots,i_M}(\dots,T^{(2)}),t)$ together into
\[
P({\bf i},t) = \left\{
\begin{array}{lcc} 
P^{(M)}({\bf i},t) & & 0\le t \le T^{(M)}\\
P^{(M-1)}_{i_M}(P^{(M)}({\bf i},T^{(M)}),t-T^{(M)}) & & T^{(M)}\le t \le T^{(M)}+T^{(M-1)} \\
 \vdots &\hspace{0.1in} & \\
P^{(1)}_{i_2,\dots,i_M}(P^{(2)}_{i_3,\dots,i_M}((\dots),T^{(2)}),t-(T^{(M)}+\dots+T^{(2)})) & & T^{(M)}+\dots+T^{(2)}\le t \le T^{(M)}+\dots+T^{(1)} 
\end{array} \right.
\]
and set $T = T^{(M)}+\dots+T^{(1)} = N_1 +\dots + N_M$.  Noting that $B = max\,\, \{N_1,\dots,N_M\}$, we find $E_{\phi} \ge 2\mathcal{E}/(2N_1 +\dots + 2N_M+1)(2 max\,\, \{N_1,\dots,N_M\}+1) \ge 2\mathcal{E}/(2 M max\,\, \{N_1,\dots,N_M\}+1)(2 max\,\, \{N_1,\dots,N_M\}+1)$.  Note that there is an extra factor of $M$ in the denominator compared to the exact result $2\mathcal{E}(1-\cos \pi/ max\,\, \{N_1,\dots,N_M\}) \approx \pi^2\mathcal{E}/( max\,\, \{N_1,\dots,N_M\})^2$.

\noindent {\bf Example 4: M-dimensional chain with 2-qubit gates}
\begin{figure}
\includegraphics[width=3.5in]{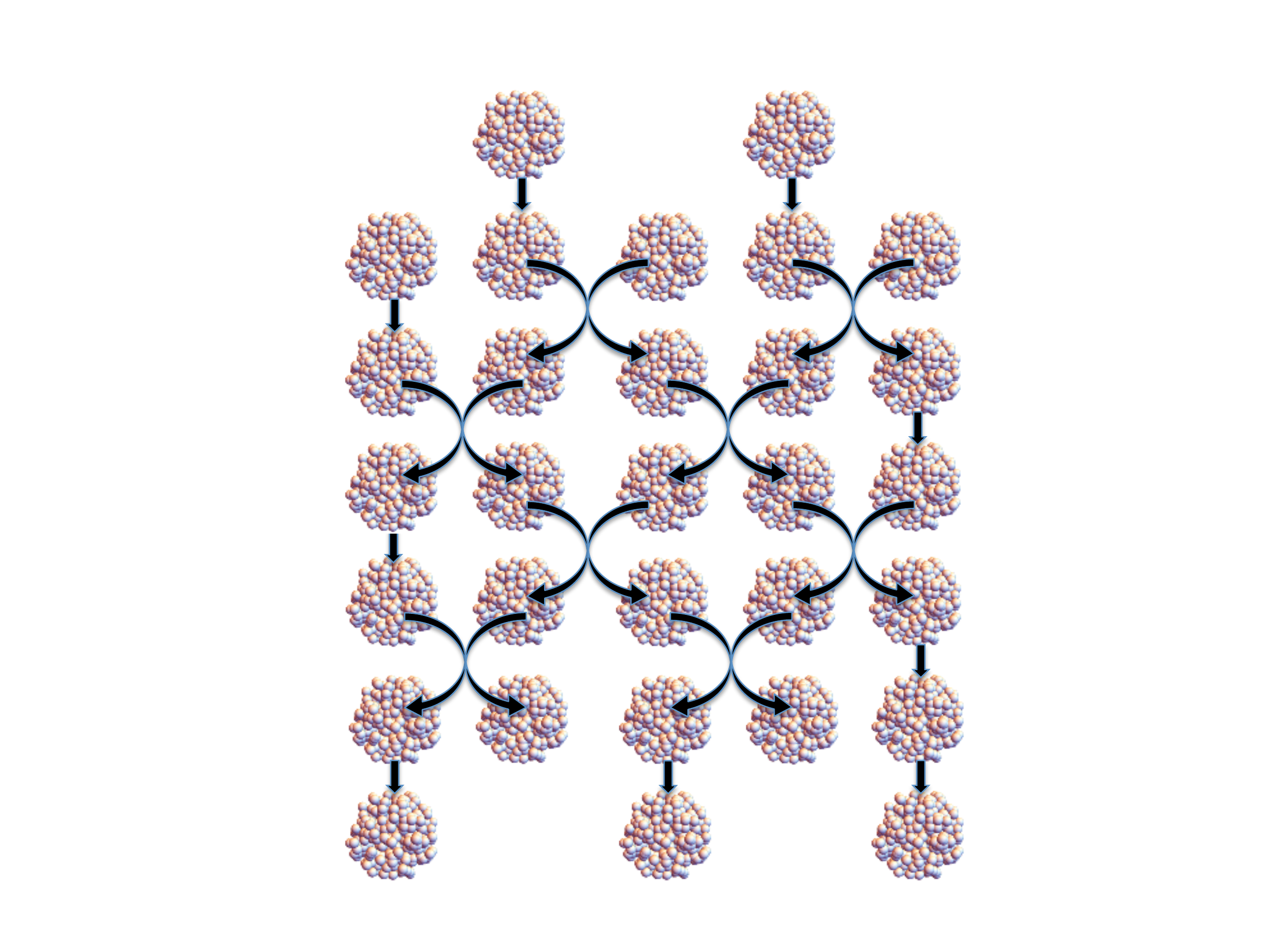}
\caption{Schematic physical apparatus for realizing the configuration of Ex. 4.}
\label{fig:nanocrystalarray1d}
\end{figure}

Consider the system depicted in Fig. \ref{fig:nanocrystalarray1d}, for which $M=5$ and $N=6$.  We study this system, generalized to arbitrary odd $M$ and even $N$.  The vertices ${\cal V}$ of its graph can be labelled $(k_1,\dots,k_M)$.   For odd $\alpha$, $k_\alpha$ is constrained to lie in the interval $\{1,\dots,N\}$, and, for even $\alpha$, $k_\alpha$ is constrained to lie in the interval $\{0,\dots,N-1\}$.  At every time step between $2$ and $N-1$, 2-qubit gates act, so there are many additional constraints on the indices due to the requirement that the penalty Hamiltonian (\ref{eq:penalty}) must vanish.

The graph has the following edges.  If $k_M$ is even, there is an edge between $(k_1,\dots,k_M)$ and $(k_1,\dots,k_M+1)$.  If $k_1$ and $k_2$ are even, there is an edge between $(k_1,k_2,\dots,k_M)$ and $(k_1+1,k_2+1,\dots,k_M)$.  If $k_3$ and $k_4$ are even, there is an edge between $(k_1,k_2,k_3,k_4,\dots,k_M)$ and $(k_1,k_2,k_3+1,k_4+1,\dots,k_M)$.  This pattern holds out to qubit $M-1$: if $k_{M-2}$ and $k_{M-1}$ are even, there is an edge between $(k_1,\dots,k_{M-2},k_{M-1},k_M)$ and $(k_1,\dots,k_{M-2}+1,k_{M-1}+1,k_M)$.    On the other hand, if $k_1$ is odd, there is an edge between $(k_1,\dots,k_M)$ and $(k_1+1,\dots,k_M)$.  If $k_2$ and $k_3$ are odd, there is an edge between $(k_1,k_2,k_3,\dots,k_M)$ and $(k_1,k_2+1,k_3+1,\dots,k_M)$.  If $k_4$ and $k_5$ are odd, there is an edge between $(k_1,k_2,k_3,k_4,k_5,\dots,k_M)$ and $(k_1,k_2,k_3,k_4+1,k_5+1,\dots,k_M)$.  This pattern holds out to qubit $M$: if $k_{M-1}$ and $k_M$ are odd, there is an edge between $(k_1,\dots,k_{M-1},k_M)$ and $(k_1,\dots,k_{M-1}+1,k_M+1)$.   Finally, there are some extra 1-qubit gates near $k_{\alpha}=1$ and $k_{\alpha} = N$.  For odd $\alpha$, there is an edge between $(k_1,\dots,k_\alpha=N-1,\dots,k_M)$ and $(k_1,\dots,k_\alpha=N,\dots,k_M)$.  For even $\alpha$, there is an edge between $(k_1,\dots,k_\alpha=0,\dots,k_M)$ and $(k_1,\dots,k_\alpha=1,\dots,k_M)$. 

To bound the gap of the system, it is useful to relabel the vertices $(k_1,\dots,k_M)$ by new indices $(i_1,\dots,i_M)$.  First, stipulate that $i_M \equiv k_M$.  Now, by definition, the penalty (\ref{eq:penalty}) vanishes on all the $(k_1,\dots,k_M)$.  Given a fixed value of $k_{\alpha}$,  one notices that $k_{\alpha-1}$ constrained to just 2 possible time step values.    We relabel these 2 values $i_{\alpha-1} = 0$ or $1$.  In particular, for $\alpha$ even, $k_{\alpha} = 0$ implies $k_{\alpha-1}=1$ or $2$ while  $k_{\alpha} > 0$ implies $k_{\alpha-1} = 2 \lfloor (k_\alpha+1)/2 \rfloor -1 $ or  $2 \lfloor (k_\alpha +1)/2 \rfloor$.  We define
\[
i_{\alpha-1} = \left\{ \begin{array}{cc} k_{\alpha-1}-2 &\,\,\,\, k_\alpha = 0 \\ k_{\alpha-1} - 2 \lfloor (k_\alpha+1)/2 \rfloor+1&\,\,\,\, k_\alpha >0   \end{array} \right. .
\]
For $\alpha$ odd,  $k_\alpha < N$ implies $k_{\alpha-1} = 2 \lfloor (k_\alpha)/2 \rfloor $ or $2\lfloor (k_\alpha)/2 \rfloor +1$ while $k_\alpha = N$ implies $k_{\alpha-1} =N-2$ or $N-1$.  We define
\[
i_{\alpha-1} = \left\{ \begin{array}{cc} k_{\alpha-1} -  2 \lfloor (k_\alpha)/2 \rfloor &\,\,\,\, k_\alpha < N \\ k_{\alpha-1}-(N-2) &\,\,\,\, k_\alpha = N \end{array} \right. .
\]
Since $i_{\alpha-1} = \{0,1\}$ and $i_M \in \{1, \dots, N \}$, there are $\|{\cal V}\| =  2^{M-1} N$ vertices $(i_1,\dots,i_M) \in {\cal V}$.

The graph has the following edges in terms of the relabelled coordinates $i_\alpha$.  The 2-qubit gates provide edges between $(i_1,\dots,i_{\alpha-1}=1,i_\alpha=0,\dots,i_M)$ and $(i_1,\dots,i_{\alpha-1}-1=0,i_\alpha+1=1,\dots,i_M)$.  The 1-qubit gates provide edges between $(i_1,\dots,i_{M-1},i_M)$ and $(i_1,\dots,i_{M-1},i_M+1)$ when $i_M$ is even and between $(i_1=0,\dots,i_{M-1},i_M)$ and $(i_1+1=1,\dots,i_{M-1},i_M)$.  There are additional 1-qubit gate edges at the beginning and end of the computation.

We wish to bound the energy $E_\phi$ for some $\phi({\bf i})$ orthogonal to the ground state, using Lem. \ref{lemma:path}.  It is therefore necessary to establish a map $P^{(M)}$.  First,  we define $j_M(i_1,\dots,i_M)$ so that, as in Ex. 3, $\sum_{i_1 = 0}^{1} \dots \sum_{i_{M-1} = 0}^{1} sign\,\, \psi(i_1,\dots,i_{M-1},j_M(i_1,\dots,i_M)) = 0$ for each fixed $i_M \in \{1,\dots, N\}$.  In contrast to Ex. 3, $(i_1,\dots,i_{M-1},i_M + t\text{ sign }(j_M(i_1,\dots,i_{M-1}) - i_M))$ is an invalid choice here for $P^{(M)}$ because $i_M$ cannot in general advance independently of $i_{M-1}$: there are constraints on the coordinates.  Taking these constraints into account, we need to choreograph the path $Q^{(M)}((i_1,\dots,i_{M}),t)$ that takes $(i_1,\dots,i_{M-1},i_M)$ to $(i_1,\dots,i_{M-1},i_M+1)$.  We will then define $P^{(M)}$ in terms of $Q^{(M)}$.

Start by setting $Q^{(M)}((i_1,\dots,i_{M}),t=0) = (i_1,\dots,i_{M})$.  If $i_M \text{ mod } 2 \equiv 0$ or $i_M = N-1$, there is an edge between $(i_1,\dots,i_{M-1},i_M)$ and $(i_1,\dots,i_{M-1},i_M+1)$ via a 1-qubit gate.  We can then define $Q^{(M)}((i_1,\dots,i_{M}),t) = (i_1,\dots,i_{M}+1)$ for $1 \le t \le 2M$, and $Q^{(M)}$ is successfully specified.  If $i_M \text{ mod } 2 \equiv 1$, the situation is more complicated since qubit $M$ can only advance in tandem with qubit $M-1$.  We distinguish 2 cases.

(1) If $i_{M-1} = 0$, set $\alpha = M-1$ and $\tau =0$, and perform a ``downward sweep'' of $\alpha$ as follows.  If $(i_1(\tau),\dots,i_{\alpha}(\tau),\dots,i_M(\tau)) \equiv Q^{(M)}((i_1,\dots,i_{M}),\tau)$ has $i_{\alpha}(\tau) = 0$, determine if there is a direct edge from $(i_1(\tau),\dots,i_{\alpha}(\tau) = 0,\dots,i_M(\tau))$ to $(i_1(\tau),\dots,i_{\alpha}(\tau)+1 = 1,\dots,i_M(\tau))$ due to a 1-qubit gate.  If such a direct edge is available, set $Q^{(M)}((i_1,\dots,i_M),\tau+1) = (i_1(\tau),\dots,i_{\alpha}(\tau)+1 = 1,\dots,i_M(\tau))$, increment $\tau$, and halt the downward sweep of $\alpha$.   If $i_{\alpha}(\tau) = 0$ but no direct edge is available, examine $i_{\alpha -1}(\tau)$.  If $i_{\alpha-1}(\tau) = 1$, then there is an edge from $(i_1(\tau),\dots,i_{\alpha-1}(\tau)=1,i_{\alpha}(\tau) = 0,\dots,i_{M}(\tau))$ to $(i_1(\tau),\dots,i_{\alpha-1}(\tau)-1=0,i_{\alpha}(\tau)+1 = 1,\dots,i_{M}(\tau))$ due to a 2-qubit gate.  Set $Q^{(M)}((i_1,\dots,i_{M}),\tau+1) = (i_1(\tau),\dots,i_{\alpha-1}(\tau)-1=0,i_{\alpha}(\tau) +1 = 1,\dots,i_{M}(\tau))$.  Increment $\tau$, decrement $\alpha$, and continue the downward sweep.  If the downward sweep reaches $\alpha=1$, we will have $i_1(\tau) = 0$, and there will always be a direct edge from $(i_1(\tau)=0,i_2(\tau),\dots,i_M(\tau))$ to $(i_1(\tau)+1=1,i_2(\tau),\dots,i_M)$ due to a 1-qubit gate.  Thus, the downward sweep will inevitably halt.  When the downward sweep halts at some $\alpha = A$, one finds that inevitably $Q^{(M)}((i_1,\dots,i_{M}),\tau) = (i_1,\dots,i_{A-1},1,i_{A},i_{A+1},\dots,i_{M-2},i_{M})$, with $i_{A}$, $i_{A+1}$, $\dots$, $i_{M-2}$ each translated one slot over compared to the initial vertex $(i_1,\dots,i_{A-1},i_{A},i_{A+1},\dots,i_{M-2},i_{M-1},i_{M})$.

We then begin an ``upward sweep.''  If $i_A = 1$, we note that $(i_1,\dots,i_{A-1},1,i_{A} ,i_{A+1},\dots,i_{M-2},i_{M})=(i_1,\dots,i_{A-1},i_A,1,i_{A+1} ,\dots,i_{M-2},i_{M})$.   If $i_A = 0$, then there is an edge from $(i_1,\dots,i_{A-1},1,i_{A} ,i_{A+1},\dots,i_{M-2},i_{M})$ to $(i_1,\dots,i_{A-1},i_{A},1,i_{A+1},\dots,i_{M-2},i_{M})$ due to a 2-qubit gate.  Thus, we increment $\tau$ and set $Q^{(M)}((i_1,\dots,i_{M}),\tau) = (i_1,\dots,i_{A-1},i_A,1,i_{A+1} ,\dots,i_{M-2},i_{M})$.  For either value of $i_A$, we are left with $Q^{(M)}((i_1,\dots,i_{M}),\tau) = (i_1,\dots,i_{A-1},i_{A},1,i_{A+1},\dots,i_{M-2},i_{M})$.  Repeat this process for $\alpha = A+1$, moving the extra 1 to the right to obtain  $Q^{(M)}((i_1,\dots,i_{M}),\tau) = (i_1,\dots,i_{A-1},i_{A},i_{A+1},1,\dots,i_{M-2},i_{M})$.  Continue the upward sweep of $\alpha$ until $Q^{(M)}((i_1,\dots,i_{M}),\tau) = (i_1,\dots,i_{M-2},1,i_{M})$.  Increment $\tau$ and make the final advance on the edge from $(i_1,\dots,i_{M-2},1,i_{M})$ to $(i_1,\dots,i_{M-2},0,i_{M}+1)$: $Q^{(M)}((i_1,\dots,i_{M}),\tau) = (i_1,i_2,i_3,\dots,i_{M-2},i_{M-1} = 0,i_{M}+1)$.  At this point, $\tau$ is at most $2M$.  If $\tau < 2M$, fix $Q^{(M)}((i_1,\dots,i_{M}),\tau) = (i_1,i_2,i_3,\dots,i_{M-2},i_{M-1},i_{M}+1)$ for all remaining $\tau$ up to $2M$.  This completes the definition of $Q^{(M)}$ when $i_{M-1} = 0$.

(2) If $i_{M-1} = 1$, set $\tau = 1$ and define $Q^{(M)}((i_1,\dots,1,i_{M}),\tau) = (i_1,\dots,i_{M-2},0,i_{M}+1)$.  Then, follow the downward sweep and upward sweep procedures described in case (1).  Just omit the final advance since $i_M$ has already been increased to $i_M+1$.

This definition of $Q^{(M)}((i_1,\dots,i_{M}),t)$ for $0 \le t \le 2M$ ensures that $Q^{(M)}((i_1,\dots,i_{M}),0) = (i_1,\dots,i_{M})$ and $Q^{(M)}((i_1,\dots,i_{M}),2M) = (i_1,\dots,i_{M}+1)$ and that $Q^{(M)}$ advances one edge at a time.  A graphical illustration of $Q^{(M)}((0,0,1,0,3),t)$ appears in Fig. \ref{fig:Qdef}, for the case $M=5$ and $N=6$ appearing in Fig. \ref{fig:nanocrystalarray1d}.

\begin{figure}[htp]
\subfloat[Configuration of particles with $i_1 = 0$, $i_2 = 0$, $i_3=1$, $i_4=0$, $i_5 = 3$, corresponding to point $Q^{(5)}((0,0,1,0,3),0) = (0,0,1,0,3)$.]{%
  \includegraphics[width=3.5in]{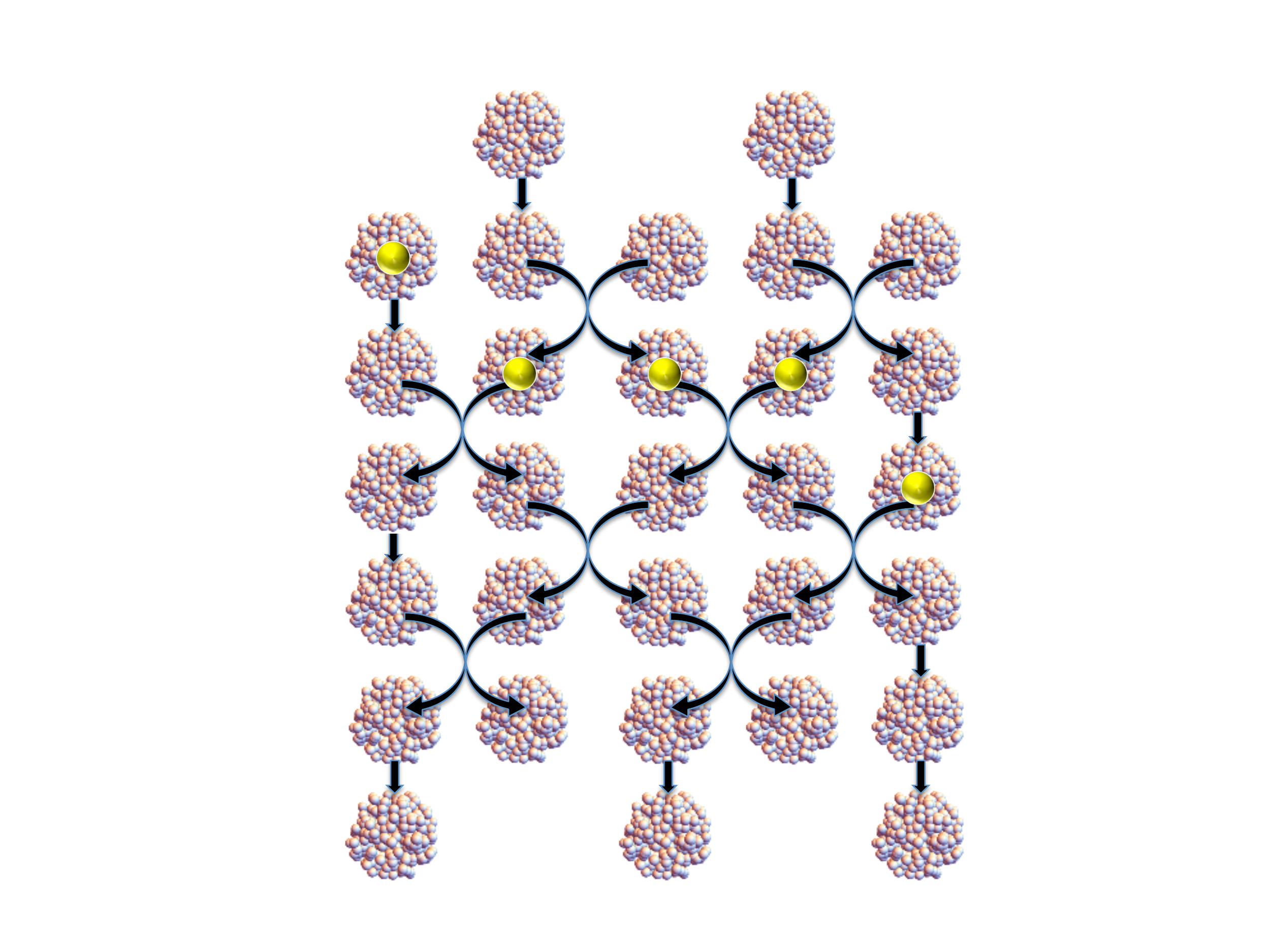}%
}
\subfloat[Configuration of particles with $i_1 = 0$, $i_2 = 0$, $i_3=0$, $i_4=1$, $i_5 = 3$, corresponding to point $Q^{(5)}((0,0,1,0,3),1) = (0,0,0,1,3)$.]{%
  \includegraphics[width=3.5in]{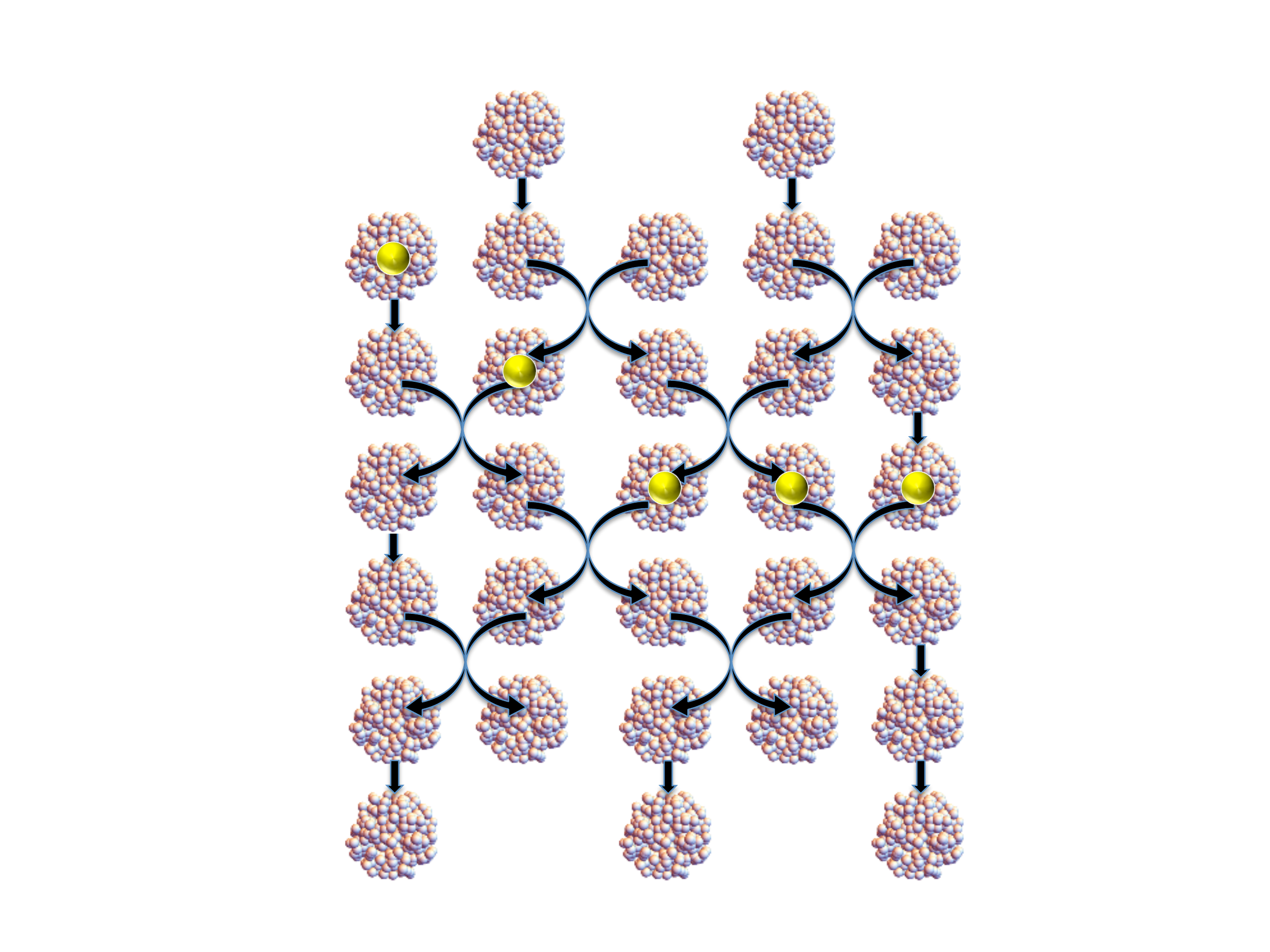}%
}

\subfloat[Configuration of particles after downward sweep with $i_1 = 1$, $i_2 = 0$, $i_3=0$, $i_4=1$, $i_5 = 3$, corresponding to point $Q^{(5)}((0,0,1,0,3),2) = (1,0,0,1,3)$.]{%
  \includegraphics[width=3.5in]{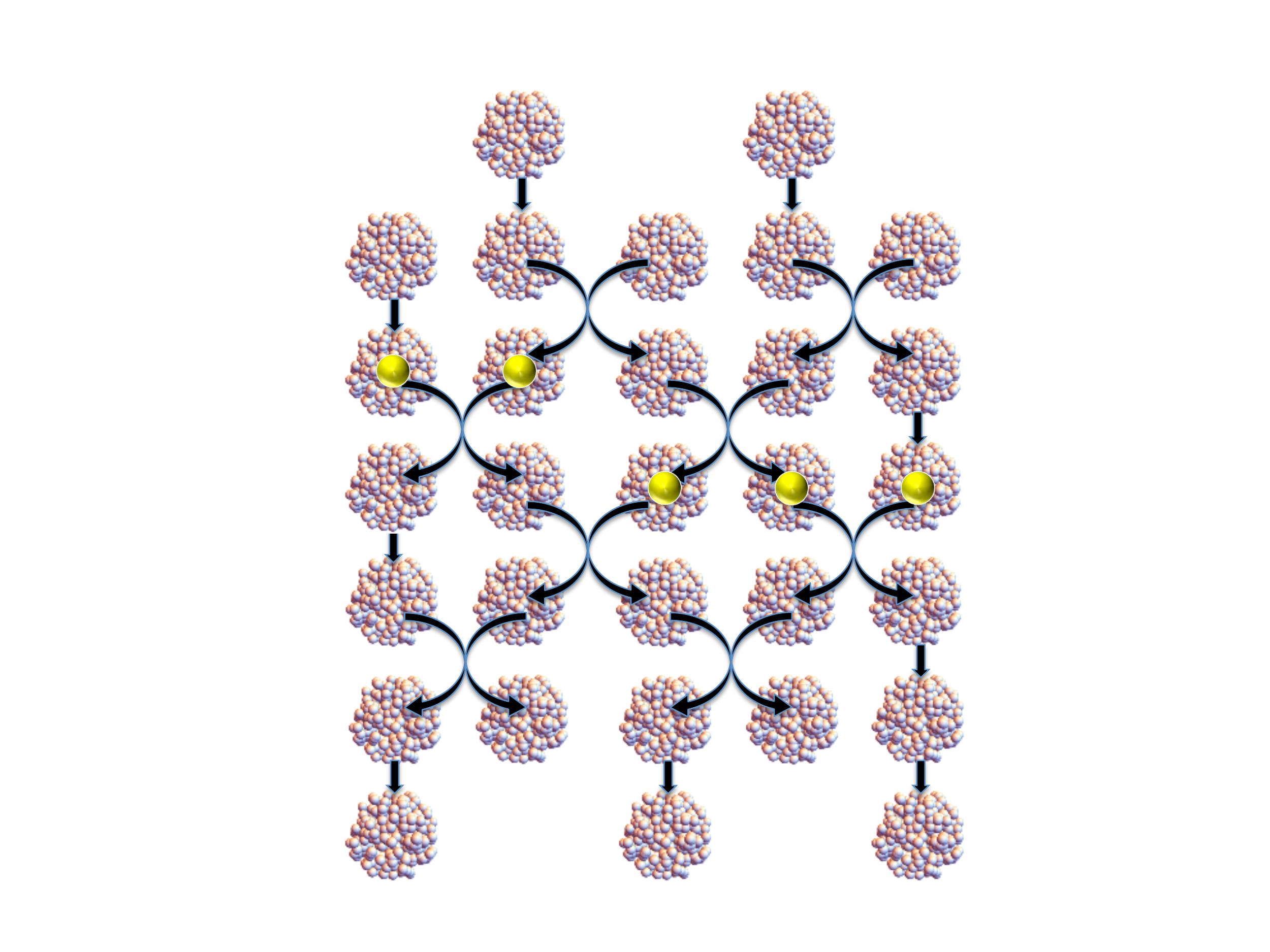}%
}
\subfloat[Configuration of particles with $i_1 = 0$, $i_2 = 1$, $i_3=0$, $i_4=1$, $i_5 = 3$, corresponding to point $Q^{(5)}((0,0,1,0,3),3) = (0,1,0,1,3)$.]{%
  \includegraphics[width=3.5in]{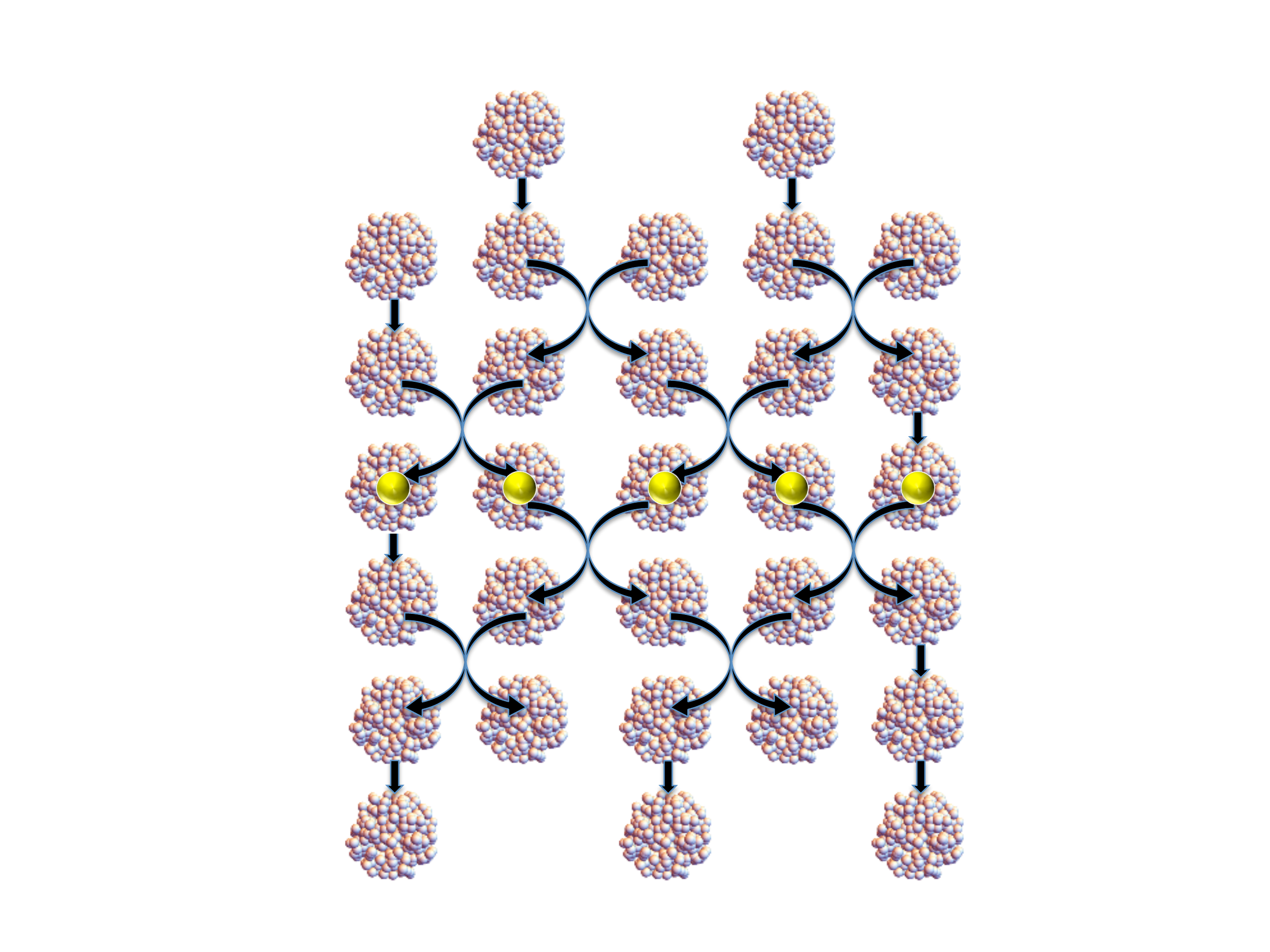}%
}

\subfloat[Configuration of particles with $i_1 = 0$, $i_2 = 0$, $i_3=1$, $i_4=1$, $i_5 = 3$, corresponding to point $Q^{(5)}((0,0,1,0,3),4) = (0,0,1,1,3)$.]{%
  \includegraphics[width=3.5in]{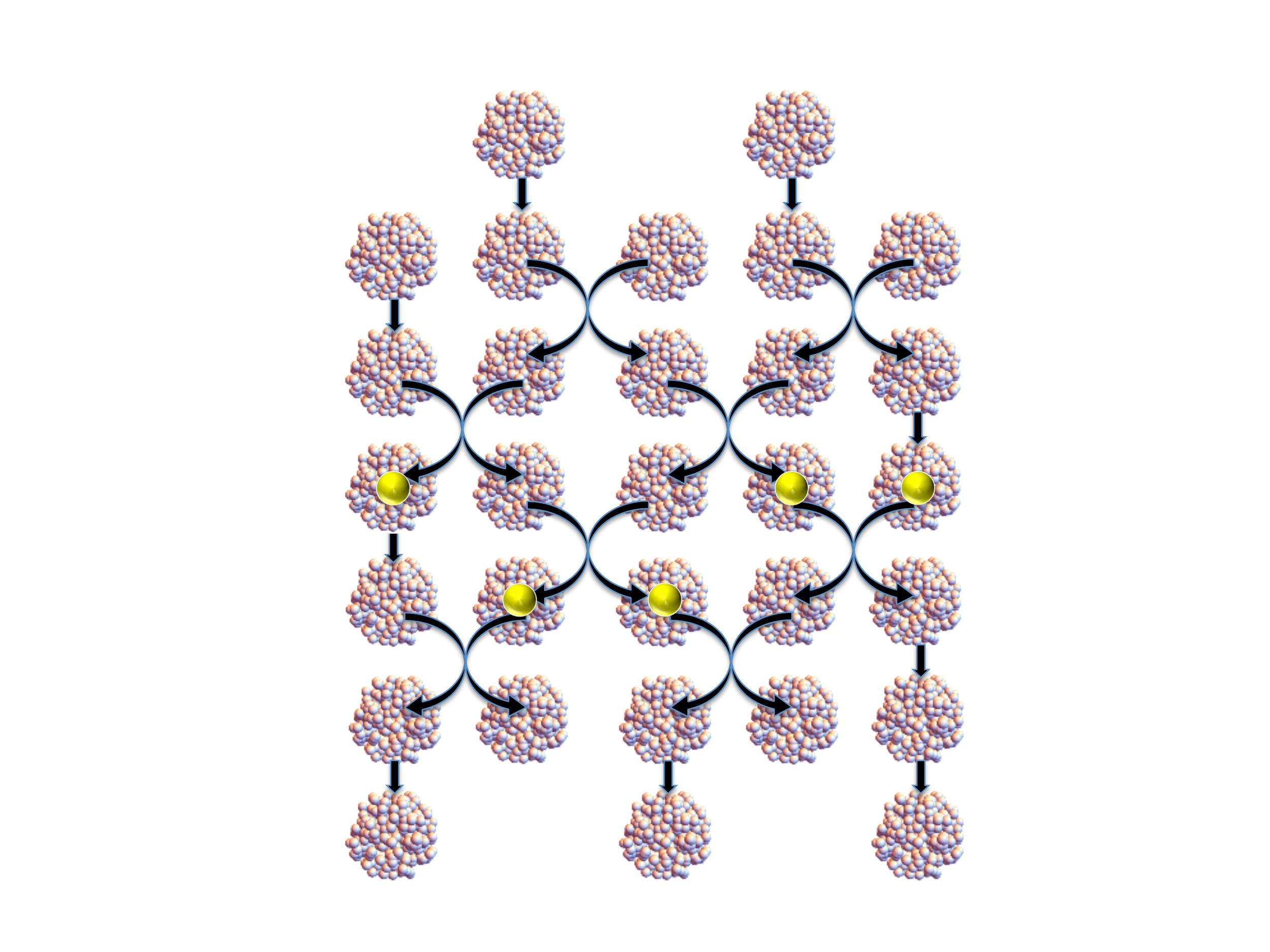}%
}
\subfloat[Configuration of particles after upward sweep with $i_1 = 0$, $i_2 = 0$, $i_3=1$, $i_4=0$, $i_5 = 4$, corresponding to point $Q^{(5)}((0,0,1,0,3),5) = (0,0,1,0,4)$ in the graph.  Value of $i_5$ has advanced by 1 while all other coordinates $i_\alpha$ have returned to their initial values.]{%
  \includegraphics[width=3.5in]{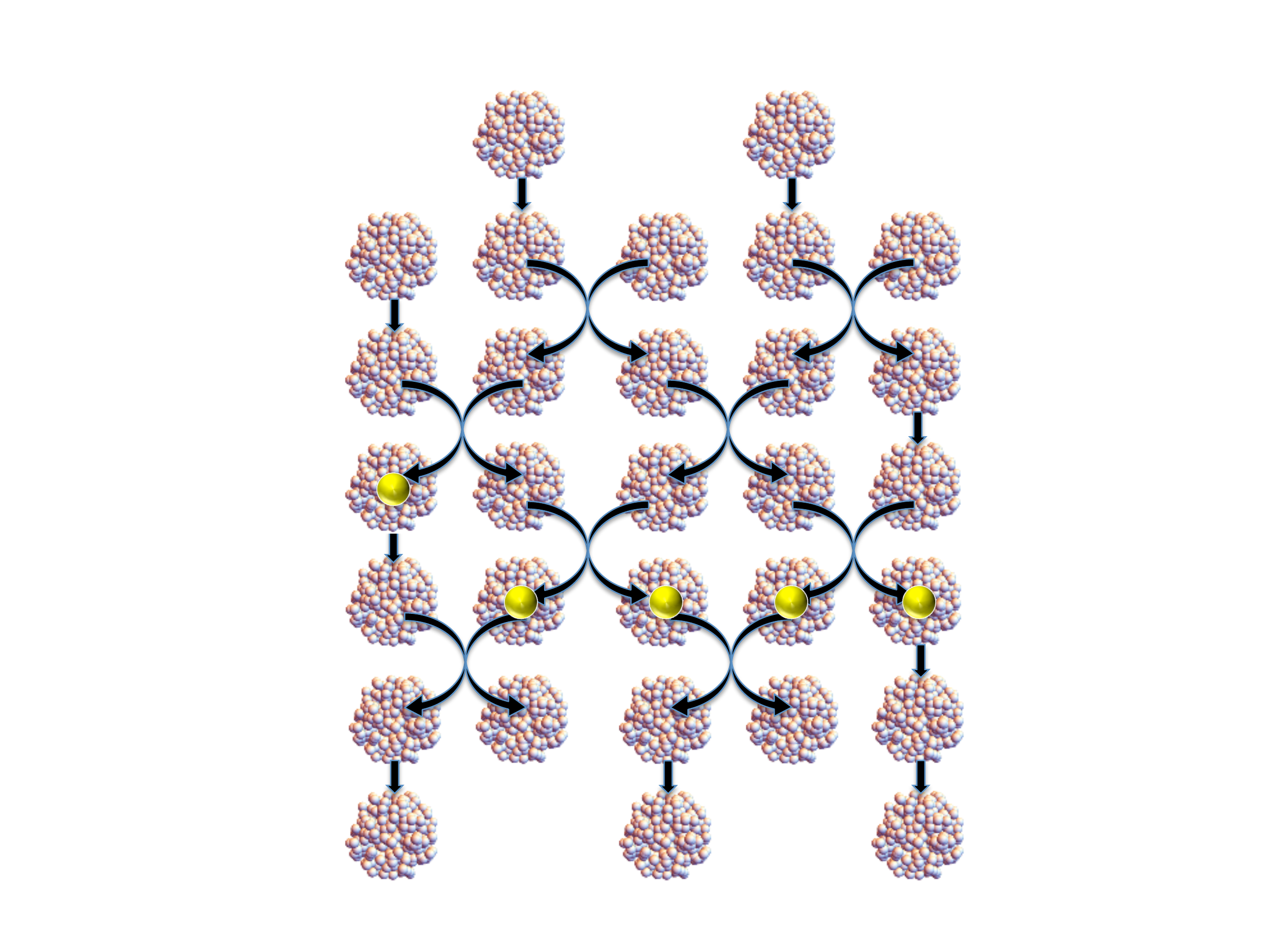}%
}
\caption{$Q^{(5)}((0,0,1,0,3),t)$.  Note $Q^{(5)}((0,0,1,0,3),t) = Q^{(5)}((0,0,1,0,3),5)$ for $t=6,\dots,10$.}
\label{fig:Qdef}
\end{figure}

We will want to increment $i_M$ by more than 1 and to decrement $i_M$ as well.  So, we extend this definition of $Q^{(M)}((i_1,\dots,i_{M}),t)$ to allow $t > 2M$ and $t < 0$ by stitching multiple $Q^{(M)}$ maps together:

\[
Q^{(M)}((i_1,\dots,i_{M}),t)= \left\{
\begin{array}{lcc} 
Q^{(M)}((i_1,\dots,i_{M}+r),t-2Mr) & & 2Mr \le t \le 2M(r+1)\\
Q^{(M)}((i_1,\dots,i_{M}-r),2Mr-\vert t \vert) & & -2Mr\le t \le -2M(r-1)\\
\end{array} \right. .
\]

Using this definition of $Q^{(M)}$, we fix $T^{(M)} = 2 M N$ and write $P^{(M)}((i_1,\dots,i_{M}),t) = Q^{(M)}((i_1,\dots,i_{M}),t\text{ sign }(j_M(i_1,\dots,i_{M-1}) - i_M))$ for $0 \le t \le 2M \vert j_M(i_1,\dots,i_M) - i_M\vert $ and $P^{(M)}((i_1,\dots,i_{M}),t) = (i_1,\dots,j_M(i_1,\dots,i_M))$ for $2M \vert j_M(i_1,\dots,i_M) - i_M\vert \le t \le T^{(M)}$.   As desired, this definition of $P^{(M)}$ implies $\sum_{i_1 = 0}^{1} \dots \sum_{i_{M-1} = 0}^{1} sign\,\, \psi(P^{(M)}((i_1,\dots,i_M),T^{(M)})) = 0$ for each fixed $i_M \in \{1,\dots, N \}$.

With $P^{(M)}$ specified, we proceed by analogy to Ex. 3 and define $P^{(M-1)}_{i_M}(P^{(M)}((i_1,\dots,i_M),T^{(M)}),t)$ for each $i_M \in  \{1,\dots,N \}$.  The definition progresses like that of $P^{(M)}$, with $M$ replaced everywhere by $M-1$.  However, since $i_{M-1} \in \{0,1\}$, we have $T^{(M-1)} = 2(M-1)$ and need not extend the definition of $Q^{(M-1)}((i_1,\dots,i_{M}),t)$ beyond $-2(M-1) \le t \le 2(M-1)$.  After $P^{(M-1)}$ comes $P^{(M-2)}$ and so on.  Following Ex. 3, we eventually define $P((i_1,\dots,i_M),t)$ and $T = T^{(M)}+\dots+T^{(1)} = 2M N + 2(M-1) +\dots+2(1) = M(2 N+ M-1)$.

To apply Lem. \ref{lemma:path}, we need the value of $B$.  Fixing our attention on a given edge, we first ask how many times $Q^{(M)}((i_1,\dots,i_{M}),t)$ uses this edge for $0\le t\le 2M$.  If $i_M=N-1$ or $i_M \text{ mod } 2 \equiv 0$, $Q^{(M)}$ uses the edge between $(i_1,\dots,i_{M})$ and $(i_1,\dots,i_{M}+1)$.  This edge is not used for any other starting vertex.  If $i_M \text{ mod } 2 \equiv 1$, things are more involved.  If $i_{M-1} = 1$, $Q^{(M)}((i_1,\dots,i_{M}),t)$ starts with the edge between $(i_1,\dots,i_{M-1},i_{M})$ and $(i_1,\dots,0,i_{M}+i_{M-1})$.  This edge is used by at most 2 starting vertices: the first edge of the ``downward sweep'' used by the starting vertex $i_1,\dots,i_{M-1}=1,i_{M}$ and the final edge of the ``upward sweep'' used by the starting vertex $i_1,\dots,i_{M-1}=0,i_{M}$.  In general, the downward sweep  involves edges between $(i_1,\dots,i_{\alpha-1}=1,i_{\alpha} = 0,\dots,i_{M})$ and $(i_1,\dots,i_{\alpha-1}-1=0,i_{\alpha} +1 = 1,\dots,i_{M})$. During the downward sweep, each edge can be determined using only knowledge of the previous edge used.  The downward sweep completes in $(i_1,\dots,i_A,1,i_{A+1},\dots,i_{M}+i_{M-1})$, from which the starting vertex $(i_1,\dots,i_{M})$ can be uniquely extracted since $i_M \text{ mod } 2 \equiv 0$.  Since each edge of the downward sweep cascades to a unique end vertex from which the starting vertex $(i_1,\dots,i_{M})$ can be uniquely identified, two distinct starting vertices cannot use the same edge during their downward sweeps.  After the downward sweep, $Q^{(M)}$ uses an upward sweep starting from $(i_1,\dots,i_A,1,i_{A+1},\dots,i_{M}+i_{M-1})$.  Each edge used in the upward sweep can be determined using only knowledge of the previous edge used.  The upward sweep completes in $(i_1,\dots,i_{M}+1)$, which is clearly a unique function of the original starting vertex $(i_1,\dots,i_{M})$.  Thus, 2 distinct starting vertices cannot use the same edge during their upward sweeps.  We conclude that a given edge can be used by at most 2 starting vertices -- it can be used in the downward sweep of one starting vertex and the upward sweep of a second starting vertex.   This is true for all $t$ up to $2M$.  It is possible that the same edge will be used 2 more times whenever $t$ increases by another $2M$.  Since $T^{(M)} = 2 MN$,  we see that $P^{(M)}$ can use a given edge at most $2N$ times.  For $\alpha < M$,  there are $2$ values of $i_\alpha$ rather than $N$ values, so $P^{(\alpha)}$ can use an edge at most $2(2)$ times.  It follows that $B \le 2 (N)+2 (2+\dots+2) = 2(N+2M-2)$.  Thus, Lem. \ref{lemma:path} implies that $E_{\phi} \ge 2\mathcal{E}/(2M(2N+M-1)+1)(4(N+2M-2)+1)$.

\end{document}